\def\obstructionFour{\Pi_4^\textnormal{o}}
\def\obstructionFiveA{\Pi_{5,1}^\textnormal{o}}  
\def\obstructionFiveB{\Pi_{5,2}^\textnormal{o}}  
\def\obstructionconvexFiveA{\Pi_{5,1}^\textnormal{oc}}  
\def\obstructionconvexFiveB{\Pi_{5,2}^\textnormal{oc}}  
\def\obstructionhconvexSix{\Pi_{6}^\textnormal{oh}}
\newtheorem{theorem}{Theorem}[section]
\newtheorem{lemma}[theorem]{Lemma}
\newtheorem{corollary}[theorem]{Corollary}
\newtheorem{conj}{Conjecture}
\newtheorem{observation}[theorem]{Observation}
\newtheorem{proposition}[theorem]{Proposition}
\date{}
\title{Using SAT to study plane Hamiltonian substructures in~simple drawings\footnote{%
A special thanks goes to Joachim Orthaber for pointing us to the nice and simple proof of Theorem~\ref{thm:HP_prescribed_edges}. \\
H.~Bergold was supported by DFG-Research Training Group 'Facets of Complexity' (DFG-GRK 2434).
S.~Felsner was supported by DFG Grant FE~340/13-1.
M.~M.~Reddy was supported by Swiss National Science Foundation within the collaborative DACH project \emph{Arrangements and Drawings} as SNSF Project 200021E-171681.
M.~Scheucher was supported by DFG Grant SCHE~2214/1-1.
}}
\def\inst#1{$^{#1}$}
\begin{document}

\author{
Helena Bergold\inst{1}
\and
Stefan Felsner\inst{2}
\and
Meghana M. Reddy\inst{3}
\and
Manfred Scheucher\inst{2}
}

\maketitle

\vspace{-1cm}

\begin{center}
{\footnotesize
\inst{1} 
Department of Computer Science, \\
Freie Universit\"at Berlin, Germany,\\
\texttt{\{helena.bergold\}@fu-berlin.de}
\\\ \\
\inst{2} 
Institut f\"ur Mathematik, \\
Technische Universit\"at Berlin, Germany,\\
\texttt{\{felsner,scheucher\}@math.tu-berlin.de}
\\\ \\
\inst{3} 
Department of Computer Science, \\ 
ETH Z\"urich, Switzerland, \\
\texttt{\{meghana.mreddy\}@inf.ethz.ch}
\\\ \\
}
\end{center}

%%%%%%%%%%%%%%%%%%%%%%%%%%%%%%%%%%%%%%%%%%%%%%%%%%
%%%%%%%%%%%%%%%%%%%%%%%%%%%%%%%%%%%%%%%%%%%%%%%%%%

\begin{abstract}
	In 1988 Rafla conjectured that every simple drawing of a complete
	graph~$K_n$ contains a plane, i.e., non-crossing, Hamiltonian cycle.
	The conjecture is far from being resolved. The lower bounds for plane
	paths and plane matchings have recently been raised to 
	$(\log n)^{1-o(1)}$ and $\Omega(\sqrt{n})$, respectively.  
 
	We develop a SAT framework which allows the study of simple drawings of~$K_n$. Based on
	the computational data we conjecture that every simple
	drawing of~$K_n$ contains a plane Hamiltonian subgraph with $2n-3$
	edges.  We prove this strengthening of Rafla's conjecture 
	for \emph{convex drawings}, a rich
	subclass of simple drawings. Our computer experiments also led to
	other new challenging conjectures regarding plane substructures in
	simple drawings of complete~graphs.

%	\keywords{simple topological drawing \and
%		convexity hierarchy \and 
%		Hamiltonian cycle \and 
%		plane substructure \and 
%		polynomial time algorithm \and
%		computer-assisted proof \and
%		Boolean satisfiability \and
%		SAT encoding 
%	}
\end{abstract}

%%%%%%%%%%%%%%%%%%%%%%%%%%%%%%%%%%%%%%%%%%%%%%%%
%%%%%%%%%%%%%%%%%%%%%%%%%%%%%%%%%%%%%%%%%%%%%%%%
\section{Introduction}
\label{sec:intro}
%%%%%%%%%%%%%%%%%%%%%%%%%%%%%%%%%%%%%%%%%%%%%%%%
%%%%%%%%%%%%%%%%%%%%%%%%%%%%%%%%%%%%%%%%%%%%%%%%

In a \emph{simple drawing}\footnote{In the literature, simple drawings are also called 
	\emph{good drawings}, \emph{simple topological drawings}, and \emph{simple topological graphs}.} of a graph in the plane (resp.\ on the
sphere), the vertices are mapped to distinct points, and edges are drawn as
simple curves which connect the corresponding endpoints but do not contain
other vertices.  Moreover, every pair of edges intersects in at most one
point, which is either a common vertex or a proper crossing (no touching), and
no three edges cross at a common point.  Figure~\ref{fig:simple_obstructions}
shows the obstructions to simple drawings.
In this article, we focus on simple drawings of the complete graph~$K_n$.

\begin{figure}[htb]
	\centering
	\includegraphics{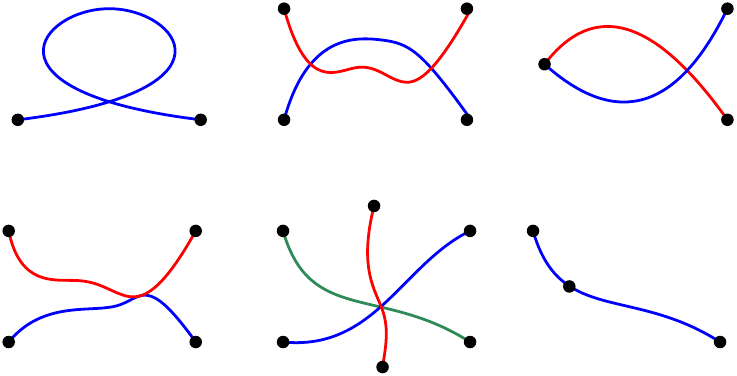}
	\caption{The obstructions to simple drawings.}
	\label{fig:simple_obstructions}
\end{figure}

Problems related to simple drawings have attracted a lot of attention. One of
the reasons is that simple drawings are closely related to interesting classes of
drawings such as crossing-minimal drawings or straight-line drawings, a.k.a.~\emph{geometric drawings}. 
Recently Arroyo et al.~\cite{ArroyoMRS2017_convex} introduced convex drawings
and some subclasses such as $h$-convex drawings. These classes are nested between
simple drawings and geometric drawings.

To study 
substructures in general simple drawings and
subclasses such as convex or $h$-convex drawings, we develop a Python program
which generates a Boolean formula that is satisfiable if and only if there
exists a simple drawing of $K_n$ with
prescribed properties for a specified value of~$n$. Moreover, the solutions of these instances are in
one-to-one correspondence with non-isomorphic simple drawings with the
prescribed properties.  To be more specific, for a given problem of the form
``does there exist a simple drawing of $K_n$ with the property \ldots", the program generates a Boolean formula in
conjunctive normal form (CNF).  We then use state-of-the-art SAT solvers such
as PicoSAT~\cite{Biere08} and CaDiCaL~\cite{Biere2019} to decide
whether a solution exists and to enumerate solutions.  While solutions can
be verified in a straight-forward fashion, 
the correctness in the
\emph{unsatisfiable} case 
has no obvious certificate.  Nevertheless, we verify our unsatisfiability results using the independent proof checking tool DRAT-trim \cite{WetzlerHeuleHunt14}.

In Section~\ref{sec:prelim}, we discuss combinatorial properties of simple
drawings and introduce the convexity hierarchy from Arroyo et al.~\cite{ArroyoMRS2017_convex}. In Section~\ref{sec:all_encoding}, we describe
how we encode simple topological drawings, convexity and other notions in a SAT formula.
Based on computational experiments of small configurations, we came up with
strengthened and modified versions of existing conjectures about simple
drawings. A collection of such conjectures is presented in
Section~\ref{sec:applications}. 
In Section~\ref{sec:applications_planar} we
discuss questions regarding plane substructures in drawings of~$K_n$. 
In a simple drawing~$D$ of~$K_n$,
the subdrawing $D[H]$ induced by a subgraph~$H$ is a \emph{plane substructure} if it does not contain crossing edges.
In particular we focus on variants of Rafla's
conjecture~\cite{Rafla1988}, which asserts that every simple drawing of $K_n$ contains a plane Hamiltonian cycle.  Even though this conjecture
and related substructures attracted the attention of many researchers, so far
only the existence of plane paths of length  $(\log n)^{1-o(1)}$ and plane matchings of
size $\Omega(\sqrt{n})$ is known \cite{SukZeng2022,AichholzerGTVW22twisted}. 	Based on the data for small~$n$, we conjecture that indeed every simple
drawing contains a plane Hamiltonian subdrawing on $2n - 3$ edges.

\begin{restatable}
	{conj}{conjraflatwonminusthree}
	\label{conjecture:rafla_2n_plus_3}
	Every simple drawing of $K_n$ with $n \ge 3$
	contains a plane Hamiltonian subdrawing on $2n-3$ edges.
\end{restatable}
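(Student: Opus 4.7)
The plan is to attack the conjecture by induction on $n$, with the base case $n=3$ being immediate since $K_3$ has $2n-3 = 3$ edges and every pair of edges in $K_3$ shares a vertex, so any simple drawing of $K_3$ is automatically plane. For the inductive step, given a simple drawing $D$ of $K_n$, the goal is to identify a vertex $v$ that can be ``peeled off'' in a plane-preserving way: remove $v$ to obtain a simple drawing $D - v$ of $K_{n-1}$, apply the inductive hypothesis to obtain a plane subdrawing $H'$ with $2(n-1)-3 = 2n-5$ edges, and then select two edges $vu$ and $vw$ incident to $v$ that cross neither each other nor any edge of $H'$. Adding $vu$ and $vw$ to $H'$ would then give a plane subdrawing of $D$ with $2n-3$ edges that spans all $n$ vertices.

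The key difficulty is to find an appropriate notion of an ``outer'' vertex $v$. In the convex setting treated elsewhere in the paper, the convex position axioms supply such a vertex naturally via the outer face. In a general simple drawing no canonical outer face exists, so one must build a substitute. A plausible refinement is to strengthen the inductive hypothesis so that the guaranteed $(2n-5)$-edge plane subdrawing of $D - v$ has a prescribed outer boundary edge or vertex, chosen in advance to accommodate the two additional edges incident to $v$. Finding the right strengthening is the main technical challenge, since it must be strong enough to carry the induction yet still hold for all simple drawings.

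As a complementary strategy, I would try an exchange argument. Among all plane subdrawings of $D$, take one $H$ of maximum size; assume for contradiction that $|E(H)| < 2n-3$. Euler's formula then forces $H$ to have a face $F$ whose boundary walk has length at least four, so there exist two non-adjacent vertices on the boundary of $F$. The corresponding edge $e$ in $D$ might or might not cross edges of $H$; if it does not, then $H + e$ contradicts the maximality of $H$. Otherwise $e$ traverses several faces of $H$, and one would attempt a local swap replacing some crossed edges of $H$ by $e$ together with other available edges of $D$ to obtain a strictly larger plane subdrawing.

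The main obstacle, and likely the reason the conjecture remains open despite the strong computational evidence from the SAT framework, is precisely this exchange step: in a general simple drawing an edge $e$ between two boundary vertices of a face $F$ of $H$ can leave and re-enter $F$ in essentially unrestricted ways, so local swaps need not yield a net gain. Controlling this crossing behaviour, perhaps by exploiting the rotation system of $D$ or by coupling the exchange with a more global reconfiguration of $H$ along the crossing sequence of $e$, is where I expect the bulk of the work to lie.
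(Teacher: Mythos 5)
There is a fundamental mismatch here: the statement you are trying to prove is an open conjecture. The paper does not prove it for general simple drawings; it only verifies it computationally (via the SAT framework, for $n\le 8$) and proves a strengthening for the subclass of \emph{convex} drawings (Theorem~\ref{theorem:convex_HC}), using a completely different, constructive argument built on a star vertex, the structure of ``bad edges'', and a layering of the drawing. Your proposal, as you yourself concede in the last paragraph, is a collection of strategies with the decisive steps missing, so it cannot be accepted as a proof of the conjecture.

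Beyond that global point, both of your strategies have concrete gaps. In the inductive approach, adding two edges $vu,vw$ to a plane Hamiltonian subdrawing $H'$ of $D-v$ gives a spanning plane subdrawing with $2n-3$ edges, but not necessarily a \emph{Hamiltonian} one: you would need $u$ and $w$ to be consecutive on the Hamiltonian cycle inside $H'$ (so that the cycle can be rerouted through $v$), and you give no argument that such a pair with uncrossed edges to $v$ exists. Moreover, the natural candidate tool from the convex case --- a vertex whose spanning star is crossed by few non-star edges --- provably fails in general: as the paper notes for the twisted drawing $T_5$ (choosing $v_\star=5$), only two non-star edges avoid the star, so no convex-style ``outer vertex'' peeling can work verbatim. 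In the exchange approach, maximizing the number of edges of a plane subdrawing only re-derives what is already known: Fulek and Ruiz--Vargas proved that every simple drawing of $K_n$ contains a plane subdrawing with $2n-3$ edges. The genuinely open content of the conjecture is that such a subdrawing can be taken to contain a plane Hamiltonian cycle --- which subsumes Rafla's conjecture, open since 1988 --- and your exchange argument never engages with that requirement. Also note the cautionary example in Figure~\ref{fig:HTP_counter}(left): a plane Hamiltonian cycle in a drawing of $K_8$ that extends by only $4$ further edges, showing that greedy or local augmentation of a Hamiltonian structure can get stuck well below $2n-3$.
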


Our main result is a proof of this strengthening of Rafla's conjecture for the class of convex drawings.

\goodbreak

\begin{restatable}
	{theorem}{theoremconvexHC}
	\label{theorem:convex_HC}
	Let $D$ be a convex drawing of $K_n$ with $n\ge 3$ and let $v_\star$ be a
	vertex of~$D$.  Then~$D$ contains a plane Hamiltonian cycle~$C$ which does not
	cross any edge incident to~$v_\star$.  This Hamiltonian cycle can be computed
	in $O(n^2)$ time.  Moreover, if $D$ is $h$-convex, then~$C$ traverses the
	neighbors of $v_\star$ in the order of the rotation system around~$v_\star$.
\end{restatable}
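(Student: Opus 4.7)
The plan is to proceed by induction on $n$, with the base case $n = 3$ immediate since the triangle $K_3$ itself is a plane Hamiltonian cycle and every edge is incident to $v_\star$.

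For the inductive step, I consider the cyclic rotation $v_1, v_2, \ldots, v_{n-1}$ of the neighbors of $v_\star$. For each consecutive pair $v_i, v_{i+1}$, the triangle $v_\star v_i v_{i+1}$ has two faces on the sphere; let $P_i$ denote the face that at $v_\star$ occupies the wedge between the edges $v_\star v_i$ and $v_\star v_{i+1}$. An observation valid in every simple drawing is that a vertex $v_j$ with $j \notin \{i, i+1\}$ lies inside $P_i$ if and only if the edge $v_\star v_j$ crosses $v_i v_{i+1}$; since in a simple drawing two edges cross at most once, such an edge $v_\star v_j$ can enter $P_i$ but cannot exit, and at most one $v_j$ lies in $P_i$.

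In the $h$-convex case, the definition from~\cite{ArroyoMRS2017_convex} ensures that each pocket $P_i$ is the convex face of its triangle and is therefore empty of other vertices. Consequently no $v_\star v_j$ crosses any $v_i v_{i+1}$, and a further argument shows that the pockets $P_1, \ldots, P_{n-2}$ are pairwise interior-disjoint (overlapping pockets would force crossings among the rotation edges). It follows that the rotation cycle $C = v_\star v_1 v_2 \cdots v_{n-1} v_\star$ is plane, avoids every edge incident to $v_\star$, and visits the neighbors of $v_\star$ in rotation order, proving the ``Moreover'' clause.

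For the general convex case, I proceed by induction, removing a vertex $w \neq v_\star$; since convex drawings are closed under vertex deletion, $D - w$ is convex and by induction contains a plane Hamiltonian cycle $C'$ of $K_{n-1}$ avoiding edges incident to $v_\star$. To re-insert $w$ I find an edge $xy$ of $C'$ and replace it with the path $x, w, y$. A good candidate for $w$ is a neighbor $v_i$ of $v_\star$ whose adjacent pocket $P_{i-1}$ or $P_i$ is empty (such a vertex always exists by an extremality argument on the rotation). The main obstacle lies here: one must show that for the right choice of $w$ and the face $F$ of $C'$ that contains $w$, there exists a boundary edge $xy$ of $F$ such that the convex side of the triangle $wxy$ (provided by convexity of $D$) is contained in $F$. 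This forces $wx$ and $wy$ to stay inside $F$ and hence not cross any edge of $C' \setminus \{xy\}$; a parallel analysis, using that the wedge of $w$ at $v_\star$ is controlled by an empty pocket, ensures the new edges also avoid star-edges. The $O(n^2)$ running time then follows from $n$ inductive steps, each of which requires $O(n)$ work to determine the insertion edge and verify the non-crossing conditions.
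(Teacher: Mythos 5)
Your proposal takes a route fundamentally different from the paper's, and both branches have genuine gaps.

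The $h$-convex branch hinges on the assertion that ``the definition ensures that each pocket $P_i$ is the convex face of its triangle and is therefore empty,'' but this is not a consequence of the $h$-convexity definition and in fact depends on an unstated choice of cyclic starting point for the labels $v_1,\dots,v_{n-1}$. The side of the triangle $v_\star v_i v_{i+1}$ that contains a witness of a bad edge is \emph{not} convex (cf.\ Observation~\ref{obs:ConvexSide}), and $h$-convex drawings do admit bad edges: a geometric drawing of points in convex position with $v_\star$ on the hull has the chord $\{v_{n-1},v_1\}$ crossing $n-3$ star edges, so the corresponding pocket contains all of $v_2,\dots,v_{n-2}$. What is actually true, and what the paper proves as a corollary of its nesting lemma, is that an $h$-convex drawing has \emph{at most one} bad rotation edge; only after cyclically relabeling so that this unique bad edge becomes $\{v_{n-1},v_1\}$ does it follow that $P_1,\dots,P_{n-2}$ are all empty. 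Asserting emptiness outright silently assumes precisely the bound that has to be proved. The auxiliary claim that at most one $v_j$ lies in $P_i$ also does not follow from the once-crossing argument: several star edges can cross the same rotation edge, which is exactly why the paper works with a leftmost witness $w_i^L$ and a rightmost witness $w_i^R$.

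For general convex drawings your inductive plan — delete a vertex $w$, apply induction to get a plane cycle $C'$ in $D-w$ avoiding star edges, then reinsert $w$ across some edge $xy$ of $C'$ — leaves the decisive step unproved. You flag it yourself: one must show that the face $F$ of $C'$ containing $w$ has a boundary edge $xy$ such that $xw$ and $wy$ remain in $F$ and also avoid star edges. This is the entire substance of the theorem, and nothing in the proposal establishes that the ``right'' vertex $w$ always exists or that convexity of $D$ steers the triangle $wxy$ into $F$. The paper does not use induction at all: it proves, via a SAT-checked case analysis on at most seven vertices (using that convexity is hereditary under vertex deletion), that all bad edges are nested with respect to the rotation around $v_\star$ (Lemma~\ref{lem:twobadedges1}), and it then explicitly weaves the Hamiltonian cycle from this nested structure using the non-star-crossing connector edges identified in Lemma~\ref{lem:leftrightmostwitness} and Lemma~\ref{cor:edgeslefttoright}. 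That nesting lemma is the irreplaceable technical input, and your proposal contains no substitute for it.
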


The proof of Theorem~\ref{theorem:convex_HC} is constructive and comes with a
polynomial time algorithm. For the proof we show that convex drawings have a
layering structure and reduce the problem to finding a Hamiltonian path for
every layer.  To simplify the proof and to reduce the number of cases that
have to be considered, we make use of the SAT framework.  We give a sketch of
the proof in Section~\ref{sec:sketch}. The full proof is deferred to
Appendix~\ref{sec:convex_HC_proof}.

While Theorem~\ref{theorem:convex_HC} asserts that, in convex drawings, every
plane spanning star (determined by a vertex) can be extended to a plane
Hamiltonian subdrawing, we have also computational evidence that, in convex
drawings, every plane matching can be extended to a plane Hamiltonian subdrawing
(cf.\ Conjecture~\ref{conjecture:hoffmanntoth_convex}).  An affirmative answer would generalize a result of Hoffmann and
T\'oth~\cite{HoffmannToth2003} about 
geometric drawings.
Another strengthening of Rafla's conjecture was very recently stated by
Aichholzer, Orthaber and Vogtenhuber~\cite{AOV2023}:
for each pair of vertices $a,b$ in a simple drawing of $K_n$ there exists a plane Hamiltonian path 
starting in~$a$ and ending in~$b$. 

We further used the SAT framework to study uncrossed edges
(Section~\ref{sec:uncrossededges})
and 
empty triangles (Section~\ref{sec:emptytriang}), which lead to further 
challenging conjectures. 
In the context of Erd\H{o}s--Szekeres--type problems the  SAT framework also helped in attaining new results, see \cite{BSS2023}.

\section{Preliminaries}
\label{sec:prelim}

Combinatorial properties of simple drawings such as plane substructures do not depend
on the actual drawing of edges but on incidence and order properties of the
drawing.  For a given simple drawing $D$ and a vertex $v$ of $D$, the cyclic
order~$\pi_v$ of incident edges in counterclockwise order around~$v$ is called
the \emph{rotation of $v$} in $D$.  The collection of rotations of all
vertices is called the \emph{rotation system} of~$D$.  In the case of simple
drawings of the complete graph~$K_n$, the rotation of a vertex $v$ is a cyclic
permutation on $V(K_n) \backslash \{v\}$.  The rotation system captures the
combinatorial properties of a simple drawing on the sphere -- the choice of
the outer cell when stereographically projecting the drawing onto a plane has
no effect on the rotation system.

A \emph{pre-rotation system} on $V$ consists of cyclic permutations $\pi_v$ on
the elements $V \backslash \{v\}$ for all $v \in V$.  A
pre-rotation system $\Pi = (\pi_v)_{v \in V}$ is \emph{drawable} if there is a
simple drawing of the complete graph with vertices $V$ such that its rotation
system coincides with~$\Pi$.  Two pre-rotation systems are \emph{isomorphic}
if they are the same up to relabelling and reflection (i.e., all cyclic orders
are reversed). 
Two simple drawings are \emph{weakly isomorphic}
if their rotation systems are isomorphic.  
On four vertices there are three non-isomorphic pre-rotation systems. 
The $K_4$ has exactly
two non-isomorphic simple drawings on the sphere: the drawing with no crossing
and the drawing with one crossing.  Hence, the two corresponding pre-rotation systems are drawable, and the third pre-rotation system is an obstruction to drawability.  
It is denoted by $\obstructionFour$ and shown 
in Figure~\ref{fig:rotsys_obstructions}.

For a pre-rotation system $\Pi = (\pi_v)_{v \in V}$ and a subset of the
elements $I \subset V$, the \emph{sub-configuration induced by $I$} is
$\Pi|_I = (\pi_v|_I)_{v \in I}$, where $\pi_v|_I$ denotes the cyclic permutation
obtained by restricting $\pi_v$ to $I \backslash \{v\}$.  A pre-rotation
system $\Pi$ on $V$ \emph{contains} $\Pi'$ if there is an induced
sub-configuration $\Pi|_I$ with $I \subseteq V$ isomorphic to~$\Pi'$.
A pre-rotation system not containing $\Pi'$ is called
\emph{$\Pi'$-free}.

\begin{figure}[htb]
	\centering
	\includegraphics{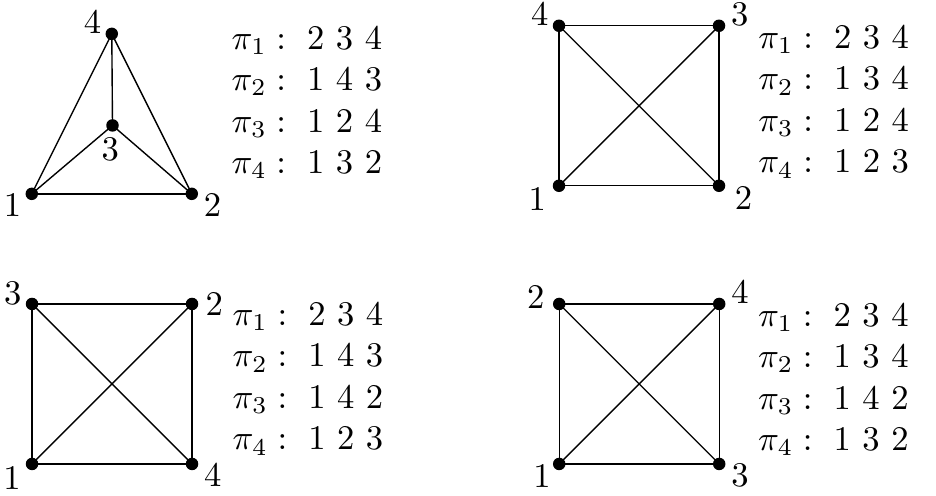}
	\caption{The four rotation systems on $4$ elements with their drawings. The first one is the only non-isomorphic one.}
	\label{fig:rs_n4_noniso}
\end{figure}

A crossing pair of edges involves four vertices. 
By studying drawings
of the $K_4$ (see Figure~\ref{fig:rs_n4_noniso}) we learn that a crossing pair of edges
can be identified from the underlying rotation system. 
Hence, the pairs of crossing edges in a drawing of $K_n$ are fully determined by its underlying rotation system.
We summarize: 

\goodbreak

\begin{observation}			
	\label{observation:basics}
	The following two statements hold:
	\begin{compactenum}[(i)]
		\item 
		\label{item:obstructionfour_notdrawable}
		A pre-rotation system containing $\obstructionFour$ is not drawable.    
		\item 
		\label{item:obstructionfour_crossingsdetermined}
		Let $\Pi$ be a $\obstructionFour$-free pre-rotation system on $[n]$. The
		subconfiguration induced by a 4-element subset is drawable and determines which pairs of
		edges cross in the drawing.
	\end{compactenum}
\end{observation}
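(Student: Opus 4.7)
My plan is to derive both parts directly from the classification of pre-rotation systems on four vertices that is established in the paragraph just before the statement: there are three non-isomorphic pre-rotation systems on four elements, and exactly two of them are drawable, realised respectively by the two simple drawings of $K_4$ (the crossing-free one and the one with a single crossing). Consequently, the third non-drawable pre-rotation system is by definition $\obstructionFour$. Both parts of the observation are essentially a ``restrict-to-four-vertices'' argument combined with this enumeration.

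For part~(i), I would argue by contradiction. Suppose $\Pi = (\pi_v)_{v\in V}$ contains $\obstructionFour$ on some $4$-subset $I\subseteq V$ and let $D$ be a simple drawing realising $\Pi$. Restricting $D$ to the vertices of $I$ and to the six edges with both endpoints in $I$ yields a drawing of $K_4$ which inherits the three axioms of a simple drawing from $D$, and whose rotation system equals $\Pi|_I \cong \obstructionFour$. This contradicts the non-drawability of $\obstructionFour$. The only routine check is that a sub-drawing of a simple drawing is again simple and that its rotation system is the expected restriction, both of which follow directly from the definitions.

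For part~(ii), assume $\Pi$ is $\obstructionFour$-free. Then every $4$-element sub-configuration $\Pi|_I$ is isomorphic to one of the two drawable pre-rotation systems on four elements, hence $\Pi|_I$ is itself drawable. From Figure~\ref{fig:rs_n4_noniso} one reads off that each drawable pre-rotation system on four vertices realises a unique, up to weak isomorphism, simple drawing of $K_4$; in particular, the rotation system on $I$ identifies which, if any, of the three pairs of vertex-disjoint edges on $I$ forms the crossing pair. Since a crossing in a simple drawing of $K_n$ involves exactly two edges with four distinct endpoints, this \emph{local} information already determines the full set of crossing pairs in any drawing of $\Pi$. The main obstacle, minor as it is, lies in making the correspondence between the two drawable pre-rotation systems and the two drawings of $K_4$ sufficiently explicit to read off the crossing pair in a uniform way; this is exactly the purpose of Figure~\ref{fig:rs_n4_noniso}, and once that correspondence is recorded both statements follow immediately.
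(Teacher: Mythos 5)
Your proposal is correct and follows essentially the same route as the paper: the paper proves the observation by appealing to the enumeration of pre-rotation systems on four vertices and the two simple drawings of $K_4$ presented in the preceding paragraph and Figure~\ref{fig:rs_n4_noniso}, and your argument simply fills in the routine restriction-to-four-vertices details of that same reasoning.
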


Note that part (\ref{item:obstructionfour_crossingsdetermined}) of the lemma allows to talk about the crossing pairs of edges of a $\obstructionFour$-free pre-rotation system, even if there is no associated drawing.

% TODO: für full version eventuell wieder einkommentieren
\iffalse
{\color{gray}
	\manfred{remove:}
	In Appendix~\ref{sec:drawing} (Proposition~\ref{prop:PRS_different_ATgraphs})
	we show that the converse of
	Observation~\ref{observation:basics}(\ref{item:obstructionfour_crossingsdetermined})
	is  true, i.e., two $\obstructionFour$-free pre-rotation systems with the same pairs of crossing
	edges are either the same or reflections of each other.  
	An \emph{abstract topological graph} or \emph{AT-graph} $(G,X)$ as defined by
	Kratochv\'{\i}l et al.~\cite{KratochvilLN1991} and Kyn\v{c}l~\cite{Kyncl2020} is a graph $G$ together with a collection $X$ of pairs of
	edges. An AT-graph is realizable if there is a drawing of $G$ such that a pair
	of edges of $G$ is crossing in the drawing if and only if it belongs to
	$X$. Proposition~\ref{prop:PRS_different_ATgraphs} can now be stated as: 
	Two $\obstructionFour$-free pre-rotation systems with the same AT-graph are either the same or reflections of each other.  
}
\fi

\'Abrego et al.~\cite{AbregoAFHOORSV2015} generated all pre-rotation systems
for up to $9$ vertices and used a drawing program based on back-tracking to
classify the drawable ones. In particular, they provided the following
classification:

\begin{restatable}[\cite{AbregoAFHOORSV2015}]{proposition}{propclassificationRS}
	\label{proposition:rotsys_classification_n6}
	A pre-rotation system on $n \le 6$ elements
	is drawable if and only if it does not contain
	$\obstructionFour$, $\obstructionFiveA$, or $\obstructionFiveB$
	(Figure~\ref{fig:rotsys_obstructions}) as a subconfiguration.
\end{restatable}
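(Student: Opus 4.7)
The plan is to split into the two implications. The \emph{necessity} direction (drawability implies obstruction-freeness) is structural and uses that drawability is preserved under taking induced sub-configurations. The \emph{sufficiency} direction (obstruction-freeness implies drawability for $n\le 6$) is finite in nature and handled by exhaustive case analysis or computer enumeration.

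For necessity, I would first verify that each of $\obstructionFour$, $\obstructionFiveA$, $\obstructionFiveB$ is itself non-drawable. The case $\obstructionFour$ was already settled above: inspecting the three isomorphism classes of pre-rotation systems on 4 elements in Figure~\ref{fig:rs_n4_noniso} shows only two are realized (by the two drawings of $K_4$), leaving $\obstructionFour$ as the unique obstruction. For $\obstructionFiveA$ and $\obstructionFiveB$, I would argue by attempted extension: fix a drawing of the $K_4$ induced on any four of the five vertices (this is forced up to weak isomorphism by Observation~\ref{observation:basics}(\ref{item:obstructionfour_crossingsdetermined})), and then examine in each case which cells the fifth vertex could occupy and how its five incident edges could be routed to the existing vertices through the cell complex. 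In each case one shows that no routing realizes the prescribed rotation around the fifth vertex (and simultaneously the prescribed rotations around the existing vertices). Alternatively, this non-drawability can be certified by our SAT encoding together with a DRAT-trim unsatisfiability proof, which is the route most aligned with the paper's framework. Once the three configurations are known to be non-drawable, the implication is immediate: if a drawing $D$ of a pre-rotation system $\Pi$ existed with $\Pi|_I\cong \Pi'\in\{\obstructionFour,\obstructionFiveA,\obstructionFiveB\}$, restricting $D$ to the vertex set $I$ would yield a drawing of $\Pi'$.

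For sufficiency, the plan is exhaustive enumeration up to isomorphism. For $n\in\{4,5,6\}$, enumerate all pre-rotation systems on $n$ elements, discard those containing any of the three obstructions, and exhibit a simple drawing realizing each surviving candidate. For $n=4$ this is trivial. For $n=5$ one can verify by hand (or by the SAT encoding) that every $\obstructionFour$-free rotation system on five elements, except the two configurations $\obstructionFiveA,\obstructionFiveB$, is drawable. For $n=6$ the argument is inherently combinatorial: after eliminating configurations containing $\obstructionFour$, $\obstructionFiveA$, $\obstructionFiveB$ on any 4- or 5-element subset, every remaining pre-rotation system must be shown to be drawable. Here I would invoke the SAT framework developed in Section~\ref{sec:all_encoding} to generate each candidate and produce an explicit realizing drawing, matching the strategy of \cite{AbregoAFHOORSV2015}.

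The main obstacle is the sufficiency direction at $n=6$. The number of pre-rotation systems on six labelled elements is $(5!)^6$, and even after quotienting by isomorphism and filtering by the three local obstructions, there are many candidates that look locally consistent but must be individually certified drawable. There is no reason a priori for the three obstructions on at most five vertices to suffice at $n=6$; this is an empirical fact that needs verification. Hence the essential content of the proof is a finite check best discharged by the SAT encoding, whose correctness in the UNSAT direction is backed by DRAT-trim certificates and in the SAT direction by explicit realizations, exactly as described in Section~\ref{sec:intro}.
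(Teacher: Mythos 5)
Your proposal follows essentially the same route as the paper's alternative proof in Appendix~\ref{sec:characterizationRS}: enumerate pre-rotation systems up to isomorphism with the SAT framework, filter out those containing the three obstructions, and certify drawability of the $2+5+102$ survivors via the drawing framework of Appendix~\ref{sec:drawing}, with non-drawability of $\obstructionFour$, $\obstructionFiveA$, $\obstructionFiveB$ also checkable by hand. One small slip: the number of labelled pre-rotation systems on six elements is $(4!)^6$, not $(5!)^6$, since each $\pi_v$ is a cyclic permutation of a $5$-element set and there are $(5-1)!=24$ of those; this does not affect your argument.
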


\begin{figure}[htb]
	\centering
	\includegraphics{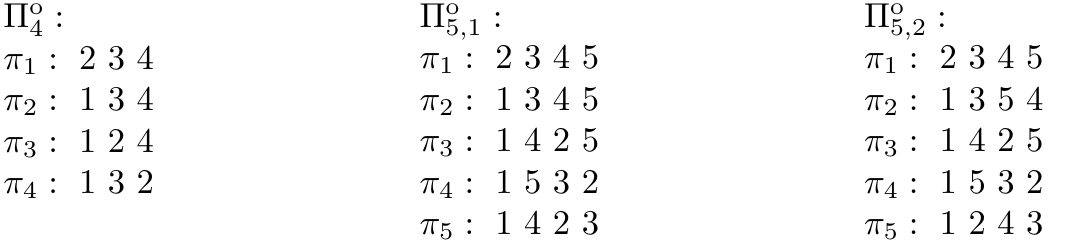}
	\caption{The three obstructions $\obstructionFour$, $\obstructionFiveA$, and $\obstructionFiveB$ for rotation systems.}
	\label{fig:rotsys_obstructions}
\end{figure}

Kyn\v{c}l showed that a pre-rotation system is drawable if and only if all induced 4-, 5-, and 6-element subconfigurations are drawable \cite[Theorem 1.1]{Kyncl2020}.
Together with \autoref{proposition:rotsys_classification_n6} this 
yields the following characterization:

\begin{restatable}
	{theorem}{thmclassificationRS}
	\label{theorem:rotsys_5tuples_characterization}
	A pre-rotation system on $n$ elements is drawable if and only if
	it does not contain $\obstructionFour$, $\obstructionFiveA$ or $\obstructionFiveB$ (Figure~\ref{fig:rotsys_obstructions})
	as a subconfiguration.
\end{restatable}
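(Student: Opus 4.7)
The plan is to derive the theorem by combining the two results cited immediately before its statement: Proposition~\ref{proposition:rotsys_classification_n6}, which handles the case $n\le 6$, and Kyn\v{c}l's theorem, which reduces drawability of an arbitrary pre-rotation system to drawability of all its $4$-, $5$-, and $6$-element subconfigurations. Since the three obstructions $\obstructionFour$, $\obstructionFiveA$, $\obstructionFiveB$ all have at most $5$ elements, they live naturally inside these small subconfigurations, so the combination is essentially mechanical.

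For the forward direction, I would argue that drawability is hereditary: if $\Pi$ is drawable via a simple drawing $D$ of $K_n$, then for every $I\subseteq[n]$ the induced subdrawing $D[I]$ is a simple drawing of $K_{|I|}$ whose rotation system is precisely $\Pi|_I$. In particular, if $\Pi$ contained one of $\obstructionFour$, $\obstructionFiveA$, or $\obstructionFiveB$ as a subconfiguration, then the corresponding induced subdrawing would realize this configuration; but Proposition~\ref{proposition:rotsys_classification_n6} (and for $\obstructionFour$ also Observation~\ref{observation:basics}(\ref{item:obstructionfour_notdrawable})) says each of these is not drawable, a contradiction.

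For the backward direction, suppose $\Pi$ does not contain $\obstructionFour$, $\obstructionFiveA$, or $\obstructionFiveB$ as a subconfiguration. Containment of pre-rotation systems is transitive, so no induced subconfiguration $\Pi|_I$ with $|I|\in\{4,5,6\}$ contains any of these three obstructions either. Applying Proposition~\ref{proposition:rotsys_classification_n6} to each such $\Pi|_I$ yields that every $4$-, $5$-, and $6$-element induced subconfiguration of $\Pi$ is drawable. Kyn\v{c}l's characterization \cite[Theorem~1.1]{Kyncl2020} then implies that $\Pi$ itself is drawable, completing the proof.

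The argument has no real obstacle; the only thing one has to be careful about is the transitivity of the containment relation together with the fact that the obstructions have size at most $5$, so that non-containment in $\Pi$ propagates down to all subconfigurations of size $4$, $5$, and $6$ used by Kyn\v{c}l's theorem. All heavy lifting is done by the two cited results, so no further case analysis is required.
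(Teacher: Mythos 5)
Your proof is correct and follows essentially the same approach as the paper: apply Proposition~\ref{proposition:rotsys_classification_n6} to conclude that all induced $4$-, $5$-, and $6$-element subconfigurations are drawable, then invoke Kyn\v{c}l's characterization to lift drawability to $\Pi$ itself. The one place where the paper's written-out proof is slightly more careful is in how it applies Kyn\v{c}l's theorem. You cite the rotation-system form ``$\Pi$ drawable $\iff$ all $4$-, $5$-, $6$-element subconfigurations drawable'' directly, which is indeed how the theorem is paraphrased in Section~\ref{sec:prelim}. But Kyn\v{c}l's Theorem~1.1 is actually stated in terms of abstract topological graphs (sets of crossing pairs), not rotation systems. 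The paper's appendix proof therefore takes three extra steps: it uses Observation~\ref{observation:basics}(\ref{item:obstructionfour_crossingsdetermined}) to see that the crossing pairs of $\Pi$ are determined, applies Kyn\v{c}l to obtain \emph{some} drawing~$D'$ with those crossing pairs, and then invokes Proposition~\ref{prop:PRS_different_ATgraphs} to conclude that $D'$ has rotation system $\Pi$ or its reflection, hence $\Pi$ is drawable. Your argument is valid granted the rotation-system paraphrase as a black box; the appendix version shows why that paraphrase is legitimate, which is why the paper states and proves Proposition~\ref{prop:PRS_different_ATgraphs} in the first place.
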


Details can be found in Appendix~\ref{sec:characterizationRS}.
There in particular we present an
alternative proof of Proposition~\ref{proposition:rotsys_classification_n6}
for which we use a two-level SAT approach: First we use SAT solver to  enumerate pre-rotation systems. Then for each pre-rotation system
we test drawability by using an auxiliary SAT encoding for planar graphs which is based on Schnyder's
characterization \cite{Schnyder1989}.
For more information and further literature on planarity encodings 
we refer to \cite{ChimaniHW19} and~\cite{KirchwegerSS2023}.

We now turn to the definition of convex drawings. The notion of convexity is
based on the \emph{triangles} of a drawing, i.e., the subdrawings induced by
three vertices.  Since in simple drawings the edges of a triangle do not
cross, a triangle partitions the plane (resp.\ sphere) into exactly two
connected components.  The closures of these components are the two
\emph{sides} of the triangle.  A side $S$ of a triangle is \emph{convex} if
every edge that has its two vertices in~$S$ is fully drawn inside~$S$.  A
simple drawing of the $K_n$ is \emph{convex} if every triangle has a convex
side.  Moreover, a convex drawing is \emph{$h$-convex} (short for hereditary
convex) if the side~$S_1$ of a triangle which is fully contained in the convex
side~$S_2$ of another triangle is convex.  
For further aspects of the
convexity hierarchy we refer to
\cite{ArroyoMRS2017_convex,ArroyoMRS2017_pseudolines,BFSSS_TDCTCG_2022}.

Arroyo et al.~\cite{ArroyoMRS2017_convex} showed that convex and $h$-convex
drawings can be characterized via finitely many forbidden subconfigurations.
A simple drawing is \emph{convex} if and only if it does not contain
$\obstructionconvexFiveA$ or $\obstructionconvexFiveB$ (cf.
Figure~\ref{fig:rotsys_obstructions_convex}) as a subconfiguration.
Moreover, a convex drawing is \emph{$h$-convex} if and only if it does not
contain $\obstructionhconvexSix$ (cf.
Figure~\ref{fig:rotsys_obstructions_hconvex}) as a subconfiguration.

\begin{figure}[htb]
	\centering
	\includegraphics[page=2]
	{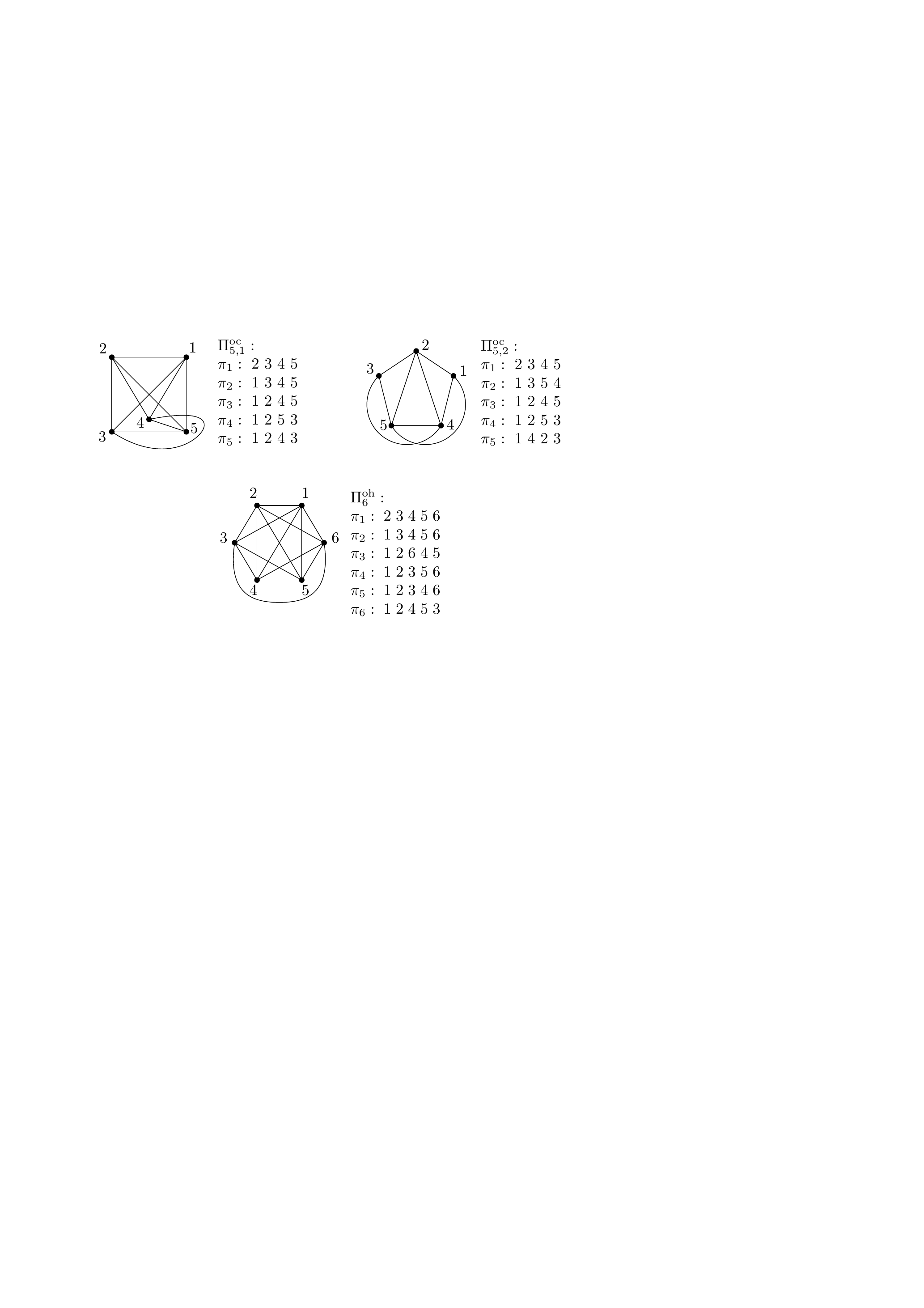}
	\caption{The two obstructions $\obstructionconvexFiveA$ (left) and $\obstructionconvexFiveB$ (right) for convex drawings.}
	\label{fig:rotsys_obstructions_convex}
\end{figure}

\begin{figure}[htb]
	\centering
	\includegraphics[page=3]
	{convexity_obstructions1.pdf}
	\caption{The obstruction $\obstructionhconvexSix$ for $h$-convex drawings.}
	\label{fig:rotsys_obstructions_hconvex}
\end{figure}

\section{SAT encoding for rotation systems}
\label{sec:all_encoding}

To encode pre-rotation systems on $[n]$
as a Boolean satisfiability problem for a constant~$n$,
we use Boolean
variables and clauses to encode the rotations of the vertices. 
More specifically, we introduce a Boolean variable $X_{aib}$ for every pair of distinct vertices $a, b \in [n]$ and every index $i \in [n-1]$, to indicate whether $\pi_a(i)=b$,
and use appropriate clauses to assert that for every $i$ the variables $X_{aib}$ indeed encode a cyclic permutation.
To restrict the search space to
rotation systems of the complete graph $K_n$, that is,
pre-rotation system that are drawable, 
we introduce clauses to forbid the drawability-obstructions given in Theorem~\ref{theorem:rotsys_5tuples_characterization}. 
With this encoding, every
solution of the SAT formula corresponds to a rotation system 
of a simple	drawing of~$K_n$.  
Moreover, we encode convexity, edge crossings, plane subdrawings and
further notations in terms of the rotation systems and implement a static symmetry breaking that
ensures that the solutions are in one-to-one correspondence with isomorphism classes of rotation systems.
Details are deferred to Appendix~\ref{sec:encoding}.

\section{Substructures of rotation systems}
\label{sec:applications}

Here, we discuss and present new results and new conjectures based on the computational data of the SAT framework.

\subsection{Plane Hamiltonian substructures}
\label{sec:applications_planar}

Many questions about simple drawings concern plane substructures, i.e., they ask about the existence of
a crossing-free subdrawing with certain properties. 
One of the most prominent conjectures in this direction is by Rafla.

\begin{conj}[{\cite{Rafla1988}}]
	\label{conjecture:rafla}
	Every simple drawing of $K_n$ with $n\ge 3$ contains a plane Hamiltonian cycle.
\end{conj}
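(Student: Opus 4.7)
The plan is to approach Rafla's conjecture by combining the structural result Theorem~\ref{theorem:convex_HC} with combinatorial reductions on the rotation system. Since the convex case is already settled in a stronger form, a natural strategy is either to find a convex substructure inside an arbitrary simple drawing that already carries a plane Hamiltonian cycle of the full graph, or to reduce to it via induction on~$n$.

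The first thing I would try is an inductive extension argument. Given a simple drawing $D$ of $K_n$, remove a carefully chosen vertex~$v$ (for instance, one whose rotation has a pair of consecutive neighbors $a,b$ such that the triangle $vab$ is empty on some side) and apply induction to $D-v$ to obtain a plane Hamiltonian cycle~$C'$. One then attempts to reinsert $v$ by replacing an edge $ab\in C'$ with the two-edge path $a\,v\,b$, subject to the requirement that the two new edges cross none of the remaining $n-3$ edges of $C'$. The combinatorial engine for controlling these crossings is Observation~\ref{observation:basics}\,(\ref{item:obstructionfour_crossingsdetermined}): crossings in a $\obstructionFour$-free pre-rotation system are determined by 4-element subconfigurations, so one can formulate the insertability of $v$ entirely in terms of the rotations at $v,a,b$ and the vertices of $C'$.

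A second, more ambitious route is a reduction to the convex case. Using Theorem~\ref{theorem:rotsys_5tuples_characterization} together with the convexity obstructions $\obstructionconvexFiveA$ and $\obstructionconvexFiveB$, one could try to prove a Ramsey-type statement: every simple drawing of $K_n$ contains a convex subdrawing on a linear-sized vertex set that can be completed to a plane Hamiltonian cycle of~$K_n$ by adding short connectors through the remaining vertices. Invoking Theorem~\ref{theorem:convex_HC} on this subdrawing (choosing $v_\star$ adjacent to the connectors) would then yield the required plane cycle. A milder target in the same spirit is to first establish the stronger Conjecture~\ref{conjecture:rafla_2n_plus_3}, whose extra edges provide more slack for an inductive step and might unlock a proof where the bare conjecture resists one.

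The main obstacle, and the reason the conjecture has been open since 1988, is that general simple drawings lack the layering exploited in the proof of Theorem~\ref{theorem:convex_HC}, and the currently known bounds for plane paths ($(\log n)^{1-o(1)}$) and plane matchings ($\Omega(\sqrt n)$) are both far from linear, which strongly suggests that any successful argument must be global rather than a local extension. Concretely, in the inductive approach above, the hard part is showing that a vertex $v$ suitable for insertion always exists -- a purely local analysis of the rotation at $v$ together with its $\obstructionFour$-free neighborhood is almost certainly insufficient, and one likely needs a global potential function (for example, counting crossings between $C'$ and candidate replacement paths, and arguing that some choice of $v$ and $ab$ drives it to zero) to force the existence of a valid reinsertion.
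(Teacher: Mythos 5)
The statement you are asked to prove is Rafla's conjecture, and the paper does not prove it either: it is presented precisely as an open conjecture, verified computationally for $n\le 10$ (via the SAT framework, see Appendix~\ref{plane_substructures}) and proved only for the restricted class of convex drawings in the strengthened form of Theorem~\ref{theorem:convex_HC}. So there is no proof in the paper to compare your argument against; there is nothing to match.

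Your text is, correspondingly, not a proof but a survey of possible attack strategies, and both routes you sketch have explicit, unresolved gaps that you yourself acknowledge. In the inductive reinsertion approach, the entire burden is carried by the claim that some vertex $v$ and edge $ab\in C'$ always admit a crossing-free replacement of $ab$ by $a\,v\,b$; you give no argument for this, and it is exactly the kind of local extension step that is known to be problematic (for instance, the statement of Theorem~\ref{theorem:convex_HC} already fails for non-convex drawings such as $T_5$, and Figure~\ref{fig:HTP_counter} shows that natural extension properties that hold in the convex case break in general). In the reduction-to-convex route, the ``Ramsey-type'' statement that every simple drawing contains a large convex subdrawing that can be completed to a plane Hamiltonian cycle of the \emph{whole} graph is itself unproven and would, if true, be a major result in its own right; nothing in the paper (or in the literature it cites) supplies it. Since the conjecture has been open since 1988 and the best known bounds for plane paths and matchings are polylogarithmic and $\Omega(\sqrt{n})$ respectively, you are right that a local argument is unlikely to succeed, but that observation does not itself close the gap. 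In short: this item has no proof in the paper, and your proposal does not constitute one.
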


Rafla verified the conjecture for $n \le 7$.  Later \'Abrego
et al.\ \cite{AbregoAFHOORSV2015} enumerated all rotation
systems for \mbox{$n \le 9$} and verified the conjecture for \mbox{$n \le 9$}.  Furthermore
the conjecture was proven for some particular subclasses such as geometric,
monotone, cylindrical drawings and c-monotone\cite{AOV2023}. 
With the SAT framework we verify Conjecture~\ref{conjecture:rafla} for all $n \leq 10$. Technical
details are given in Appendix~\ref{plane_substructures}. 

We continue with
variants and a strengthening of the conjecture. 
Recently, Suk and Zeng \cite{SukZeng2022} and Aichholzer et al.\
\cite{AichholzerGTVW22twisted} independently showed that simple drawings
contain a plane path of length $(\log n)^{1-o(1)}$.  Suk and Zeng
showed that every simple drawing of $K_n$ contains a plane copy of every tree
on $(\log n)^{1/4-o(1)}$ vertices.  Aichholzer et al.\ moreover showed the
existence of a plane matching of size $\Omega(n^{1/2})$,  improving previous
bounds from 
\cite{PachSolymosiToth2003,PachToth2005,FoxSudakov2009,Suk2012,FulekRuizVargas2013,Fulek2014,RuizVargas17},
and 
that generalized twisted drawings contain a plane Hamiltonian cycle if $n$ is odd.

Fulek and Ruiz--Vargas \cite[Lemma~2.1]{FulekRuizVargas2013} showed that every
simple drawing of $K_n$ contains a plane subdrawing with $2n-3$ edges, which
is best-possible as witnessed by the geometric drawing of
$n$ points in convex position (see Figure~\ref{fig:convex_C5_and_twisted_T5}(left)).  
While for a prescribed connected plane spanning subdrawing, an augmentation with the maximum number of edges can be computed in cubic time, 
it is NP-complete to determine the size of the largest plane subdrawing~\cite{GarciaTejelPilz2021}.

%%%%%%%%%%%%%%%%%%%%%%%%%%%%%%%%%%%%%%%%%%%%%%%%%%%
\begin{figure}[htb]
	\centering
	\includegraphics{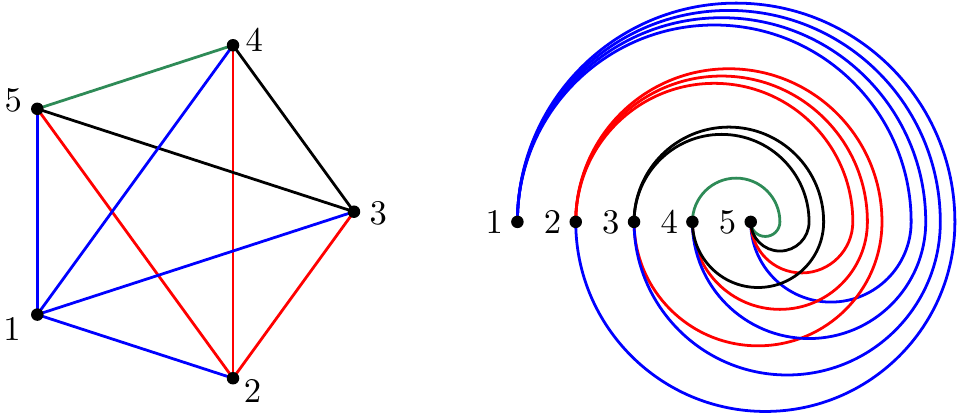}
	\caption{
		The two crossing-maximal simple drawings of~$K_5$: 
		(left)~a geometric drawing of the \emph{perfect convex}~$C_5$ 
		and 
		(right)~the \emph{perfect twisted} drawing~$T_5$ isomorphic to $\obstructionconvexFiveA$.
	}
	\label{fig:convex_C5_and_twisted_T5}
\end{figure}
%%%%%%%%%%%%%%%%%%%%%%%%%%%%%%%%%%%%%%%%%%%%%%%%%%%

We investigated a combination of both properties, i.e., for plane Hamiltonian
subdrawings of size $2n-3$.  Such a subdrawing exists for all simple drawings
with $n \leq 8$ (see Appendix~\ref{plane_substructures}) and we conjecture
that this pattern continues.

\conjraflatwonminusthree*

For convex drawings, we proved a strengthened version of Conjecture~\ref{conjecture:rafla_2n_plus_3}, where the plane Hamiltonian subdrawing contains a spanning star. 
See Section~\ref{sec:sketch} for a sketch and Appendix~\ref{sec:convex_HC_proof} for the full proof.

\theoremconvexHC*

The Hamiltonian cycle from Theorem~\ref{theorem:convex_HC} together with the
spanning star centered at~$v_\star$ form a plane subdrawing on $2n-3$ edges.
The statement of Theorem~\ref{theorem:convex_HC} is not true for
non-convex drawings. For example, if $v_\star$ is chosen as vertex $5$ of $T_5$ Figure~\ref{fig:rotsys_obstructions_convex}(left),
then the only edges not crossing star-edges are $\{1,2\}$ and  $\{2,3\}$. 
Moreover, Figure~\ref{fig:rotsys_obstructions_hconvex} shows a convex
drawing which is not $h$-convex and where no plane Hamiltonian cycle traverses
the neighbors of $v_\star=3$ in the order of the rotation system
around~$v_\star$.

\paragraph{Extending Hamiltonian cycles}
Another way to read Theorem~\ref{theorem:convex_HC} is the following: given a spanning star in a convex drawing, we can extend this star to a plane Hamiltonian subdrawing on $2n-3$ edges.
As a variant of this formulation, 
we tested whether the other direction is true, i.e., 
whether any given plane Hamiltonian cycle can be extended to a plane subdrawing with $2n-3$ edges. 
It is worth noting that extending the Hamiltonian cycle by a star already fails in the geometric setting; see e.g.\  Figure~\ref{fig:counterexHCextensionbystar}. 
While for convex drawings such an extension exists for all $n \leq 10$ (see Appendix~\ref{plane_substructures} for technical detail), for general simple drawings it does not; see the example on 8 vertices depicted in Figure~\ref{fig:HTP_counter}(left).

\begin{conj}
	\label{conjecture:extend_HC_to_2n_3}
	Let $D$ be a convex drawing $D$.
	Then every plane Hamiltonian cycle~$C$
	can be extended to a plane Hamiltonian subdrawing on $2n-3$ edges. 
\end{conj}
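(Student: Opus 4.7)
The plan is to prove Conjecture~\ref{conjecture:extend_HC_to_2n_3} by induction on $n$, peeling off one ``ear'' vertex of $C$ at each step. Viewed on the sphere, the plane Hamiltonian cycle $C$ is a simple closed curve bounding two open regions $R_1,R_2$; I call an edge between two vertices non-adjacent on $C$ a \emph{chord}, and \emph{safe} if it does not cross any edge of $C$. Every safe chord lies entirely in $R_1$ or entirely in $R_2$, and safe chords on opposite sides never cross, so the conjecture reduces to selecting $n-3$ pairwise non-crossing safe chords.

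For the inductive step I call a vertex $v$ on $C$ a \emph{proper ear} if its two $C$-neighbors $u,w$ are joined by a safe chord $uw$ and the triangle $uvw$ bounds a region of the sphere containing no further vertex. Given a proper ear, I add $uw$ to the selected list, delete $v$, and work with the shortened cycle $C' = (C\setminus\{uv,vw\}) \cup \{uw\}$. Since convex drawings are characterised by forbidden $5$-vertex subconfigurations (Figure~\ref{fig:rotsys_obstructions_convex}), the sub-drawing $D'$ remains convex; and $C'$ is plane because $uw$ is safe for $C$. By induction $D'$ admits a plane extension of $C'$ with $2(n-1)-3 = 2n-5$ edges, and re-inserting $uv$ and $vw$ gives $2n-3$ edges containing $C$---provided that no edge chosen inductively crosses $uv$ or $vw$ in the original drawing $D$. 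Topologically, such a chord would have to traverse the ear region bounded by the triangle $uvw$, and since that region is empty I expect the convex side of $uvw$ together with the absence of the obstructions $\obstructionconvexFiveA$ and $\obstructionconvexFiveB$ to exclude this possibility.

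The main obstacle will be establishing that every convex drawing with a plane Hamiltonian cycle admits a proper ear. My preferred attack is by contradiction against the convexity obstructions: if no triple of consecutive cycle vertices forms a proper ear, then for each such triple either the diagonal $uw$ crosses $C$ or the corresponding triangular region is non-empty; recording on which side of $C$ each failure occurs while walking around $C$ should eventually force a $5$-vertex subdrawing matching $\obstructionconvexFiveA$ or $\obstructionconvexFiveB$. A more constructive alternative is to mimic the layering argument used in the proof of Theorem~\ref{theorem:convex_HC}: fix an edge $e=\{a,b\}$ of $C$, use the convex sides of the triangles $\{a,b,w\}$ for $w\notin\{a,b\}$ to partition the remaining vertices into layers relative to $e$, and argue that the innermost layer contributes an ear.

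If proper ears turn out to be too fragile, the fallback is to aim for the stronger statement that in at least one region $R_i$ the safe chords contained in $R_i$ already form a plane triangulation of the $n$-gon $C$, yielding the required $n-3$ non-crossing safe chords on a single side. In fact, I suspect the even stronger variant ``both sides admit such triangulations'' may hold and supply the most convenient inductive invariant, producing $2(n-3)$ safe chords to choose from. Before committing to either route I would use the SAT framework of Section~\ref{sec:all_encoding} to verify the proper-ear lemma (and the triangulation variants) on all small convex drawings, both to gain confidence and to identify potential extremal instances.
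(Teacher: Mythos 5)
This statement is a \emph{conjecture} in the paper: the authors offer no proof, only SAT-based verification for $n \le 10$ (see Appendix~\ref{plane_substructures}). So there is no ``paper's proof'' to compare against; what you have written is a new proof \emph{plan}, and by your own account it is not yet a proof.

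The two gaps you flag are real and, as far as I can tell, not easily closed with the tools in the paper. First, you have no argument for the existence of a proper ear. The obstruction set $\{\obstructionconvexFiveA,\obstructionconvexFiveB\}$ forbids certain $5$-vertex patterns, but turning ``no triple of consecutive cycle vertices yields an ear'' into one of those patterns is a genuine combinatorial step, not a routine deduction, and your parenthetical appeal to the layering structure of Theorem~\ref{theorem:convex_HC} is about the star around a fixed vertex $v_\star$, not about the two regions cut out by an arbitrary plane Hamiltonian cycle. Second, and more concretely broken, is the re-insertion step: you need no inductively chosen chord of $D'$ to cross $uv$ or $vw$ in $D$. ``The ear region is empty'' only means it is empty of \emph{vertices}; edges of $D'$ (all of which avoid $v$) can still pass through it. Since the inductive extension is plane with $C'$, none of its chords cross $uw$, so a chord crossing $uv$ must also cross $vw$ (Jordan curve on the empty ear). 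That configuration is not automatically excluded --- it is exactly the kind of double crossing convexity \emph{might} forbid, but you give no derivation from the obstructions, only an expectation. Finally, note that your fallback (triangulating one side of $C$ by safe chords) is a strictly stronger claim than the conjecture and is likely harder, not easier: it is false in general simple drawings (Figure~\ref{fig:HTP_counter}(left) is an $8$-vertex drawing where a plane Hamiltonian cycle extends by only $4$ chords), so whatever makes it work must use convexity essentially, and you would again be back to the same missing lemma. The honest status of your proposal is a promising reduction of an open conjecture to two unproved lemmas, one of which (the re-insertion non-crossing) I think is the crux and should be the first thing you try to verify or refute with the SAT framework on small $n$.
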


\begin{figure}[htb]
	\centering
	\includegraphics{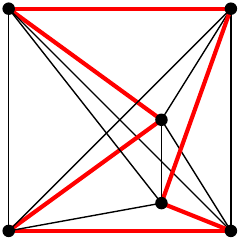}
	\caption{
		A plane Hamiltonian cycle (red) which cannot be extended by a spanning star.
	}
	\label{fig:counterexHCextensionbystar}
\end{figure}

\paragraph{Hamiltonian cycles avoiding a matching}
Another variant is to consider a 
prescribed matching and
ask whether there is a Hamiltonian cycle which together with the
matching builds a plane Hamiltonian substructure, i.e., the edges of the matching are
not crossed by the Hamiltonian cycle (but possibly contained).  Hoffmann and T\'oth
\cite{HoffmannToth2003} showed that
for every plane perfect matching~$M$ in a geometric drawing of $K_n$ there
exists a plane Hamiltonian cycle that 
does not cross any edge from~$M$.  While the statement does not generalize to
simple drawings (see Figure~\ref{fig:HTP_counter}(right)), it seems to
generalize to convex drawings.  Additionally, we consider arbitrary plane matchings and dare the following conjecture, which we verified for $n \leq 11$; 
see Appendix~\ref{plane_substructures}.

\begin{conj}
	\label{conjecture:hoffmanntoth_convex}
	For every plane matching~$M$
	in a convex drawing of~$K_n$ 
	there exists a plane Hamiltonian cycle 
	that does not cross any edge from~$M$.
\end{conj}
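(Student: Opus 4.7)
The plan is to argue by induction on the matching size $|M|$, using Theorem~\ref{theorem:convex_HC} both as the base case and as the workhorse of the inductive step. For $|M|=0$, picking any vertex as $v_\star$ in Theorem~\ref{theorem:convex_HC} already yields a plane Hamiltonian cycle. For $|M|=1$, say $M=\{\{u,v\}\}$, setting $v_\star := u$ produces a plane Hamiltonian cycle which by construction avoids every edge at~$u$, and in particular the matching edge.

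For the inductive step with $|M|\ge 2$, my plan is to pick a matching edge $e=\{u,v\}$ and pass to the smaller instance $D' := D - \{u,v\}$ with $M' := M \setminus\{e\}$. The convex-side property of a triangle only involves the drawing restricted to the triangle's vertices, so deletion of vertices preserves convexity and $D'$ is a convex drawing of $K_{n-2}$, while $M'$ remains a plane matching. The inductive hypothesis then yields a plane Hamiltonian cycle $C'$ in $D'$ that avoids $M'$. I would then aim to re-insert $u,v$ consecutively into $C'$ along the edge~$e$, producing $C$ from $C'$ by replacing some edge $\{w_i,w_{i+1}\}\in C'$ by a three-edge path $w_i\to u\to v \to w_{i+1}$ (or the reversed orientation $w_i\to v\to u\to w_{i+1}$).

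The main obstacle is the existence of a usable insertion position: one must show that some edge $\{w_i,w_{i+1}\}$ of $C'$ admits the replacement so that both new edges $\{w_i,u\}$ and $\{v,w_{i+1}\}$ cross neither $C'$ nor $M$. My strategy here is to choose $e$ as an \emph{extremal} matching edge, namely so that one of the two sides of some triangle $\{u,v,w\}$ containing $e$ is convex and contains as few other matching edges as possible; the plane matching $M$ and the convex-side structure together force the existence of a ``lens-shaped'' empty region bounded by $e$ on that side, into which an arc of $C'$ must dip, supplying the cycle edge to be replaced. The rotation-around-$u$ viewpoint from Theorem~\ref{theorem:convex_HC}, and especially its stronger conclusion in the $h$-convex case, should pin down a unique candidate edge and make the case analysis tractable. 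If this direct argument stalls, a natural fallback is to induct on~$n$ rather than on~$|M|$: remove a single vertex $w$ either outside the support of $M$ or at an endpoint of a \emph{boundary} matching edge chosen so that the matching transfers to $D-w$ essentially unchanged, and then re-introduce $w$ through the same layered sweep used in the proof of Theorem~\ref{theorem:convex_HC}, taking $M$ into account when choosing the layer through which $w$ is threaded.
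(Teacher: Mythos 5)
This statement is not proved in the paper at all: Conjecture~\ref{conjecture:hoffmanntoth_convex} is an \emph{open} conjecture which the authors only verify computationally for $n\le 11$ (Appendix~\ref{plane_substructures}) and explicitly ``dare.'' So there is no proof to compare against, and the question is whether your proposal actually closes it. It does not, and the gap is exactly where you yourself flag it. In the inductive step (remove $u,v$, apply the IH to $D'=D-\{u,v\}$ with $M'=M\setminus\{e\}$, re-insert), you would need an edge $\{w_i,w_{i+1}\}$ of $C'$ such that \emph{three} new constraints hold simultaneously: $\{w_i,u\}$ and $\{v,w_{i+1}\}$ must avoid all remaining edges of $C'$ and all of $M$, \emph{and} the matching edge $e=\{u,v\}$ itself — which becomes an edge of $C$ — must avoid the remaining edges of $C'$. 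The inductive hypothesis says nothing about $e$ versus $C'$; $C'$ was chosen to avoid $M'$, not $e$, and in a (non-geometric) convex drawing there is no reason the returned $C'$ should stay clear of the curve representing $e$. The ``lens-shaped empty region'' argument is a heuristic carried over from the straight-line picture; the convex-side condition for triangles does not by itself produce such a region for a pair of vertices that have been deleted, and you give no lemma that would supply it. The $h$-convex refinement of Theorem~\ref{theorem:convex_HC} constrains the cycle's order around one vertex $v_\star$, which does not control crossings with a general matching.

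The fallback of inducting on $n$ by removing a vertex $w$ faces the same difficulty: the re-insertion of $w$ must respect both the cycle returned by the IH and the matching, and neither the layered sweep of Theorem~\ref{theorem:convex_HC} nor the choice of a ``boundary'' matching edge is shown to make this compatible. The base cases $|M|\le 1$ are fine (they really do follow from Theorem~\ref{theorem:convex_HC}), but that theorem's strength — a cycle avoiding an entire spanning star — does not compose across an induction on matching edges, because each application commits to a different $v_\star$ and gives no control over the edges incident to previously chosen centres. As it stands this is a research plan with a named missing lemma, not a proof, and the conjecture remains open.
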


\begin{figure}[htb]
	\centering
	
	%\begin{subfigure}[b]{0.45 \textwidth}
	\centering
	\includegraphics[scale =0.8]
	{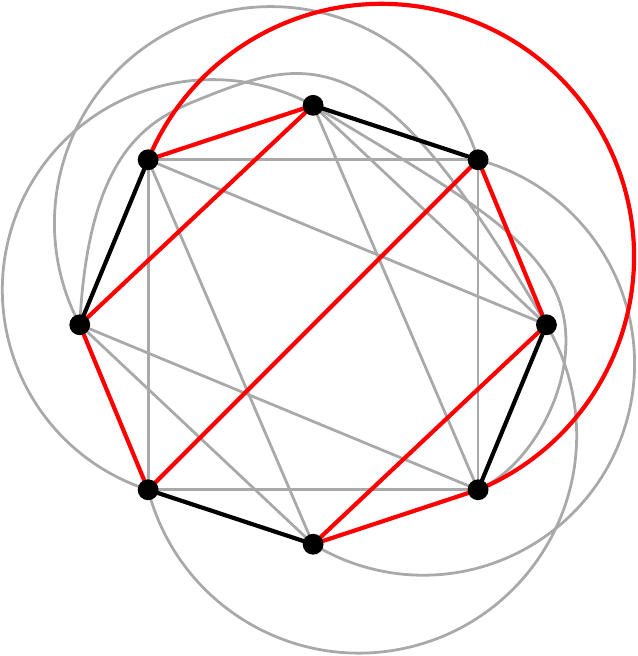}
	%\caption{ }
	%\label{fig:HCnonextendable}
	%\end{subfigure}
	\quad
	\quad
	%\begin{subfigure}[b]{0.45 \textwidth}
	\centering
	\includegraphics[scale = 0.7]{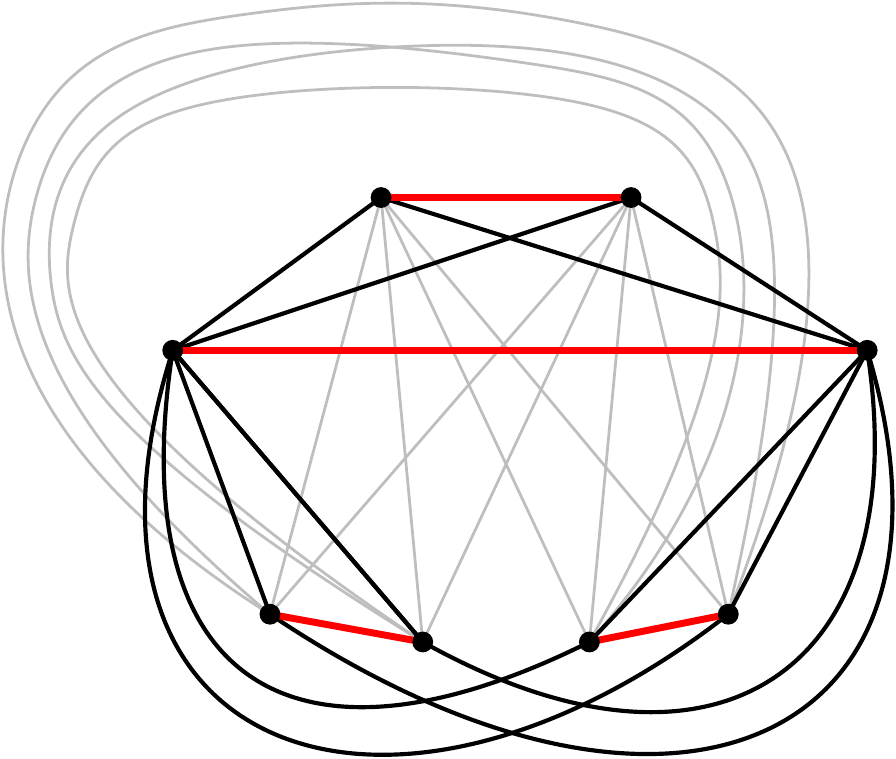}
	%\caption{ }
	%\label{fig:HTP_nonconvex_counter}
	%\end{subfigure}
	
	\caption{
		(left)~A plane Hamiltonian cycle of a drawing of $K_8$ (red) which can only be extended by 4 edges (black);
		(right)~A perfect matching (red)
		in a non-convex drawing of $K_8$, 
		which does not contain a plane Hamiltonian cycle not crossing the matching edges. Edges which cross matching edges are marked in grey.
	}
	\label{fig:HTP_counter}
	
\end{figure}

\paragraph{Hamiltonian paths with a prescribed edge}
In general it is not possible to find a Hamiltonian cycle containing a
prescribed edge. For example, the edge $\{1,3\}$ of $C_5$
depicted in Figure~\ref{fig:convex_C5_and_twisted_T5} is not contained in any
Hamiltonian cycle. Hence, prescribing an edge of a Hamiltonian cycle is not
even possible in the geometric setting. 
Instead we only ask
for a Hamiltonian path containing a prescribed edge. Again there is an easy example:
consider the edge $\{1,5\}$ in $T_5$ depicted in Figure~\ref{fig:convex_C5_and_twisted_T5}. It is not
contained in any Hamiltonian path. 
However, prescribing an edge of a Hamiltonian path is possible in convex drawings.
The proof follows from Theorem 1 and is deferred to Appendix~\ref{app:HP_prescribed_edges}.

\begin{theorem}
	\label{thm:HP_prescribed_edges}
	Let $D$ be a convex drawing of $K_n$ and let $e$ be an edge of~$D$.  Then $D$
	has a plane Hamiltonian path containing the edge~$e$.
\end{theorem}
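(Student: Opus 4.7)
The plan is to derive Theorem~\ref{thm:HP_prescribed_edges} directly from Theorem~\ref{theorem:convex_HC} by invoking the latter with a carefully chosen $v_\star$ and then surgically modifying the resulting Hamiltonian cycle so that it uses the prescribed edge $e$.

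Write $e = \{u,v\}$ and apply Theorem~\ref{theorem:convex_HC} with $v_\star = u$. This yields a plane Hamiltonian cycle $C$ of $D$ which does not cross any edge incident to $u$; in particular, $C$ does not cross the prescribed edge $e$. Label the cyclic sequence of vertices along $C$ as $u = w_0, w_1, \ldots, w_{n-1}, w_n = u$, and let $k$ be the index with $v = w_k$. If $e \in E(C)$ (i.e.\ $k = 1$ or $k = n{-}1$), then removing any edge of $C$ other than $e$ leaves a plane Hamiltonian path that contains $e$, and we are done.

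Assume therefore that $e \notin E(C)$, so $2 \le k \le n{-}2$. The candidate Hamiltonian path I would construct is
\[
 P \;=\; w_{k-1}, w_{k-2}, \ldots, w_1, u, v, w_{k+1}, w_{k+2}, \ldots, w_{n-1},
\]
which traverses the subpath $u, w_1, \ldots, w_{k-1}$ of $C$ in reverse, then uses the edge $e$ to cross from $u$ to $v = w_k$, and finally follows the remaining subpath of $C$ from $v$ to $w_{n-1}$. By construction $P$ visits every vertex exactly once, contains the prescribed edge $e$, and uses exactly the edges of $C$ other than $\{u, w_{n-1}\}$ and $\{v, w_{k-1}\}$, plus the edge $e$.

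It remains to argue that $P$ is plane. The edges of $C$ used in $P$ do not pairwise cross because $C$ is a plane cycle, and the edge $e$ is not crossed by any edge of $C$ by the defining property of Theorem~\ref{theorem:convex_HC}; hence $P$ has no crossings. The only potential obstacle here is the range of $k$ and the edge-case $e \in E(C)$, both of which are handled cleanly by the case split above. Overall this is a short reduction to Theorem~\ref{theorem:convex_HC}, and the only non-trivial input is the fact that the Hamiltonian cycle supplied by that theorem is compatible with $e$ in the sense of not crossing it.
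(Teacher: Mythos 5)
Your proposal is correct and takes essentially the same approach as the paper: both apply Theorem~\ref{theorem:convex_HC} with $v_\star = u$, then reroute the resulting Hamiltonian cycle through the edge $e$ (traversing one arc in reverse from the neighbor before $v$, passing through $u$ and $e$, then continuing along the other arc to the end), using the fact that $e$ is a star edge and hence uncrossed by $C$. The only cosmetic difference is that the paper justifies planarity by noting that all edges of the new path lie in the plane subgraph $C \cup \text{(spanning star at }u)$ guaranteed by Theorem~\ref{theorem:convex_HC}, whereas you argue directly that $C \cup \{e\}$ is crossing-free; both are valid.
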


\subsection{Uncrossed edges}
\label{sec:uncrossededges}
In this section we focus on edges which
are not crossed by any other edges of a given drawing of~$K_n$.  In 1964,
Ringel~\cite{Ringel1964} proved that in every simple drawing of $K_n$ there
are at most $2n-2$ edges without a crossing.  Later, Harborth and Mengersen
\cite{HarborthMengersen1974} studied the minimal number of uncrossed edges and
showed that every simple drawing of $K_n$ with $n\le 7$ contains an uncrossed
edge and presented drawings for $n \geq 8$ where every edge is crossed.
However, these drawings are not convex.
Using the SAT framework (see Appendix~\ref{uncrossed}), we 
reproduced these
results.  Moreover, when restricted to convex drawings, all
drawings have an uncrossed edge for $n \le 10$. 
For $n = 11, \ldots, 21$, there are convex
drawings where every edge is crossed.  These examples are indeed
\mbox{$h$-convex}.  Thus, we conjecture:

\begin{conj}
	\label{conjecture:hconvex_aec}
	For every $n \ge 11$ there exists an $h$-convex drawing on
	$K_n$ where every edge is crossed.
\end{conj}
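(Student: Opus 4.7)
The plan is to prove the conjecture constructively via induction on $n$, using the SAT examples for $n \in \{11,\ldots,21\}$ as seeds and establishing a vertex-insertion operation that preserves both $h$-convexity and the property that every edge is crossed. This converts an existence question on an infinite family into a verifiable finite statement about one local construction step.

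The first step is to pick a base drawing $D^{(n_0)}$ for some small $n_0\ge 11$ from the SAT enumeration and to analyze its rotation system. I would single out a vertex $v^\star$ together with a "safe" sector $\sigma$ between two consecutive neighbors in the rotation $\pi_{v^\star}$, chosen so that each of the edges incident to $v^\star$ is crossed by at least one edge whose support lies outside a small disk around $v^\star$. Because $D^{(n_0)}$ has every edge crossed, such a pair $(v^\star,\sigma)$ can be identified. The pair $(v^\star,\sigma)$ will serve as the insertion site in the inductive step, and I would arrange for the inductive construction to reproduce a pair of the same type in the enlarged drawing so that the process can be iterated.

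Second, I would define the vertex-insertion operation: given an $h$-convex drawing $D$ of $K_n$ in which every edge is crossed, insert a new vertex $w$ very close to $v^\star$ inside the sector $\sigma$. In terms of rotation systems, $\pi_w$ is the cyclic order obtained from $\pi_{v^\star}$ by inserting $v^\star$ at the position corresponding to $\sigma$, and for every other vertex $y$ the rotation $\pi_y$ is updated by inserting $w$ immediately next to $v^\star$, on the side dictated by $\sigma$. The edges $\{w,y\}$ with $y\ne v^\star$ are drawn as near-parallel copies of $\{v^\star,y\}$ and therefore inherit all crossings of the corresponding $v^\star$-edges; the short edge $\{w,v^\star\}$ inside $\sigma$ can be routed so that at least one edge of $D$ that previously crossed a $v^\star$-edge adjacent to $\sigma$ now also crosses $\{w,v^\star\}$.

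The hard part is to show that the insertion introduces no forbidden subconfiguration from the set $\{\obstructionFour,\obstructionFiveA,\obstructionFiveB,\obstructionconvexFiveA,\obstructionconvexFiveB,\obstructionhconvexSix\}$ characterizing drawable, convex and $h$-convex rotation systems (Theorem~\ref{theorem:rotsys_5tuples_characterization} and the results of Arroyo et al.\ cited above). Every forbidden pattern on the enlarged vertex set must use the new vertex $w$, so the analysis reduces to 5- and 6-element subsets containing $w$. By the twinning rule, such a subset differs from the subset in which $w$ is replaced by $v^\star$ only through the local "split" of $v^\star$ into $\{v^\star,w\}$, which perturbs the rotation of only one vertex at a time. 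I would carry out a case analysis by the position of $v^\star$ and of the neighbors spanning $\sigma$ inside the forbidden pattern, and show in each case that the new 5- or 6-tuple is equivalent to a subconfiguration of $D$. To avoid a lengthy hand case analysis, I would follow the methodology of Section~\ref{sec:all_encoding} and add the insertion rule as extra clauses to the SAT encoding, asking the solver to certify that no choice of $(v^\star,\sigma)$ admitted by the base drawings can produce a forbidden configuration. Together with the SAT-verified base cases for $n=11,\ldots,21$, the iterated application of this insertion then establishes Conjecture~\ref{conjecture:hconvex_aec} for all $n\ge 11$.
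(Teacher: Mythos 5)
First, note the status of this statement in the paper: Conjecture~\ref{conjecture:hconvex_aec} is not proved there at all. The authors only report SAT-generated $h$-convex examples with every edge crossed for $n=11,\ldots,21$ and pose the general statement as an open conjecture, so your proposal is an attempt to go strictly beyond the paper. Judged on its own terms, it has a genuine gap at exactly the point where the difficulty lives: the edge $\{w,v^\star\}$ between the inserted twin and its parent. If $w$ is placed inside a small disk around $v^\star$ that meets only the initial segments of edges incident to $v^\star$ (which is what "near-parallel copies" forces), then every edge that could enter that disk is adjacent to $\{w,v^\star\}$ and hence may not cross it, so the short twin edge is uncrossed by construction. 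To get it crossed you must reroute it on a detour that crosses some independent edge $e$ exactly once; topologically this forces the curve to go around an endpoint $x$ of $e$, and since $x$ is adjacent to both $w$ and $v^\star$ in $K_{n+1}$, the detour must thread past $x$ without touching $\{x,w\}$ or $\{x,v^\star\}$, while crossing every other edge at most once and without creating $\obstructionconvexFiveA$, $\obstructionconvexFiveB$, or $\obstructionhconvexSix$. You assert this routing exists ("can be routed so that\ldots") but give no argument, and convexity is precisely the kind of global constraint such a detour tends to violate; this is the heart of the problem, not a detail.

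The reduction of the obstruction check is also not sound as stated. Your "replace $w$ by $v^\star$" argument only handles 5-/6-tuples containing $w$ but not $v^\star$; the tuples containing both twins are genuinely new, and they are exactly the ones whose rotation data depend on how the rerouted edge $\{w,v^\star\}$ is drawn, so they cannot be read off from subconfigurations of $D$. Relatedly, running the solver only over the base drawings for $n\le 21$ does not certify the inductive step: after one insertion the host drawing is no longer a base drawing, so you would need a purely local preservation lemma (a finite check over all possible induced rotation systems on the roughly seven vertices interacting with $w$, $v^\star$, the sector, and the crossing edge, in the spirit of Lemma~\ref{lem:twobadedges1}), and that lemma is precisely what is missing. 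Until the crossed-twin-edge routing and the two-twin subconfigurations are controlled, the induction does not close.
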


\subsection{Empty triangles}
\label{sec:emptytriang}

An \emph{empty triangle} in a simple drawing of the $K_n$ is a triangle
induced by three vertices such that the interior of one of its two sides does
not contain any vertex.  
Let $\triangle(n)$ denote
the minimum number of empty triangles in a simple drawing of $K_n$. Harborth \cite{Harborth98} proved
the bounds $2 \le \triangle(n) \le 2n-4$ and asked whether
$\triangle(n) = 2n-4$.  Ruiz--Vargas \cite{RuizVargas2013} proved  $\triangle(n) \ge \frac{2n}{3}$. 
The bound was further improved by Aichholzer et al.\
\cite{AichholzerHPRSV2015} to $\triangle(n) \ge n$.
Moreover, \'Abrego et al.\ \cite{AichholzerHPRSV2015,AbregoAFHOORSV2015} used
enumeration to show that $\triangle(n) = 2n-4$ for all
$n \leq 9$.  This result was reproduced with the SAT framework.

Aichholzer et al.\ \cite{AichholzerHPRSV2015} defined \emph{lucky} vertices, show that drawings with a lucky vertex have $2n-4$ empty triangles
and present a drawing $\Pi_{8}^\text{unlucky}$ of $K_8$ without lucky vertices, see \cite[Figure~7]{AichholzerHPRSV2015}.
%Note that $\Pi_{8}^\text{unlucky}$ contains more than $2n-4$ empty triangles.  
With the SAT framework, we show that $\Pi_{8}^\text{unlucky}$ is the only
unlucky example on at most $9$ vertices.  For $n=10$ there is another
unlucky example, which contains $\Pi_{8}^\text{unlucky}$. % (depicted in Figure~\ref{fig:n10_unlucky}),
We dare the following conjecture.

\begin{conj}%[Unlucky configurations]
	\label{conjecture:unlucky}
	Every unlucky drawing contains $\Pi_{8}^\textnormal{unlucky}$.
\end{conj}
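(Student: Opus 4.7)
The plan is to reduce Conjecture~\ref{conjecture:unlucky} to a local-to-global statement about unluckiness. Recall from~\cite{AichholzerHPRSV2015} that whether a vertex $v$ of a simple drawing is lucky is determined by the combinatorial pattern of the empty triangles incident to~$v$, and the presence of one lucky vertex already forces the total number of empty triangles to be exactly~$2n-4$. The first step is therefore to extract, for each vertex $v$ of an unlucky drawing $D$, a minimal witness set $W(v)\subseteq V(D)$ containing $v$ together with the other vertices that appear in the pattern violating the luckiness condition at~$v$. The existence of $\Pi_8^{\text{unlucky}}$ on eight vertices, together with the fact that no unlucky drawing exists for $n\le 7$, strongly suggests that $|W(v)|\le 8$ in all cases.

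The core of the proof would be a coalescence lemma: if $D$ has no lucky vertex, then one single $8$-element set $U\subseteq V(D)$ simultaneously serves as a witness $W(v)$ for every $v\in U$. Once this is established, $D[U]$ is itself an unlucky drawing on $8$ vertices, and the SAT-based classification for $n=8$ reported in Section~\ref{sec:emptytriang} identifies $D[U]$ with $\Pi_8^{\text{unlucky}}$, yielding the desired containment. The plausibility of such a lemma is supported by the observation in the paper that the only unlucky drawing found for $n=10$ already contains $\Pi_8^{\text{unlucky}}$, meaning the witness sets in the known examples do coalesce.

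To prove the coalescence lemma, I would combine a counting argument with a structural one. The counting part quantifies how many vertices outside $W(v)$ can be added without accidentally turning $v$ lucky; the structural part uses the convex/empty-triangle language of Observation~\ref{observation:basics} and Section~\ref{sec:emptytriang} to match the incident patterns of different $W(v)$'s across vertices. An induction on $n$ is natural: deleting a vertex $u\notin W(v)$ for some~$v$ preserves the unluckiness of $v$, so the smaller drawing is still unlucky and contains $\Pi_8^{\text{unlucky}}$ by induction; the hard case is when every vertex belongs to some $W(v)$, which is precisely where the coalescence lemma must be invoked.

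The main obstacle is exactly this coalescence step: a priori different vertices might advertise their unluckiness via unrelated sub-configurations, and the conjecture is essentially the assertion that they cannot. Pushing this through likely requires a quantitative strengthening of Aichholzer et al.'s lucky-vertex argument, saying that ``near-luckiness'' at $v$ is controlled by a bounded portion of the rotation around~$v$. As a complementary and immediately actionable step, I would extend the SAT verification in Appendix~\ref{plane_substructures}-style to $n=11,12$ by encoding the predicate ``$D$ contains $\Pi_8^{\text{unlucky}}$'' as a disjunction over $8$-element subsets and asking for an unlucky drawing avoiding it; a model would refute the conjecture, while continued unsatisfiability would identify the combinatorial invariants that a structural proof must capture.
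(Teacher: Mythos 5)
There is nothing in the paper to compare against: Conjecture~\ref{conjecture:unlucky} is stated as an open conjecture, supported only by computational evidence (the SAT enumeration showing that $\Pi_{8}^{\text{unlucky}}$ is the only unlucky drawing on at most $9$ vertices, and that the one further unlucky example found at $n=10$ contains it). Your proposal does not close this gap, and you essentially acknowledge this yourself: the ``coalescence lemma'' --- that all per-vertex witness sets can be taken to be one common $8$-element set $U$ with $D[U]$ unlucky --- is left entirely unproved, and as you note it is essentially a restatement of the conjecture. A plan whose central lemma is the conjecture itself is not a proof.

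Beyond that, two concrete steps would fail as written. First, the witness-set formalism presupposes that non-luckiness of a vertex $v$ is certified by a bounded induced subconfiguration ($|W(v)|\le 8$) and that this certificate survives passing to induced subdrawings; neither is justified. Luckiness of $v$ depends on the whole drawing --- in particular on which triangles at $v$ are empty --- and emptiness of a triangle is \emph{not} preserved in the other direction: deleting vertices can only create new empty triangles, so a vertex that is unlucky in $D$ may well be lucky in $D[U]$ or in $D-u$. Second, your induction step claims that deleting a vertex $u\notin W(v)$ keeps the smaller drawing unlucky; at best it could preserve the non-luckiness of the single vertex $v$, while every other vertex may become lucky after the deletion, so the induction hypothesis cannot be applied. (A small further inaccuracy: a lucky vertex forces \emph{at least} $2n-4$ empty triangles, not exactly $2n-4$.) The only actionable and sound part of your proposal is the final suggestion to extend the SAT search to $n=11,12$ with the predicate ``contains $\Pi_{8}^{\text{unlucky}}$'' excluded; that would strengthen the evidence (or refute the conjecture), but it is verification, not a proof.
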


\iffalse
\begin{figure}[htb]
	\centering
	\includegraphics[width=0.5\textwidth]{}
	\caption{An unlucky drawing on $n = 10$ vertices which contains $\Pi_{8}^{\text{unlucky}}$ from \cite[Figure~7]{AichholzerHPRSV2015}.}
	\label{fig:n10_unlucky}
\end{figure}
\fi

Arroyo et al.\
\cite{ArroyoMRS2017_pseudolines} showed that every convex drawing of $K_n$ contains at least
$\triangle_{\text{conv}}(n) \ge \Omega(n^2)$ empty triangles, which is asymptotically tight as
there exist geometric drawings with $(2+o(1))n^2$ empty
triangles~\cite{BaranyFueredi1987}. Determining the value of $\triangle_{\text{geom}}(n)$, i.e., the 
minimum number of empty triangles in geometric drawings of~$K_n$, is a 
challenging problem cf.~\cite[Chapter~8.4]{BrassMoserPach2005};
see \cite{ABHKPSVV2020_JCTA} for the currently best bound. 
In a recent paper, García et al.\
\cite{GarciaTejelVogtenhuberWeinberger2022} showed that generalized twisted drawings have exactly
$2n-4$ empty triangles.

\section{Sketch of the proof of Theorem~\ref{theorem:convex_HC}} 
\label{sec:sketch}

In this section we give a sketch of the proof of
Theorem~\ref{theorem:convex_HC}.  The full proof is deferred to
Appendix~\ref{sec:convex_HC_proof}.  We prove the existence of the Hamiltonian
cycle in a constructive way and present a quadratic time algorithm which, for
a given convex drawing $D$ of the complete graph $K_n$ and a vertex
$v_{\star}$, computes a plane Hamiltonian cycle which does not cross edges
incident to $v_{\star}$.  Since this statement is independent from the
labeling of the vertices, we assume throughout the section that
$v_{\star} = n$ and that the other vertices are labeled from $1$ to $n-1$ in
cyclic order around~$v_{\star}$, i.e., according to the rotation $\pi_{n}$.
For the labels of vertices $1$ to $n-1$ we use the arithmetics modulo $n-1$. 
Since the convexity of a drawing in the plane is independent of the choice
of the outer face, we choose a drawing where $v_{\star}$ belongs to the outer face. 

Denote the vertex $v_{\star}=n$ as \emph{star vertex} and edges incident to~$v_{\star}$ as
\emph{star edges}.  Note that every Hamiltonian cycle contains exactly two
star edges.  We let $\hat{E}$ be the set of non-star edges and call an
edge $e \in \hat{E}$ \emph{star-crossing} if it crosses a star edge.
For the Hamiltonian cycle we only use edges which are not star-crossing.
Let $E^\circ=\hat{E} \setminus \{e \in \hat{E} : e \text{ is star-crossing}\}$
be the set of these candidate edges.
The following lemma follows from the properties of a simple drawing.

\begin{restatable}{observation}{lemmaHCisPlane}
	\label{lem:HCplane}
	Let $e = \{u,v\}$ and $e' = \{u',v'\}$ be independent edges from
	$E^\circ$. If 
	$u,v,u',v'$ appear in this cyclic order
	around~$n$, then $e$ and $e'$ do not cross.
\end{restatable}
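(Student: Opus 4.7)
The plan is to argue by contradiction via a Jordan-curve argument anchored at the star vertex $n$. First I would form the closed curve $J := \{n,u\} \cup e \cup \{v,n\}$. Because $e \in E^\circ$ is not star-crossing, $e$ meets the two star edges $\{n,u\}$ and $\{n,v\}$ only at the common endpoints $u, v$, and the two star edges meet only at $n$; hence $J$ is a simple closed curve on the sphere and therefore bounds two open topological disks $D_1$ and $D_2$.

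The key step is to show that $u'$ and $v'$ lie in the same closed disk. Near $n$, the star edges emanate in the cyclic order $\pi_n$, so $\{n,u\}$ and $\{n,v\}$ split that order into two arcs; by the cyclic-order hypothesis $u,v,u',v'$, both $u'$ and $v'$ lie in one of these arcs, call it $W_2$. Every star edge $\{n,w\}$ with $w \in W_2$ therefore leaves $n$ on the same fixed side of $J$, say into $D_2$, and cannot escape $\overline{D_2}$: it cannot cross $e$ (as $e \in E^\circ$), it cannot cross the other star edges (distinct star edges share only $n$), and its interior contains no vertex. Consequently $u', v' \in \overline{D_2}$.

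To finish, I would suppose for contradiction that $e$ and $e'$ cross. In a simple drawing they then cross in exactly one point $p$. Tracing $e'$ from $u'$, the curve starts in $\overline{D_2}$, passes through $p$ into $\overline{D_1}$, and must end at $v' \in \overline{D_2}$, so it has to cross $J$ a second time. Since it cannot cross $e$ again, it must cross one of the star edges $\{n,u\}$ or $\{n,v\}$, contradicting $e' \in E^\circ$.

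I expect the main obstacle to be the second step, namely converting the combinatorial cyclic-order hypothesis into the topological statement that $u'$ and $v'$ lie on the same side of $J$; once this is in place, the remainder is a standard parity/exit argument for simple closed curves.
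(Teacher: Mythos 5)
Your proof is correct and follows essentially the same route as the paper's: form the closed curve through $n$, $u$, $v$, use the cyclic-order hypothesis together with the fact that $e$ is not star-crossing to place $u'$ and $v'$ in the same disk, and then apply a Jordan-curve parity argument to show that a crossing of $e$ with $e'$ would force $e'$ to cross a star edge.
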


Particular focus will be on the edges $e = \{v, v+1\}$ with $1\leq v < n$. 
Such an edge is called \emph{good} if it is
not star-crossing. Otherwise, if
edge $b = \{v, v+1\}$ crosses a star edge $\{w,v_{\star}\}$, then we say that $b$
is a \emph{bad edge} with \emph{witness} $w$.

If there is at most one bad
edge $\{v,v+1\}$, then the $n-2$ good edges together with the
two star edges $\{v,v_{\star}\}$ and $\{v+1,v_{\star}\}$ form a
Hamiltonian cycle which visits the non-star in the order of the rotation around the star vertex.

For the proof of the theorem it is critical to understand the structure of bad edges. 
A drawing with more than one bad edge is shown in
Figure~\ref{fig:rotsys_obstructions_hconvex}, where
the choice of $v_\star=3$ yields the two bad edges $\{2,6\}$ and $\{4,6\}$.
We start with a lemma concerning the convex side of the triangle
spanned by~$v_{\star}$ and a bad edge.
If $b = \{v, v+1\}$ is a bad edge with witness $w$, then  the side
of the triangle spanned by $\{v, v+1, v_{\star}\}$ that contains $w$ is not
convex, since $v_{\star}$ and $w$ both belong to this side but the edge
$\{v_{\star}, w\}$ is not fully contained in this side.

\begin{restatable}{observation}{obsConvexSide}
	\label{obs:ConvexSide}
	Let $b = \{v, v+1\}$ be a bad edge, then the side
	of the triangle spanned by $\{v, v+1, v_{\star}\}$ containing the witnesses is not convex. 
\end{restatable}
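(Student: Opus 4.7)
The plan is to formalize the hint already present in the text: a bad edge $b=\{v,v+1\}$ with witness $w$ carries a triangle $T$ whose non-convex side contains both $v_\star$ and $w$, witnessed by the star edge $\{w,v_\star\}$ crossing $b$.

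First I would set up the geometry. Consider the triangle $T$ induced by $\{v,v+1,v_\star\}$. Because $D$ is a simple drawing, the three edges of $T$ pairwise meet only at their common endpoints, so $T$ is crossing-free and divides the sphere into exactly two open regions $S_1$ and $S_2$ (the two sides). Fix a witness $w$, i.e.\ the star edge $s=\{w,v_\star\}$ crosses $b$. I would then trace $s$ starting from $v_\star$: the two edges $\{v,v_\star\}$ and $\{v+1,v_\star\}$ meet at $v_\star$, and locally they cut a small punctured neighborhood of $v_\star$ into two angular sectors, one lying in $S_1$ and the other in $S_2$. Since $s$ shares the endpoint $v_\star$ with both of these triangle sides, simplicity forbids it from crossing either of them, so $s$ enters exactly one of the two sides, say $S_2$, and stays in the closure $\overline{S_2}$ until it meets the interior of $b$.

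Next I would locate $w$. Because $s$ crosses $b$ transversally, it leaves $\overline{S_2}$ at the crossing point and continues into $S_1$. By the simple-drawing hypothesis, $s$ can cross $b$ at most once and, as noted above, cannot cross the other two sides of $T$ at all. Consequently $s$ must remain in $\overline{S_1}$ after the crossing, and its second endpoint $w$ therefore lies in $\overline{S_1}$. (The same argument applied to any other witness $w'$ shows every witness lies in $\overline{S_1}$, which justifies the plural wording of the statement.)

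Finally I would derive non-convexity. The closure $\overline{S_1}$ contains both $w$ and $v_\star$ (the latter as a boundary vertex). For $S_1$ to be a convex side, the edge $\{v_\star,w\}=s$ would have to be drawn entirely inside $\overline{S_1}$. But by construction a portion of $s$ — the arc between $v_\star$ and the crossing point with $b$ — lies in the open region $S_2$. Hence $s$ escapes $\overline{S_1}$, and $S_1$ is not convex, as claimed. The only delicate step is the local analysis near $v_\star$ that pins down which side $s$ initially enters and then the fact that each of the three edges of $T$ is crossed by $s$ at most once; both are immediate from the simple-drawing axioms, so I do not anticipate a genuine obstacle beyond writing this carefully.
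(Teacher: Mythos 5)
Your proposal is correct and takes essentially the same approach as the paper: the paper condenses the argument into a single sentence before the observation, stating that the side containing the witness $w$ also contains $v_{\star}$ but the edge $\{v_{\star},w\}$ is not fully contained in that side, whence it is not convex. Your write-up simply makes explicit the routine tracing of $s=\{w,v_\star\}$ (it leaves $v_\star$ into one side, crosses only $b$, and ends in the other side) that the paper leaves implicit.
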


Every triangle in a convex drawing has at least one convex side. Thus, the triangle spanned by $\{v, v+1, v_{\star}\}$ has exactly one convex side which is the side not containing the witnesses.

The following lemma will be the key to understand the structure of bad edges.
To avoid a large case distinction, we used SAT to show that there is no convex
drawing on at most 7 vertices which violates the statement. Since convex
drawings have a hereditary structure 
(i.e., the drawing stays convex when a vertex and its incident edges are removed), 
this proves the lemma.

\begin{restatable}[Computer]{lemma}{lemmaBadEdgesNested}
	\label{lem:twobadedges1}
	Let $b = \{v, v+1\}$ and $b' = \{v', v'+1\}$ be two distinct bad edges with
	witnesses $w$ and $w'$, respectively.  
	Then $w \neq w'$. 
	Further $w',w,v,v'$ appear in this or the reversed cyclic
	order around~$v_{\star}$.  
	Moreover, each of the triangles $\{v,v+1,v_{\star}\}$ and $\{v',v'+1,v_{\star}\}$ is in the
	convex side of the other.
\end{restatable}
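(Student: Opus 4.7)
The plan is a computer-assisted proof that reduces the lemma to a finite check. Every vertex that appears in either the hypothesis or the conclusion comes from the list $v_\star, v, v+1, v', v'+1, w, w'$, so the statement involves at most $7$ distinct vertices (possibly fewer, since for instance $w$ may coincide with $v'$ or $v'+1$). The goal is therefore to reduce the lemma to a statement about convex drawings on at most~$7$ vertices, and then let the SAT framework finish the job.

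The reduction uses the hereditary nature of convex drawings: deleting a vertex and its incident edges from a convex drawing of $K_n$ yields a convex drawing of $K_{n-1}$ \cite{ArroyoMRS2017_convex}. Let $D'$ be the sub-drawing of $D$ induced by the at most $7$ distinguished vertices above; then $D'$ is convex. Since crossings in a simple drawing are determined by $4$-vertex sub-configurations (Observation~\ref{observation:basics}(\ref{item:obstructionfour_crossingsdetermined})), the two crossings $\{v,v+1\}\times\{w,v_\star\}$ and $\{v',v'+1\}\times\{w',v_\star\}$ are inherited by $D'$, so both edges remain bad in $D'$ with the same witnesses. Moreover all three conclusions are invariants of $D'$: distinctness $w\neq w'$ and the cyclic order of $w',w,v,v'$ around $v_\star$ are read off from the restriction of $\pi_{v_\star}$ to the chosen vertices, and the convex side of each triangle $\{v,v+1,v_\star\}$ and $\{v',v'+1,v_\star\}$ in $D'$ is the same as in $D$, namely the side not containing the respective witness (Observation~\ref{obs:ConvexSide}).

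It therefore suffices to verify the implication on every convex drawing on at most $7$ vertices. The plan is to extend the encoding of Section~\ref{sec:all_encoding} with auxiliary clauses that (i) single out distinguished vertices playing the roles of $v_\star, v, v+1, v', v'+1, w, w'$, (ii) assert the hypothesis in the form of the two prescribed crossings, and (iii) negate at least one of the three conclusions. For each $n\in\{5,6,7\}$ and each pattern of possible coincidences among the seven distinguished vertices, I would call the SAT solver; the expected outcome is \textsc{Unsat} in every case, with the unsatisfiability proofs certified by DRAT-trim.

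The main obstacle I expect is not mathematical but bookkeeping: writing clean CNF encodings of the negations of the three conclusions in terms of the rotation variables $X_{aib}$, in particular the convex-side relation between the two triangles, and enumerating the coincidence patterns systematically so that no case is overlooked. Once these encodings are in place, the SAT search space is tiny and the verification is routine.
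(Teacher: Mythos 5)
Your approach is essentially the paper's: use hereditary convexity to pass to the induced subdrawing on the at most $7$ distinguished vertices, observe that the hypotheses (the two bad-edge crossings) and the conclusions are all inherited, and dispatch the residual finite check to the SAT framework, verified by DRAT. The one place where your plan diverges from the paper is the treatment of the ``moreover'' (mutual convex-side containment) part. The paper does \emph{not} encode this conclusion directly; it runs three small SAT checks showing that neither $v'$ nor $v'+1$ can lie on the non-convex side of the triangle $\{v,v+1,v_\star\}$, and then finishes by hand: since $v',v'+1$ lie in the convex side $S$, convexity of $S$ forces $b'\subseteq S$; since $b'$ still crosses $\{w',v_\star\}$, a symmetric application of Observation~\ref{obs:ConvexSide} to the triangle $\{v',v'+1,v_\star\}$ gives the reverse containment. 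You instead propose to encode the side-of-triangle relation directly (the ingredients already exist in the paper's empty-triangle encoding) and negate all three conclusions in a single CNF. This is sound — containment of $\{v',v'+1,v_\star\}$ in the closed convex side $\bar S$ of $\{v,v+1,v_\star\}$ reduces, by convexity of $S$, to the vertex-membership facts $v',v'+1\in\bar S$, and which side a vertex lies on is a $4$-tuple invariant inherited by the subdrawing — and it has the mild advantage of being fully automatic, at the cost of a slightly heavier encoding. Your handling of vertex coincidences by explicit case enumeration is the right bookkeeping; the paper does this implicitly by running $n=5,6,7$ (after discarding $n\le 4$, where at most one bad edge exists).
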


This lemma directly implies that there is at most one bad edge
in an $h$-convex drawing, which implies the moreover part of \autoref{theorem:convex_HC}.

For the case that a drawing has two or more bad edges the lemma implies that there is a partition of the vertices $1, \ldots, n-1$ into two blocks
such that in the rotation around $v_{\star}$ both blocks of the partition are consecutive, one contains the vertices of all bad edges 
and the other contains all the witnesses. 
Let $b=\{v,v+1\}$ be the bad edge whose vertices are last in the
clockwise order of its block. 
We cyclically relabel the vertices such that $b$ becomes $\{n-2,n-1\}$. This makes the labels of all witnesses smaller than
the labels of vertices of bad edges. In particular we then have the following two properties:

\begin{itemize}
	\item[\emph{(sidedness)}] If $\{v,v+1\}$ is a bad edge with
	witness $w$, then $w < v$.
	\item[\emph{(nestedness)}] If $b = \{v, v+1\}$ and $b' = \{v', v'+1\}$ are
	bad edges with respective witnesses $w$ and $w'$ and if $v < v'$, then
	$w' < w$.
\end{itemize}

In addition we can apply a change of the outer face (via stereographic
projections) so that the vertex $v_{\star}$ and the initial segments of the edges $\{v_\star,1\}$ and
$\{v_\star,n-1\}$ belong to the outer face.  

The nesting property implies that we can label the bad edges 
as $b_1, \ldots, b_m$ for some $m \ge 2$, such that if $b_i = \{v_i,v_i +1\}$, then
$1 < v_1 < v_2 < \ldots < v_m = n-2$.
Moreover, we denote by $w_i^L$ and $w_i^R$ the leftmost (smallest) and the rightmost (largest) witness of the bad edge $b_i$, respectively.
Then 
$ 1 \leq w_{m}^L \leq w_{m}^R  < w_{m-1}^L \leq w_{m-1}^R < \ldots
< w_{1}^L \leq w_{1}^R  $.
Sidedness additionally implies $w_{i}^R < v_i$.
Figure~\ref{fig:layeringbadedgeswithnotation} shows the situation for two bad edges
$b_i$ and $b_{i+1}$.
Note that $v_i+1 = v_{i+1}$ is possible. 

\begin{figure}[htb]
	\centering
	\includegraphics[scale =0.9]{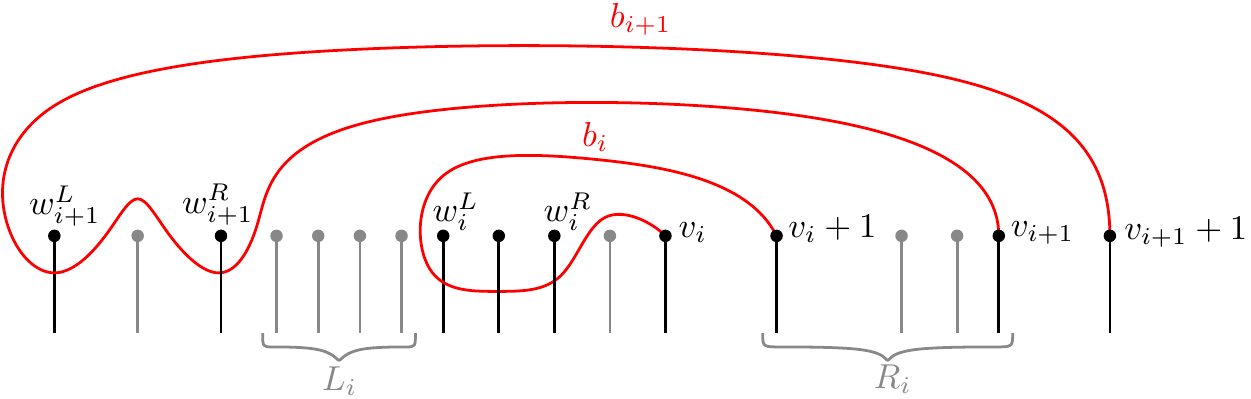}
	\caption{An illustration of sidedness and nesting for two bad edges.}
	\label{fig:layeringbadedgeswithnotation}
\end{figure}

Let $L_i = \{x \in [n-1]: w_{i+1}^R < x < w_i^L\}$ and
$R_i = \{x \in [n-1]: v_{i} +1\leq x \leq v_{i+1}\}$ denote the left and
the right blocks of vertices between two consecutive bad edges $b_i$ and
$b_{i+1}$, see Figure~\ref{fig:layeringbadedgeswithnotation}.

With the following lemma whose proof is deferred to the appendix
we identify useful edges in $E^\circ$.
An illustration of the proof is given in Figure~\ref{fig:rightmostwitness_sketch}.
\begin{restatable}{lemma}{lemTwoUsefullEdges}
	\label{lem:leftrightmostwitness}
	For all $i \in [m]$, neither of the edges $\{w_i^L, v_i+1\}$ and $\{w_i^R, v_i\}$ is star-crossing.
\end{restatable}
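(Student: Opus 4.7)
The plan is to show, by contradiction, that the edge $e := \{v_i, w_i^R\}$ is not star-crossing; the case of $\{v_i+1, w_i^L\}$ is fully symmetric (interchange the roles of $v_i \leftrightarrow v_i+1$ and $w_i^R \leftrightarrow w_i^L$). Assume $e$ crosses some star edge $\{v_\star, u\}$ with $u \in [n-1] \setminus \{v_i, w_i^R\}$, and split into three cases: (1)~$u = v_i+1$; (2)~$u \notin \{v_i, v_i+1, w_i^R\}$ and $u$ is not a witness of $b_i$; (3)~$u$ is a witness of $b_i$ with $u < w_i^R$.

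Case~(1) is immediate: the $K_4$-subdrawing on $\{v_\star, v_i, v_i+1, w_i^R\}$ would then contain two distinct crossings, namely $\{v_\star, w_i^R\}$ with $b_i$ (since $w_i^R$ is a witness) and $\{v_\star, v_i+1\}$ with $e$ (by assumption), whereas every simple drawing of $K_4$ admits at most one crossing. For case~(2), let $T_i = \{v_\star, v_i, v_i+1\}$ and denote its two sides by $S_S$ (the one containing the witnesses) and $S_L$; by Observation~\ref{obs:ConvexSide}, $S_L$ is the convex side. A short analysis shows that the only non-corner vertices in $S_S$ are the witnesses of $b_i$: if $v \in S_S \setminus \{v_i, v_i+1\}$, then $\{v_\star, v\}$ emanates at $v_\star$ into the large angular sector of $T_i$ (because $v_i, v_i+1$ are consecutive in $\pi_{v_\star}$, so no other star edge lies in the small sector), and must cross $\partial T_i$ to reach $v$, forcing it to cross $b_i$ and hence making $v$ a witness. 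Therefore $u \in \overline{S_L}$, and convexity of $S_L$ gives $\{v_\star, u\} \subset \overline{S_L}$. Since we are not in case~(1), $e$ does not cross $\{v_\star, v_i+1\}$, and it shares $v_i$ with the other two edges of $\partial T_i$, so $e \subset \overline{S_S}$. Hence $e \cap \{v_\star, u\} \subset \overline{S_L} \cap \overline{S_S} = \partial T_i$; but $e \cap \partial T_i = \{v_i\}$ and $\{v_\star, u\} \cap \partial T_i = \{v_\star\}$ are disjoint, contradicting the assumed crossing.

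The main case is~(3). Consider the auxiliary triangle $T^* := \{v_\star, v_i, w_i^R\}$, whose two sides correspond at $v_\star$ to the two angular sectors determined by $\{v_\star, v_i\}$ and $\{v_\star, w_i^R\}$. Let $B$ be the side whose sector at $v_\star$ contains (cyclically) the labels $v_i+1, v_i+2, \ldots, w_i^R-1$, and let $A$ be the other side. Since $\{v_\star, v_i+1\}$ does not cross $e$ (case~(1) is excluded) and it shares $v_\star$ with the remaining boundary edges of $T^*$, it meets $\partial T^*$ only at $v_\star$, so $v_i+1 \in B$. Now $b_i$ joins $v_i \in \partial T^*$ with $v_i+1 \in B$, yet crosses $\{v_\star, w_i^R\} \subset \partial T^*$ (as $w_i^R$ is a witness of $b_i$); this shows that $B$ is not convex, and by the convex-drawing hypothesis $A$ is. On the other hand, since $u < w_i^R$, the star edge $\{v_\star, u\}$ also emanates from $v_\star$ into the sector of $B$, but by assumption it crosses $e \subset \partial T^*$, so $u$ must end in $A$. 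This exhibits an edge $\{v_\star, u\}$ with both endpoints in $\overline{A}$ whose initial segment at $v_\star$ lies in the interior of $B$, contradicting convexity of~$A$.

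The main obstacle will be case~(3): one must introduce the correct auxiliary triangle $T^*$, pin down the convex side using $b_i$ as a witness of non-convexity of~$B$, and then apply convexity of $A$ to the hypothetical edge $\{v_\star, u\}$. Cases~(1) and~(2) are essentially forced once the structure of $T_i$ and its non-convex side is understood; the real work of the proof lies in choosing the right auxiliary triangle and tracking carefully where the initial segment and the endpoint of $\{v_\star, u\}$ land in $A$ and $B$.
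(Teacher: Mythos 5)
Your proof is correct and takes essentially the same approach as the paper's: you reduce to the case where the hypothetical crossing star edge's endpoint is a witness of $b_i$, and then show that both sides of the auxiliary triangle $\{v_\star,v_i,w_i^R\}$ would be non-convex (one witnessed by $b_i$, the other by $\{v_\star,u\}$), contradicting convexity. Your explicit cases~(1) and~(2) simply unpack in more detail the paper's opening claim that the edge lies entirely in the non-convex side of the triangle $\{v_i,v_i+1,v_\star\}$, making the argument somewhat more careful but not conceptually different.
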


\begin{figure}[htb]
	\centering
	
	%    \scalebox{1}[0.75]{
		\hbox{}\hfill
		\includegraphics[page =1, scale =0.7]{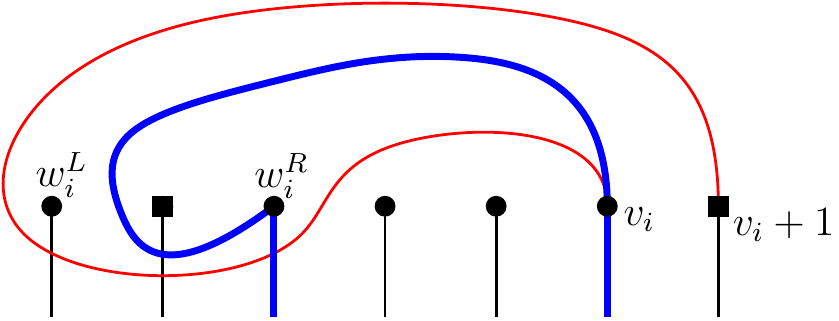} 
		\hfill 
		\includegraphics[page =2, scale = 0.7]{figs/insidebadedge.pdf}
		\hfill\hbox{}
		%    }
	
	\caption{
		If the blue edges $\{w_i^R,v_i\}$, $\{w_i^L,v_i+1\}$ are star-crossing, the drawing cannot be convex.
		The vertices marked with squares are witnesses for the corresponding side of the blue triangle not being convex.
	}
	\label{fig:rightmostwitness_sketch}
\end{figure}

The general idea to  construct a plane Hamiltonian cycle
is as follows: Begin with the edge $\{v_\star,v_1\}$. When vertex $v_i$ is visited,
we go from $v_i$ to $w_{i}^R$ and then one by one with decreasing labels to $w_i^L$.
From there we want to visit
$v_i+1$ and then one by one in increasing order through the vertices of $R_i$ until we reach $v_{i+1}$. When we reach $v_m+1=n-1$ a Hamiltonian path is constructed. This path can be closed to a cycle with the edge $\{v_m+1,v_\star\}$. 
These are all ingredients to find a plane Hamiltonian cycle in the special case where all $L_i = \emptyset$. 
However in general this condition does not hold and we have to make sure to visit all vertices in $L_i$ in between. 
Figure~\ref{fig:easyexample_HC} shows an example for the case $L_i = \emptyset$.

\begin{figure}[htb]
	\centering
	\includegraphics{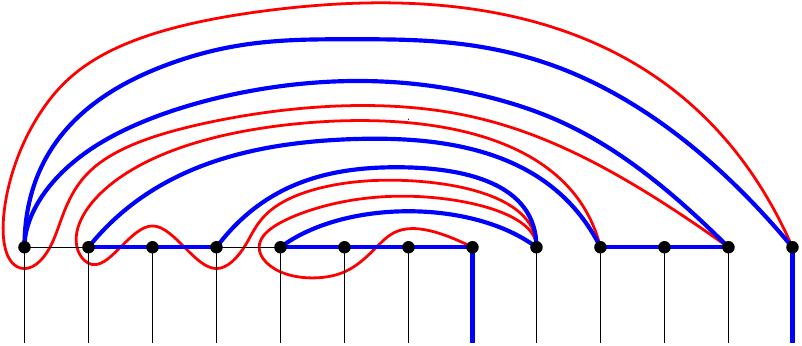}
	\caption{A plane Hamiltonian cycle (blue) in the case where $L_i = \emptyset$ for all $i$.}
	\label{fig:easyexample_HC}
\end{figure}

The strategy is to identify
edges in $E^\circ$ which connect a vertex in $L_i$ with a vertex in $R_i$ in such a way that we proceed in each step either one step to the left in $L_i$ or one step to the right in $R_i$.
These edges then allow constructing a path from $v_i$ to $v_{i+1}$.
Starting at $v_i$ we collect the remaining vertices from $L_{i-1}$ 
and continue with the vertices from $w_i^R$ to $w_i^L$ with decreasing index as in the previous case where $L_i=\emptyset$.
Additionally, we collect some of the vertices in $L_i$ until we reach one of the
chosen edges in $E^\circ$ connecting a vertex from $L_i$ to $v_i+1$, which we use.
In a second step we construct a path from $v_i+1$ to $v_{i+1}$ collecting all vertices in $R_i$ and some of the vertices in $L_i$.
This yields the desired plane Hamiltonian cycle.
The details are deferred to the appendix.

\clearpage

{
	\small
	\bibliographystyle{alphaabbrv-url}
	\bibliography{references}
}

\clearpage

\appendix

\section{SAT encoding for rotation systems}
\label{sec:encoding}

We describe a SAT encoding for simple drawings,
which is based on rotation systems.
Suppose that $\Pi=(\pi_v)_{v \in [n]}$ is the rotation system of a simple drawing of $K_n$.
Recall that $\pi_v$ is a cyclic permutation of the elements $[n] \setminus \{v\}$.

\paragraph{Global description}
We introduce a Boolean variable $X_{aib}$ for every pair of distinct vertices $a, b \in [n]$ and every index $i \in [n-1]$, to indicate whether $\pi_a(i)=b$.
To ensure that these variables indeed model a (cyclic) permutation, we assert that for every $a \in [n]$ and every $i \in [n-1]$ there is exactly one $b \in [n]$ for which $X_{aib}$ is true.
Therefore, we introduce the clauses 
\begin{itemize}
	\item 
	$\bigvee_{b:b \neq a}  X_{aib}$ for every $a \in [n]$ and $i \in [n-1]$, \quad and 
	\item $\neg X_{aib_1} \vee \neg X_{aib_2}$ for every distinct $a,b_1,b_2 \in [n]$ and $i \in [n]$.
\end{itemize}

Since we deal with cyclic permutations, we assume that the first element is always the smallest:
\begin{itemize}
	\item $X_{112}=true \quad \text{and}
	\quad   X_{k11}=true \text{ for } k \geq 2$ 
\end{itemize}

\paragraph{Local description}
Since the $X$-variables only give us a ``global'' description of the entire rotation system, we introduce additional auxiliary variables to describe the rotation system around each vertex in more detail.
We introduce an auxiliary variable $Y_{abcd}$ for every four distinct vertices $a,b,c,d \in [n]$
to indicate whether $b,c,d$ appear in counter-clockwise order in the rotation system of~$a$. 
To synchronize the $X$-variables with the $Y$-variables,
we add clauses to assert for every distinct $a,b,c,d \in [n]$ and $i,j,k \in [n-1]$ that
\begin{itemize}
	\item 
	$
	X_{aib} \wedge X_{ajc} \wedge X_{akd} 
	\rightarrow \phantom{\neg} Y_{abcd}
	$
	if $i<j<k$, $k<i<j$, or $j<k<i$,
	\quad and 
	\item
	$
	X_{aib} \wedge X_{ajc} \wedge X_{akd} 
	\rightarrow \neg Y_{abcd}
	$
	if $i<k<j$, $k<j<i$, or $j<i<k$.
\end{itemize}

Pause to note that precisely one of $Y_{abcd}$ and $Y_{abdc}$ is true, and that $Y_{abcd}=Y_{acdb}=Y_{adbc}$.

\paragraph{Forbidden subconfigurations}

Theorem~\ref{theorem:rotsys_5tuples_characterization}
asserts
that a pre-rotation system is drawable if and only if none of the obstructions $\obstructionFour$, $\obstructionFiveA$ and $\obstructionconvexFiveB$ (depicted in Figure~\ref{fig:rotsys_obstructions}) occurs as a substructure.
Since a rotation system does not contain the obstruction~$\obstructionFour$,
we derive that the two clauses
\begin{itemize}
	\item $\neg Y_{abcd} \vee
	\neg Y_{bacd} \vee
	\neg Y_{cabd} \vee
	\neg Y_{dacb}$ \quad and
	\item $\phantom{\neg} Y_{abcd} \vee
	\phantom{\neg} Y_{bacd} \vee
	\phantom{\neg} Y_{cabd} \vee
	\phantom{\neg} Y_{dacb}$
\end{itemize}

must be fulfilled for any four distinct vertices $a,b,c,d \in [n]$.
Here we ensure that neither $\obstructionFour$ nor the reversed rotation system appear as a substructure.
In an analogous manner, 
we can assert that $\obstructionFiveA$ and $\obstructionFiveB$ are not contained.
In total, we have $\Theta(n^5)$ clauses which assert that the pre-rotation system is 
indeed a rotation system, i.e., drawable.

\paragraph{Convexity and $h$-convexity}

As discussed in Section~\ref{sec:prelim},
one only needs to forbid the configurations $\obstructionconvexFiveA$ and $\obstructionconvexFiveB$ (resp.\ $\obstructionhconvexSix$)
to restrict the search space from general rotation systems to convex (resp.\ $h$-convex) drawings. 
Hence additional to the last paragraph, we need to forbid the corresponding rotation systems.

\paragraph{Crossing edges}
For $\obstructionFour$-free pre-rotation systems,
we introduce variables $C_{abcd} = C_{ef}$ to indicate whether two non-adjacent edges $e=\{a,b\}$ and $f=\{c,d\}$ cross 
(cf.\ Observation~\ref{observation:basics}(\ref{item:obstructionfour_crossingsdetermined})).
If the drawing of $K_4$ with vertices $\{a,b,c,d\}$ 
is crossing-free, the rotation system is unique up to relabelling or reflection.
%; see the left-hand side of Figure~\ref{fig:rs_n4_valid}.
Otherwise, if there is a crossing, we have one of the following: $ab$ crosses $cd$, $ac$ crosses $bd$, or $ad$ crosses $bc$ and this is fully determined by the rotation system.
%; see the three drawings on the right-hand side of Figure~\ref{fig:rs_n4_valid}.
For each of these three cases, there are two subcases.
If $ab$ crosses $cd$, then the directed edge $\overrightarrow{cd}$ either traverses $\overrightarrow{ab}$ from the left to the right or vice versa.
Each subcase corresponds to a unique rotation system.
The three cases are depicted in Figure~\ref{fig:rs_n4_valid}. The other cases are the reversed rotation systems which correspond to the reflected drawings.
We introduce an auxiliary variable $D_{abcd}$ to indicate whether we are in the first subcase and set $C_{abcd} = D_{abcd} \vee D_{abdc}$.

\paragraph{Plane subdrawings}

We assert that 
a subset of the edges $E' \subset E(K_n)$ 
forms a plane subdrawings
by setting the auxiliary crossing variables corresponding to pairs of edges from $E$ to $false$, that is, we have a unit-clause constraint
$
\neg C_{e,f}
$
for every pair of non-adjacent edges $e,f \in E'$.
Similarly, we can assert that $E'$ does not form a plane subdrawing with the constraint
\begin{align*}
	\bigvee_{e,f \in E' \colon e \cap f = \emptyset}
	C_{e,f}.
\end{align*}

For instance, to assert that a rotation system
does not contain a plane Hamiltonian cycle,
we need that 
for every cyclic permutation $\pi$ %on $[n]$
there is at least one crossing pair of edges in the set of edges $E_\pi = \{(\pi(i),\pi(i+1)) : i \in [n]\}$,
where $\pi(n+1)=\pi(1)$.
Note that this 
gives $(n-1)!$ conditions
and is therefore only suited for relatively small values of~$n$.
Similarly, we can deal with plane Hamiltonian subdrawings on $2n-3$ edges: We assert that there is at least one crossing formed by every edge set $E' \in \binom{E(K_n)}{2n-3}$ which contains the edges $E_\pi$ of some Hamiltonian cycle~$\pi$.

\paragraph{Empty Triangles}
For the definition of empty triangles in terms of rotation systems we consider it combinatorially and we distinguish the two sides of a triangle. 
We use an orientation of the sphere. 
For a triangle spanned by three distinct vertices $a,b,c$, 
one of its sides sees $a,b,c$ in clockwise order and the other one in counterclockwise order. 
We denote by  $S_{a,b,c}$ the side of the triangle which has $a,b,c$ in counterclockwise order.
In each of the drawings in Figure~\ref{fig:triangle_abc}, $S_{a,b,c}$ is the bounded side and $S_{a,c,b}$ is the unbounded side.
For every three distinct indices $a,b,c$ we introduce a Boolean variable $E_{a,b,c}$ which is $true$ if the side $S_{a,b,c}$ is empty.
To assert the $E$ variables, we introduce auxiliary variables.
For a point $d \in [n] \setminus \{a,b,c\}$, we define $E_{a,b,c}^d= false$ if $S_{a,b,c}$ contains $d$.
For four elements, there exist eight rotation systems, the four rotation systems in Figure~\ref{fig:rs_n4_valid} and their reflections.
Hence, we set the variable $E_{a,b,c}^d$ to $false$ if and only if 
one of the four cases depicted in Figure~\ref{fig:triangle_abc} occurs. 

\begin{figure}[htb]
	\centering
	\includegraphics{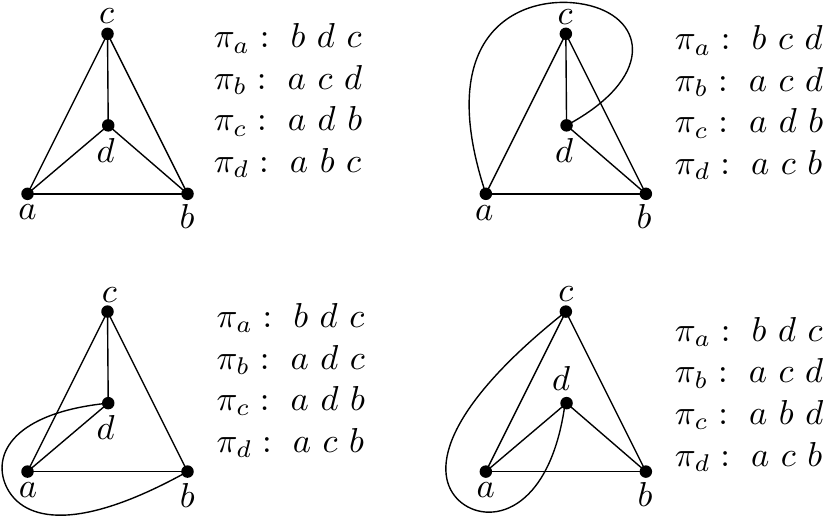}
	\caption{The four cases where $S_{a,b,c}$ contains~$d$.}
	\label{fig:triangle_abc}
\end{figure}

The side $S_{a,b,c}$ is empty if and only if
no point $d$ lies in $S_{a,b,c}$, hence 
\begin{align*}
	E_{a,b,c} = \bigwedge_{d \in [n] \setminus \{a,b,c\}} E_{a,b,c}^d.
\end{align*}

\paragraph{Symmetry breaking and
	lexicographically minimal rotation systems}

A pre-rotation system on $[n]$ is \emph{natural} if the rotation around the
first vertex is the identity permutation, that is, the vertex~$1$ sees
$2,3,\ldots,n$ in this order.  By permuting the labels of the vertices we can
make any pre-rotation system natural. Figure~\ref{fig:rs_n4_valid} shows the
four natural rotation systems of $K_4$ and their corresponding drawings.

\begin{figure}[htb]
	\centering
	\includegraphics{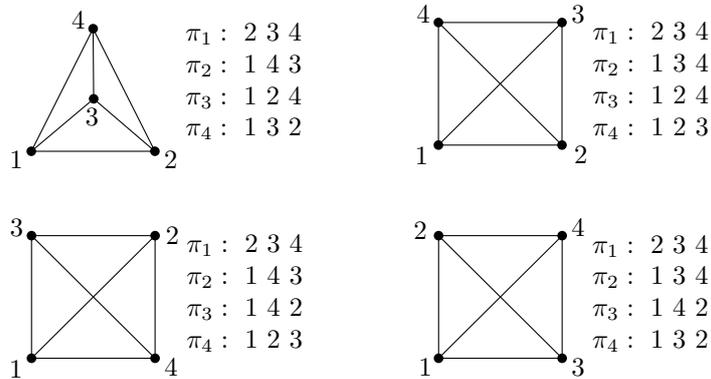}
	\caption{The four natural rotation systems on $4$ elements with their drawings. The second, third, and fourth are isomorphic and the second is the lexicographic minimum.}
	\label{fig:rs_n4_valid}
\end{figure}

For each pre-rotation system on $[n]$
we can obtain up to $2n!$ isomorphic natural pre-rotation systems 
by permuting the elements and
by reflection (which reverses all cyclic orders).
Here we describe how we restrict the SAT encoding to only consider
one representative for each isomorphism class.

A pre-rotation system~$\Pi$ can be encoded
as an $n \times (n-1)$ matrix $M_\Pi$, 
where the $i$-th row contains the permutation of the $i$-th element,
or as a vector $v_\Pi$ of length $n \cdot (n-1)$
that is obtained by concatenating the rows of the matrix one after the other.
Note that we assume that the first element in every row of $M_\Pi$ is the smallest one of the cyclic permutation, i.e., the first row starts with~$2$ and all other rows start with~$1$.
In order to find a unique representative for a relabeling class, 
we take the pre-rotation system with the lexicographically minimal vector. 

In a lexicographically minimal pre-rotation system~$\Pi$, the first row of the matrix $M_{\Pi}$ 
is the identity permutation on $[n] \backslash \{1\}$.
It is necessary that a lexicographical minimum is natural,
but not sufficient; see e.g.\ the three rotation systems on the right-hand side of Figure~\ref{fig:rs_n4_valid}.

At first glance, it seems that 
we have to check $n!$ relabellings
whenever we test whether a given pre-rotation system~$\Pi$
is a lexicographic minimum.
However, since the relabeling has to be natural,
the choice of the first and second vertices fully determines the first row and thus the full permutation of the vertices.
Hence, we only have to test $n(n-1)$ relabellings plus their reflections to test isomorphism.
There are at most $2n(n-1)$ natural pre-rotation systems in each class
and equality holds if and only if there are no symmetries.

\section{SAT encoding for planar graphs and drawing pre-rotation systems}
\label{sec:drawing}

To give an independent proof of Proposition~\ref{proposition:rotsys_classification_n6},  we describe a SAT encoding for deciding whether a given pre-rotation system is drawable.
Moreover, if a given pre-rotation system is drawable  
then every solution corresponds to 
a drawing. A rotation system has up to  $2^{\Theta({n^4})}$ different drawings with pairwise non-isomorphic 
planarizations~\cite{Kyncl2009}. 
The \emph{planarization} of a drawing is the planar graph obtained by placing an auxiliary \emph{cross-vertex} at the position of each crossing point and accordingly subdividing the crossing edges. Each cross-vertex has degree four and increases the number of edges by two.

Let $\Pi$ be a pre-rotation system on $[n]$. 
By Observation~\ref{observation:basics}(\ref{item:obstructionfour_notdrawable}) $\Pi$ is not drawable if it contains~$\obstructionFour$.
Hence we assume that $\Pi$ is $\obstructionFour$-free.
Now Observation~\ref{observation:basics}(\ref{item:obstructionfour_crossingsdetermined}) implies that
the pairs of crossing edges can be inferred from the induced 4-tuples.
We denote the set of crossing edge pairs by~$X$.
In particular, for each edge $e \in E(K_n)$ 
we know which crossings occur along it. We denote this set of crossings by~$X_e$.
The missing information to get a planarization 
(if there exists one) is the order of the crossings along the edges. 
Hence it remains to determine the order 
of the crossings along edges.

For every edge~$e = \{u,v\} \in E(K_n)$ with $u<v$,
let $\Sigma_e$ be a permutation of the crossings $X_e$ along~$e$, which we extend by adding~$u$ as the first element and~$v$ as the last element.
We define the graph $G_{\Sigma}$ where the vertex set consists of the original elements $[n]$ and the crossings~$X$, 
and for every $e=\{u,v\}\in E(K_n)$ with $u<v$,
we connect every pair of consecutive crossings from~$\Sigma_e$.

Our aim is to find an assignment for the permutations $\Sigma_e$ 
such that the corresponding graph $G_{\Sigma}$ is planar. 
For testing whether a graph is planar, 
we use Schnyder's characterization of planar graphs~\cite{Schnyder1989}, which we describe in more detail in Section~\ref{ssec:planargraphs}.

\begin{proposition}
	\label{proposition:planarization_encoding_correct}
	Let $\Pi$ be a pre-rotation system.
	If $\Pi$ is drawable, 
	then there exists an assignment $\Sigma$ such that $G_\Sigma$ is planar.
	Moreover, 
	every planar $G_\Sigma$ 
	is the planarization of a drawing with rotation system~$\Pi$.
\end{proposition}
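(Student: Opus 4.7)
The plan is to prove the two implications separately, following the intuition that $G_\Sigma$ is meant to be a combinatorial snapshot of the planarization of a simple drawing realizing~$\Pi$.

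For the forward implication, I would start with any simple drawing $D$ of $K_n$ realizing $\Pi$ and consider its planarization $P_D$, the planar graph obtained by turning each crossing point into a degree-$4$ cross-vertex and subdividing the two involved edges accordingly. For each edge $e=\{u,v\}$ of $K_n$, setting $\Sigma_e$ to be the order in which the crossings in $X_e$ appear along $e$ in $D$ yields an assignment $\Sigma$ for which $G_\Sigma$ is, by construction, abstractly isomorphic to $P_D$; since $P_D$ is planar so is $G_\Sigma$. Here I use Observation~\ref{observation:basics} to guarantee that the set $X$ of crossing pairs inferred from the 4-tuples of $\Pi$ really coincides with the set of crossings actually realized in~$D$.

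For the reverse implication, given $\Sigma$ with $G_\Sigma$ planar, I would fix a planar embedding of $G_\Sigma$ and interpret it as a drawing $D$ of $K_n$ as follows: each original edge $e=\{u,v\}$ corresponds to the unique path in $G_\Sigma$ from $u$ to $v$ whose internal vertices are the cross-vertices of $X_e$ in the order $\Sigma_e$, and $e$ is drawn as the concatenation of the embedded arcs of this path. Doing this for every edge gives a curve diagram in the plane whose underlying abstract planarization is exactly $G_\Sigma$.

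The main obstacle is to confirm that the curve diagram $D$ produced in this way is a \emph{simple} drawing whose rotation system equals $\Pi$. This boils down to two local conditions on the planar embedding of $G_\Sigma$: (i)~at every $v\in[n]$ the cyclic order of neighbors in the embedding coincides with $\pi_v$, and (ii)~at every cross-vertex the two strands of one of the two crossing original edges separate the two strands of the other, so that the point is a transversal crossing rather than a tangency. I plan to enforce both conditions directly inside the Schnyder-based planarity encoding: the rotation at each original vertex is fixed to $\pi_v$ by unit clauses, and the rotation at each cross-vertex is fixed to the unique alternating pattern dictated by its two edges; any Schnyder wood found by the SAT solver then corresponds to a planar embedding respecting these prescribed rotations. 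Under these enriched constraints the drawing $D$ obtained above has precisely the required pairs of crossings, is genuinely simple, and its rotation system is $\Pi$, so $G_\Sigma$ is indeed its planarization, completing the second implication.
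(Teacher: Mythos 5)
The forward implication matches the paper's argument exactly.  For the reverse implication, however, you correctly identify the two conditions that need to hold—(i) the rotation at each original vertex $v$ must be $\pi_v$, and (ii) each cross-vertex must be a transversal crossing, not a tangency—but then you propose to \emph{enforce} them with extra clauses in the Schnyder encoding.  That does not prove the proposition: the claim is about an arbitrary planar $G_\Sigma$, with no constraints on its embedding beyond planarity, so you must \emph{prove} that (i) and (ii) hold automatically.  Moreover, Schnyder's dimension-three characterization certifies planarity as a property of the abstract graph; it does not come with variables describing cyclic orders at vertices, so ``unit clauses fixing $\pi_v$'' is not something the encoding supports as-is.

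The paper resolves the gap as follows.  For (ii), take any plane drawing of $G_\Sigma$ and suppose some cross-vertex $c$ of edges $\{u,v\}$ and $\{u',v'\}$ is drawn as a touching.  Restricting to the subgraph of $G_\Sigma$ induced by $u,v,u',v'$ and the cross-vertices among them gives a planarization of a $K_4$ drawing in which $c$ is a touching, and one checks directly that this small graph cannot be drawn in the plane—contradiction, so all cross-vertices are proper crossings.  For (i), once all crossings are proper, the resulting drawing $D$ has exactly the same set of crossing pairs as $\Pi$, and Proposition~\ref{prop:PRS_different_ATgraphs} (two $\obstructionFour$-free pre-rotation systems with the same crossing pairs agree up to reflection) forces $\Pi_D$ to be $\Pi$ or its reflection; reflecting $D$ if necessary gives rotation system exactly~$\Pi$.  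So the conditions you flagged are consequences of planarity alone, which is precisely what makes the proposition nontrivial and the encoding sound.
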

%\manfred{there is a for one-to-one relation between $G_\Sigma$'s and planarizations. i omited 3connectivity for now since it is not needed in the proof of the proposition. but this is definitely interesting for the full version!}

\begin{proof}
	The first part is straight-forward.
	If $\Pi$ is drawable, we consider a drawing and obtain $G_\Sigma$ as its planarization.
	
	For the second part,
	suppose that $G_{\Sigma}$ is a planar graph
	and consider a plane drawing $D$ of~$G_{\Sigma}$.
	We show that 
	\emph{(i)} the auxiliary cross-vertices are drawn as proper crossings 
	and 
	\emph{(ii)} the rotation system $\Pi_D$ of the drawing $D$ equals $\Pi$.
	
	For the first item,
	suppose that the cross-vertex $c$ of the edges $\{u,v\}$ and $\{u',v'\}$ of $K_n$ 
	is drawn as a touching, i.e., the endpoints of the edges are consecutive in the cyclic order around $c$. Without loss of generality we assume that they appear as follows $ u',v',v,u$.
	We consider the subdrawing induced by $ u',v',v,u$.
	As illustrated in Figure~\ref{fig:propercrossing}
	not all edges of this planarization of $K_4$ can be drawn without crossings, 
	which is a contradiction.
	Hence, in~$D$ all crossing-vertices are drawn as proper crossings.
	
	Since the pairs of crossing edges are the same for $\Pi$ and $\Pi_D$, the second item
	follows from the following Proposition~\ref{prop:PRS_different_ATgraphs}.
\end{proof}

\begin{figure}[htb]
	\centering
	\includegraphics{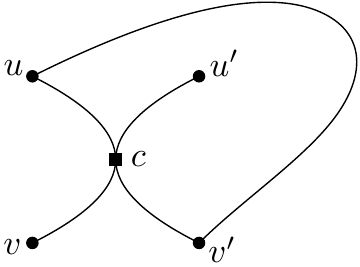}
	\caption{The vertices $u'$ and $v$ cannot be connected in a plane way.}
	\label{fig:propercrossing}
\end{figure}

\begin{proposition}
	\label{prop:PRS_different_ATgraphs}
	Let $\Pi$ and $\Pi'$ be two $\obstructionFour$-free  pre-rotation systems on $[n]$.
	The pairs of crossing edges of $\Pi$ and $\Pi'$ coincide if and only if $\Pi$ is equal to $\Pi'$ or its reflection.
\end{proposition}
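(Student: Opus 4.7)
The plan is to treat the two directions separately. The ``if'' direction is immediate: reflecting a pre-rotation system literally reverses every cyclic rotation, which on every 4-tuple produces the mirror image of the corresponding drawing of~$K_4$ (the options listed in Figure~\ref{fig:rs_n4_valid}); since a mirror reflection preserves which pair of edges crosses, Observation~\ref{observation:basics}(\ref{item:obstructionfour_crossingsdetermined}) then transports this to the global statement that $\Pi$ and $\Pi^R$ have the same crossing pairs. So I will focus on the ``only if'' direction.

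For the ``only if'' direction I would induct on $n$, taking $n=5$ as the base case. The base can be settled by a direct enumeration: by Observation~\ref{observation:basics}(\ref{item:obstructionfour_crossingsdetermined}) the crossings of a $\obstructionFour$-free pre-rotation system are already encoded by its 4-element subconfigurations, and the list of $\obstructionFour$-free pre-rotation systems on at most five elements is finite and small enough to check by hand or (more robustly) through the SAT encoding of Section~\ref{sec:encoding}. For the inductive step with $n\ge 6$, let $\Pi,\Pi'$ be two $\obstructionFour$-free pre-rotation systems on $[n]$ with the same crossing pairs. For each $i\in[n]$ the restriction $\Pi|_{[n]\setminus\{i\}}$ is again $\obstructionFour$-free, and the crossings inside a $4$-subset are determined subconfigurationally, so $\Pi|_{[n]\setminus\{i\}}$ and $\Pi'|_{[n]\setminus\{i\}}$ have the same crossing pairs. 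The inductive hypothesis yields that for every $i$ these restrictions are either equal or reflections of each other. By replacing $\Pi'$ with $\Pi'^R$ if necessary, I may assume $\Pi|_{[n-1]}=\Pi'|_{[n-1]}$.

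The crux is then to upgrade this to $\Pi|_{[n]\setminus\{k\}}=\Pi'|_{[n]\setminus\{k\}}$ for every $k\in[n-1]$ as well, i.e.\ to rule out that the reflection case occurs for some omitted index. If the reflected alternative held for some $k$, restricting further to $[n-1]\setminus\{k\}$ would combine with the equality from the WLOG assumption to force $\Pi|_{[n-1]\setminus\{k\}}$ to equal its own reflection. But since $n\ge 6$, the set $[n-1]\setminus\{k\}$ has at least $4$ elements, and every rotation of a pre-rotation system on $\ge 4$ elements is a cyclic order on $\ge 3$ elements, which never equals its reverse. This gives the desired contradiction, so the restrictions agree at every omitted vertex. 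Finally, for each $v\in[n]$ the rotations $\pi_v$ and $\pi'_v$ then agree on all $(n-2)$-subsets of $[n]\setminus\{v\}$; since a cyclic order on $n-1\ge 5$ elements is reconstructed from its sub-orderings of length $n-2$, one obtains $\pi_v=\pi'_v$ and hence $\Pi=\Pi'$.

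The main obstacle I anticipate is precisely the base case $n=5$: there the intermediate set $[n-1]\setminus\{k\}$ only has three elements, where every rotation is trivial and therefore self-reflective, so the clean self-reflection argument that drives the inductive step simply collapses. That is why the induction has to start at $n=5$ rather than at $n=4$, and why the base case is essential and not merely cosmetic -- it is where one actually has to invoke the finite classification (either by direct combinatorial case analysis of the $\obstructionFour$-free $5$-element pre-rotation systems, or via the SAT-based enumeration described in Section~\ref{sec:encoding}).
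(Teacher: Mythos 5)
Your proof is correct, but the reduction strategy differs from the paper's. Both arguments hinge on the same two ingredients: the finite enumeration of $\obstructionFour$-free pre-rotation systems on $5$ elements (done with the SAT framework), and the observation that a cyclic order on at least $3$ elements never equals its own reverse. Where you diverge is in how the general case is brought down to the base case. The paper argues the contrapositive in a single step: assuming $\Pi$ differs from $\Pi'$ and from its reflection, a short case analysis (does some single rotation $\pi_v$ already differ non-reflectively, or are all rotations equal-or-reflected with a mix of both?) produces a concrete $5$-element subset $I$ on which $\Pi|_I$ still differs non-reflectively from $\Pi'|_I$, and the enumeration then gives distinct crossing pairs on $I$, hence globally. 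You instead run a one-element-at-a-time induction starting at $n=5$: applying the inductive hypothesis to each $(n-1)$-element restriction, normalising one restriction to equality by a global reflection, propagating equality to all other restrictions via the self-reflection contradiction, and finally reassembling each rotation $\pi_v$ from its codimension-one restrictions. Both routes are sound. The paper's is shorter and avoids the final reconstruction step; yours is more mechanical and makes the role of the self-reflection obstruction (and why the base case must be $n=5$, not $n=4$) more explicit, which you correctly identified as the crux. Either would serve as a valid proof.
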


A weaker version of
Proposition~\ref{prop:PRS_different_ATgraphs} (restricted to rotation systems) was already proven by Kyn\v{c}l \cite[Proposition~2.1(1)]{Kyncl2020}.

\begin{proof}
	If $\Pi$ is equal to $\Pi'$ or its reflection,
	then the pairs of crossing edges clearly coincide (Observation~\ref{observation:basics}(\ref{item:obstructionfour_crossingsdetermined})).
	
	For the converse statement, 
	assume that $\Pi$ differs from $\Pi'$ and its reflection.
	We show that we find a set of 
	five vertices $I \subset V$ such that,
	when restricted to $I$,
	the (sub-)pre-rotation system
	$\Pi|_I$ still differs from $\Pi'|_I$ and its reflection.
	
	Case~1: there exists a vertex $v$ 
	such that the cyclic permutation $\pi_v$ differs from $\pi'_v$ and its reflection.
	Then there are four elements $J = \{a,b,c,d\}$ such that,
	when restricted to $J$,  the cyclic permutation $\pi_v|_J$ still differs from $\pi'_v|_J$ and its reflection.
	Hence we can choose $I = \{v,a,b,c,d\}$.
	
	Case~2: for every vertex $v$ 
	the cyclic permutation
	$\pi_v$ is equal to $\pi'_v$ or its reflection.
	Since $\Pi$ differs from $\Pi'$ and its reflection,
	there are two vertices $v,w$
	such that $\pi_v$ equals $\pi'_v$
	and $\pi_w$ equals the reflection of~$\pi'_w$.
	By choosing any $a,b,c$ distinct from $v$ and $w$,
	we have five vertices $I=\{v,w,a,b,c\}$ 
	such that $a,b,c$ occur in the same order around~$v$
	and $a,b,c$ occur in the opposite order around~$w$.
	
	Using the program
	we enumerated all $\obstructionFour$-free pre-rotation systems on 5 vertices.
	Any two pre-rotation systems 
	which are not equal or reflections of each other
	have distinct pairs of crossing edges
	(see Appendix~\ref{sec:proof:prop:PRS_different_ATgraphs} for technical details). 
	Hence the pairs of crossing edges for $\Pi$ and $\Pi'$ are different.
\end{proof}

\subsection{A SAT encoding for planar graphs}
\label{ssec:planargraphs}

Schnyder \cite{Schnyder1989}
characterized planar graphs as graphs whose incidence poset is of order dimension at most three. 
That is,
a graph $G=(V,E)$ is planar if and only if there exist three total orders $\prec_1,\prec_2,\prec_3$ such that for every edge $\{u,v\} \in E$ and every vertex $w \in V \setminus \{u,v\}$ it holds $u \prec_i w$ and $v \prec_i w$ for some $i \in \{1,2,3\}$.
This characterization allows to encode planarity in terms of a Boolean formula.
Besides the fact that this approach can be used for planarity testing, 
it can also be used to enumerate all planar graphs with a specified property or 
to decide that no planar graphs with this property exists.
It is worth noting that 
certain subclasses of planar graphs 
such as outer-planar graphs
can be characterized in a similar fashion; see \cite{FelsnerTrotter05}.

\paragraph{The encoding}
For any two vertices $u,v \in V$ with $u \neq v$
we use Boolean variables $X_{u,v}$ 
to encode whether $\{u,v\}$ is an edge in the graph
and, for any $i \in \{1,2,3\}$, we use Boolean variables $Y_{i,u,v}$ to encode whether $u \prec_i v$.
To assert that the $Y$ variables indeed model a total order,
we must ensure the linear order and anti-symmetry with the constraints 
$Y_{i,u,v} \vee Y_{i,v,u}$ and $\neg Y_{i,u,v} \vee \neg Y_{i,v,u}$
and transitivity with the constraints 
$\neg Y_{i,u,v} \vee \neg Y_{i,v,w} \vee Y_{i,u,w}$
for every $i \in \{1,2,3\}$ and distinct $u,v,w \in V$.

To ensure Schnyder's planarity conditions,
we introduce auxiliary variables $Y_{i,u,v,w}$ to encode whether $u \prec_i w$ and $v \prec_i w$ holds for any three distinct variables $u,v,w$ and $i \in \{1,2,3\}$
(that is, $Y_{i,u,v,w} = Y_{i,u,w} \vee  Y_{i,v,w}$)
and then assert 
$
\neg X_{u,v} \vee \bigvee_{i=1}^3 Y_{i,u,v,w}.
$

%\manfred{maybe add to full version: an abstract planar graph will be enumerated several times (together with different $\pi_1,\pi_2,\pi_3$ witnesses). hence, when the instance shall be used for enumerating distinct planar graphs, one has to forbid alread-found graphs by additional constraints (iterative constraints/iterative sat solving)}

\section{Characterization of rotation systems}
\label{sec:characterizationRS}

\propclassificationRS*

\begin{proof}
	We combine the SAT frameworks from Appendix~\ref{sec:encoding}
	and Appendix~\ref{sec:drawing}. 
	By Observation~\ref{observation:basics}(\ref{item:obstructionfour_notdrawable}) 
	we know that there are exactly two non-isomorphic rotation systems on 4 vertices and the non-drawable pre-rotation system $\obstructionFour$.
	Next, we enumerate the
	7 non-isomorphic pre-rotation systems on 5 vertices,
	that do not contain~$\obstructionFour$.
	The drawing framework from Appendix~\ref{sec:drawing} shows 
	that exactly five of them are drawable. See Figure~\ref{fig:computer_vis_5}
	for the computer generated drawings of the $K_5$.
	\begin{figure}[htb]
		\centering
		\includegraphics[width=0.95\textwidth]{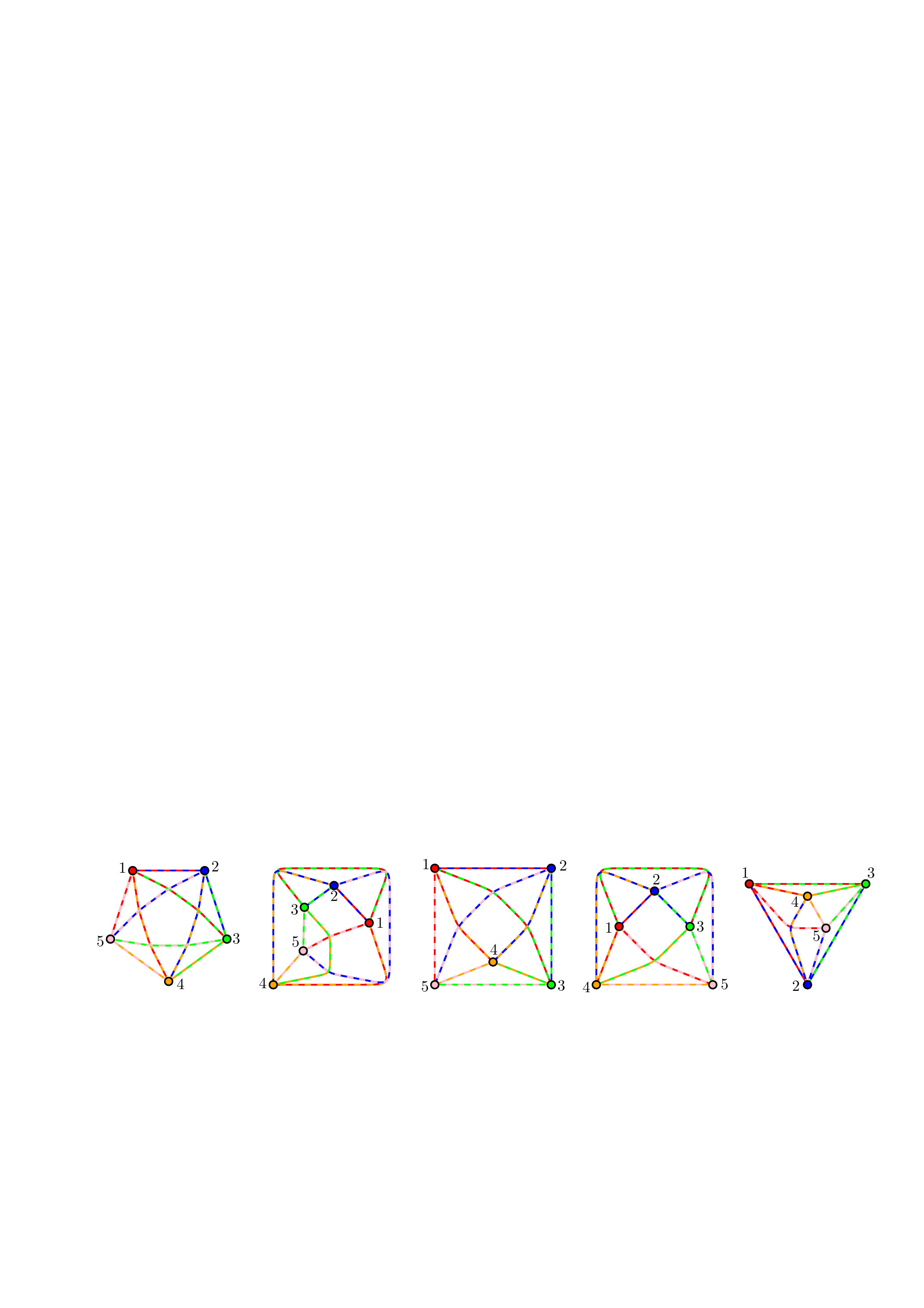}
		\caption{Computer generated drawings of the five types of rotation systems on 5 elements.}
		\label{fig:computer_vis_5}
	\end{figure}
	The
	two configurations~$\obstructionFiveA$ and~$\obstructionFiveB$ are not drawable.
	Last but not least, 
	we use the framework to enumerate 
	102 non-isomorphic pre-rotation systems on 6 elements,
	that do not contain~$\obstructionFour$, $\obstructionFiveA$, or~$\obstructionFiveB$.
	With the drawing framework we verify 
	that all of them are drawable.
	The technical details are deferred to Appendix~\ref{alternative_proof}.
\end{proof}

While a complete enumeration for $n=4$ and $n=5$ can easily be done by hand, the task for $n=6$ is complex, unpleasant, and error-prone.

\thmclassificationRS*

\begin{proof}
	Let $\Pi$ be a pre-rotation system. 
	If $\Pi$ contains at least one of the obstructions $\obstructionFour$, $\obstructionFiveA$ or $\obstructionFiveB$ as a subconfiguration, then $\Pi$ is non-drawable.
	
	Now assume that $\Pi$ is a pre-rotation system
	which does not contain the obstructions
	$\obstructionFour$, $\obstructionFiveA$, and $\obstructionFiveB$. By Proposition~\ref{proposition:rotsys_classification_n6}
	we know that all induced 4-, 5-, and 6-element subconfigurations of $\Pi$ are drawable
	and by Observation~\ref{observation:basics}(\ref{item:obstructionfour_crossingsdetermined}) the pairs of crossing edges are fully determined. 
	In particular, the pairs of crossing edges in 6-tuples belong to a drawable pre-rotation system. 
	Kyn\v cl, \cite[Theorem~1.1]{Kyncl2020} has shown that the pairs of crossing edges belong to a drawing of the complete graph if and only if this holds for all subdrawings induced by six vertices. 
	Hence, there is a drawing~$D'$ with the same pairs of crossing edges. 
	By Proposition~\ref{prop:PRS_different_ATgraphs},
	the drawing~$D'$ is a drawing of the given pre-rotation system $\Pi$ or its reflection.
	Therefore, $\Pi$ 
	is drawable.
	This completes the proof. 
\end{proof}

\section{Plane Hamiltonian subdrawings in convex drawings} 
\label{sec:convex_HC_proof}

In this section we give the full proof of
Theorem~\ref{theorem:convex_HC}. 
We prove the existence of the Hamiltonian
cycle in a constructive way and present a polynomial time algorithm which, for
a given convex drawing $D$ of the complete graph $K_n$ and a fixed vertex
$v_{\star}$, computes a plane Hamiltonian cycle which does not cross edges
incident to $v_{\star}$.  Since this statement is independent from the
labeling of the vertices, we assume throughout the section that
$v_{\star} = n$ and that the other vertices are labeled from $1$ to $n-1$ in
cyclic order around~$v_{\star}$, i.e., according to the rotation~$\pi_{n}$.
For the labels of vertices $1$ to $n-1$ we use the arithmetics modulo $n-1$. 

Since the convexity of a drawing in the plane is independent of the choice
of the outer face, we choose a drawing where $v_{\star}$ belongs to the outer face. 

We denote vertex $v_{\star}=n$ as the \emph{star vertex} and edges incident to~$v_{\star}$ as
\emph{star edges}.  Note that every Hamiltonian cycle contains exactly two
star edges.  We let $\hat{E}$ be the set of non-star edges and call an
edge $e \in \hat{E}$ \emph{star-crossing} if it crosses a star edge.
For the Hamiltonian cycle we only use edges which are not star-crossing.
Let $E^\circ=\hat{E} \setminus \{e \in \hat{E} : e \text{ is star-crossing}\}$
be the set of these candidate edges.

The following lemma follows from the properties of a simple drawing.

\lemmaHCisPlane*

\iffalse
\begin{proof}
	Consider the triangle $\triangle$ induced by $\{u,v,n\}$.
	The vertices $u'$ and $v'$ are on the same side of $\triangle$, 
	as otherwise $e$ would cross the edge $\{v',v_{\star}\}$ or the edge $\{u',v_{\star}\}$ -- a contradiction to the assumption that $e$ does not cross any star edge.
	
	Assume towards a contradiction, that the edges $e'$ and $e$ cross. Then edge $e'$ must cross the triangle $\triangle$ twice.
	Since the drawing is simple, $e'$ must cross either $\{u,v_{\star}\}$ or $\{v,v_{\star}\}$, which is again a contradiction because $e'$ is not star-crossing. 
	Thus, $e$ and $e'$ do not cross.
\end{proof}
\fi

Particular focus will be on the edges $e = \{v, v+1\}$ with $1\leq v < n$.
Such an edge is called \emph{good} if it is
not star-crossing. Otherwise, if
the edge $b = \{v, v+1\}$ crosses a star edge $\{w,v_{\star}\}$, then we say that $b$
is a \emph{bad edge} and $w$ is a \emph{witness} for the bad edge $b$.

If there is at most one bad
edge $\{v,v+1\}$, then the $n-2$ good edges together with the
two star edges $\{v,v_{\star}\}$ and $\{v+1,v_{\star}\}$ form a
Hamiltonian cycle which visits the non-star vertices
in the order of the rotation around $v_\star$.

\begin{restatable}{observation}{lemmaOneBadEdge}
	\label{lem:onebadedgePHC}
	If $\hat{E}$ contains at most one bad edge, then $D$ contains a
	plane Hamiltonian cycle which does not cross any star edges and visits
	the non-star vertices in the order of the rotation around the star vertex.
\end{restatable}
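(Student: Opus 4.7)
The plan is to construct the claimed Hamiltonian cycle explicitly and then verify planarity using \autoref{lem:HCplane}. First I would choose an index $v \in [n-1]$ as follows: if $\hat E$ contains a (unique) bad edge $\{v,v+1\}$ then take this $v$; otherwise pick any $v$, say $v=1$. Define the cycle $C$ to consist of the $n-2$ consecutive non-star edges
\[
\{u,u+1\} \quad \text{for } u \in [n-1]\setminus\{v\},
\]
together with the two star edges $\{v,n\}$ and $\{v+1,n\}$. Since the labels are taken modulo $n-1$ and the $n-1$ edges $\{u,u+1\}$ form a cycle on the non-star vertices in rotation order, removing one of them yields a Hamiltonian path on the non-star vertices from $v+1$ to $v$; closing this path with the two chosen star edges produces a Hamiltonian cycle that visits the non-star vertices precisely in the order $v{+}1,v{+}2,\ldots,v$ of the rotation~$\pi_n$.

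Next I would verify that $C$ is plane by treating the three possible types of edge pairs. The two star edges share the vertex $n$, so they cannot cross. A star edge cannot cross any edge $\{u,u+1\}$ in $C$ because, by the choice of $v$, every such edge is good, i.e.\ lies in $E^\circ$ and therefore does not cross any star edge. For two distinct consecutive edges $\{u,u+1\}$ and $\{u',u'+1\}$ in $C$, if they share a vertex we are done; otherwise they are independent and both belong to $E^\circ$, and their four endpoints $u,u+1,u',u'+1$ appear in cyclic order around $n$ because the non-star vertices are labelled in cyclic order according to $\pi_n$. Hence \autoref{lem:HCplane} applies and they do not cross.

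There is no real obstacle here: the argument reduces to choosing the right $v$ so that every edge of the constructed cycle avoids star-crossings, and then invoking \autoref{lem:HCplane} for the independent pairs of good edges. The only thing to be slightly careful about is the modular arithmetic on $[n-1]$, which is handled by the standing convention that $v+1$ is computed modulo $n-1$ and the vertices are listed in the cyclic rotation order around~$v_\star$.
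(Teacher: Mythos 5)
Your proposal is correct and matches the paper's (commented-out) proof almost verbatim: the paper relabels so that the possibly-bad edge becomes $\{1,n-1\}$ and takes the cycle $(n,1,2,\ldots,n-1,n)$, whereas you simply name the index $v$ of the bad edge and omit $\{v,v+1\}$; both then verify planarity by observing that star edges meet at $v_\star$, that all other cycle edges are good so cannot cross a star edge, and that independent good edges cannot cross by Observation~\ref{lem:HCplane}.
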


\iffalse
\begin{proof}
	Without loss of generality, we assume that all edges
	$\{1,2\}, \{2,3\}, \ldots, \{n-2,n-1\}$ are good
	(only the edge $\{1,n-1\}$ might be bad),
	as otherwise we relabel the vertices $[n-1]$ accordingly.
	We consider the Hamiltonian cycle $C = (n,1,2,3,\ldots, n-1,n)$,
	which clearly traverses the neighbors of~$n$ in the order of the rotation.
	The cycle $C$ clearly does not cross any star edge
	as it consists of two star edges and $n-2$ good edges.
	
	Furthermore, the cycle $C$ is plane.
	The edges $\{1,v_{\star}\}$ and $\{n-1,v_{\star}\}$ are star edges and thus have no crossings in~$C$.
	Thus, an edge $e=\{v, v +1\}$ crosses 
	an edge $e'=\{v', v' +1\}$.
	If $e$ and $e'$ are adjacent, they do not cross by the properties of a simple drawing.
	Otherwise they cannot cross by the following lemma since $e$ and $e'$ are both good and hence not star-crossing.
\end{proof}
\fi

An example of a drawing with two bad edges is illustrated in
Figure~\ref{fig:rotsys_obstructions_hconvex} with
the choice of $v_\star=3$ and the two bad edges $\{4,6\}$ and $\{6,2\}$.

For the proof of the theorem it is critical to understand the structure
of bad edges in a convex drawing of $K_n$.
We start with an easy lemma concerning the convex side of the triangle
spanned by~$v_{\star}$ and a bad edge.
If $b = \{v, v+1\}$ is a bad edge with witness $w$, then  the side
of the triangle spanned by $\{v, v+1, v_{\star}\}$ that contains $w$ is not
convex, since $v_{\star}$ and $w$ both belong to this side but the edge
$\{v_{\star}, w\}$ is not fully contained in this side.

\obsConvexSide*

Every triangle in a convex drawing has at least one convex side. Thus, for every bad edge $\{v, v+1\}$ the triangle spanned by $\{v, v+1, v_{\star}\}$ has exactly one convex side which is the side not containing the witnesses.

The following lemma will be the key to understand the structure of bad edges.
To avoid a large case distinction, we used SAT to show that there is no convex
drawing on at most 7 vertices which violates the statement. Since convex
drawings have a hereditary structure 
(i.e., the drawing stays convex when a vertex and its adjacent edges are removed), 
this proves the lemma.

\lemmaBadEdgesNested*

\begin{proof}
	Assume towards contradiction that a convex drawing of $K_n$ contains two bad edges $b = \{v, v+1\}$ and $b' = \{v', v'+1\}$  with witnesses $w$ and $w'$, respectively, such that $w',w,v,v',$ do not occur in this particular cyclic order or reversed. 
	We may assume that $n \le 7$ because 
	the vertices $v,v+1,w,v',v'+1,w',v_{\star}$ (which are not necessarily distinct) induce a convex drawing on at most 7 vertices which has the desired property.
	Moreover, all drawings on $n \le 4$ have at most one bad edge since simple drawing on $n \le 4$ vertices have at most one crossing. 
	We used our SAT framework to enumerate all convex rotation systems on 5, 6, and 7 vertices. 
	Every solution which is a convex rotation system gets immediately tested whether they fulfill the desired property. 
	Each of them has the property that the vertices appear in the order $w',w,v,v'$ or reversed around the vertex~$n$ and it holds $w \neq w'$.
	This is a contradiction to the assumption.
	To verify this statement, run the program with the following parameters:
\begin{verbatim}
    python rotsys_find.py 5 -a -c --testNestedLemmaPart1
    python rotsys_find.py 6 -a -c --testNestedLemmaPart1
    python rotsys_find.py 7 -a -c --testNestedLemmaPart1
\end{verbatim}
	Note that the parameter \verb|-a| is
	used to enumerate and test all rotation systems
	which are then tested for the desired properties. And the parameter 
	\verb|-c| restricts the search space to convex rotation systems.
	
	To show the moreover statement, assume without loss of generality 
	that two bad edges $b$ and $b'$ with witnesses $w$ and $w'$ occur in the cyclic order $w', w, v, v+1,v',v'+1$, where possibly $v+1 = v'$ but all the other vertices are distinct.
	We show that the vertices $v',v'+1 $ are contained in the convex side $S$ of the triangle induced by $\{v,v+1,v_{\star}\}$.
	Recall that the side of the triangle containing $w$ is not convex, see lemma~\ref{obs:ConvexSide}. 
	
	If $v+1\neq v'$,
	neither of the two vertices $v'$ (Case~1) and $v' +1$ (Case~2) lies in the non-convex side of the triangle.
	To verify this statement, run the program with parameters
\begin{verbatim}
    python rotsys_find.py -c 7 --testNestedLemmaPart2Case 1 
    python rotsys_find.py -c 7 --testNestedLemmaPart2Case 2
\end{verbatim}
	
	If $v+1 = v'$, clearly $v'$ is contained in both sides of the triangle and hence also in $S$. 
	We show that $v'+1$ cannot be in the non-convex side (Case 3). 
    To verify this statement, run
\begin{verbatim}
    python rotsys_find.py -c 6 --testNestedLemmaPart2Case 3    
\end{verbatim} 
	
	Since both vertices $v'$ and $v'+1$ are contained in $S$, the bad edge $b'$ is fully contained in $S$. 
	In particular $b'$ does not cross any of the edges $b,\{w,v_{\star}\},\{v,v_{\star}\},\{v+1,v_{\star}\}$.
	Since $b'$ is a bad edge with witness $w'$, the edge $b'$ crosses the star edge $\{w',v_{\star}\}$.
	This shows that $w'$ is on one side of the triangle spanned by $b'$ and $v_{\star}$ and $v,v+1$ are on the other side. 
	With the same argument as before the side of the triangle spanned by $\{v',v'+1,v_{\star}\} $ containing $w'$ is not convex. Hence $v$ and $v+1$ are contained in the convex side of this triangle. The second part follows. 
\end{proof}

This lemma immediately implies that there is at most one bad edge
in an $h$-convex drawing.
\begin{corollary}
	Every $h$-convex drawing has at most one bad edge. 
\end{corollary}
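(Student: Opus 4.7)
The plan is to argue by contradiction. Suppose an $h$-convex drawing has two distinct bad edges $b = \{v, v+1\}$ and $b' = \{v', v'+1\}$ with witnesses $w$ and $w'$. Let $T_1 = \{v, v+1, v_\star\}$ and $T_2 = \{v', v'+1, v_\star\}$, and write $S_i^c$ and $S_i^{nc}$ for the convex and non-convex sides of $T_i$. Observation~\ref{obs:ConvexSide} gives $w' \in S_2^{nc}$, and Lemma~\ref{lem:twobadedges1} supplies $T_1 \subset S_2^c$ and $T_2 \subset S_1^c$, so in particular $T_1$ and $T_2$ meet only at $v_\star$.

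The core step is to show that $S_2^{nc} \subset S_1^c$. Topologically, since $T_2 \setminus \{v_\star\}$ lies in the open region $S_1^c$ and touches its boundary $T_1$ only at $v_\star$, exactly one of the two sides of $T_2$ is wholly contained in $S_1^c$; the other contains the connected region $S_1^{nc}$. To identify which side, I would examine the rotation around $v_\star$: the small angular sector between the consecutive star edges $\{v_\star, v'\}$ and $\{v_\star, v'+1\}$ is the local footprint at $v_\star$ of the side of $T_2$ containing the witness $w'$, namely $S_2^{nc}$. By the cyclic order $w', w, v, v'$ around $v_\star$ supplied by Lemma~\ref{lem:twobadedges1}, this sector is disjoint in its interior from the sector between $\{v_\star, v\}$ and $\{v_\star, v+1\}$, i.e., the local footprint of $S_1^{nc}$. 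Hence the side of $T_2$ containing $S_2^{nc}$ lies inside $S_1^c$.

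Applying the $h$-convexity hypothesis to the containment $S_2^{nc} \subset S_1^c$ forces $S_2^{nc}$ to be convex, contradicting Observation~\ref{obs:ConvexSide}. The main obstacle I foresee is the topological identification of which side of $T_2$ sits inside $S_1^c$: its resolution requires matching the local picture at the shared vertex $v_\star$ with the global sides of $T_1$ and $T_2$, and relies on the characterization (already used in Observation~\ref{obs:ConvexSide}) of the non-convex side of a bad triangle as the one whose local wedge at $v_\star$ lies between its two star edges.
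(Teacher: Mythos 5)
Your proof is correct and follows the paper's approach: invoke Lemma~\ref{lem:twobadedges1} to get the cyclic order and the mutual containment of the two triangles in each other's convex sides, then derive a contradiction from the definition of $h$-convexity. The only cosmetic difference is that the paper argues that $S_1^{nc}$ (the non-convex side of $\{v,v+1,v_\star\}$) lies inside $S_2^c$ via the containment of $w,v,v+1$ in $S_2^c$, whereas you establish the symmetric containment $S_2^{nc}\subset S_1^c$ by a more explicit topological matching of the local wedges at $v_\star$ with the global sides; both are equivalent instances of the same $h$-convexity obstruction.
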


\begin{proof}
	Towards a contradiction assume, there are two bad edges $b = \{v,v+1\}$ and $b' = \{v',v'+1\}$ with witnesses $w$ and $w'$, respectively.
	Since $h$-convex drawings are convex, we can apply \autoref{lem:twobadedges1}. 
	Hence the cyclic order of the vertices around $v_{\star}$ is either 
	$w', w, v,v'$ or the reversed order
	and $w,v,v+1$ are in the convex side of the triangle spanned by $\{v'v'+1,v_{\star}\}$ and vice versa.
	But this contradicts the property of $h$-convex drawings that the side of a triangle which is contained in a convex side of a triangle is convex. 
\end{proof}

With \autoref{lem:onebadedgePHC} this implies the moreover part of \autoref{theorem:convex_HC} for 
$h$-convex drawings.

For the case that a drawing has two or more bad edges the lemma implies that there is a partition of the vertices $1, \ldots, n-1$ into two blocks
such that in the rotation around $v_{\star}$ both blocks of the partition are consecutive, one contains the vertices of all bad edges 
and the other contains all the witnesses. 
Let $b=\{v,v+1\}$ be the bad edge whose vertices are last in the
clockwise order of its block. 
We cyclically relabel the vertices such that $b$ becomes $\{n-2,n-1\}$. This makes the labels of all witnesses smaller than
the labels of vertices of bad edges. In particular we then have the following two properties:

\begin{itemize}
	\item[\emph{(sidedness)}] If $\{v,v+1\}$ is a bad edge with
	witness $w$, then $w < v$.
	\item[\emph{(nestedness)}] If $b = \{v, v+1\}$ and $b' = \{v', v'+1\}$ are
	bad edges with respective witnesses $w$ and $w'$ and if $v < v'$, then
	$w' < w$.
\end{itemize}

In addition we can apply a change of the outer face (via stereographic
projections) such that the vertex $v_{\star}$ and the initial segments of the edges $\{v_\star,1\}$ and
$\{v_\star,n-1\}$ belong to the outer face.
The nesting property implies that we can label the bad edges 
as $b_1, \ldots, b_m$ for some $m \ge 2$, such that if $b_i = \{v_i,v_i +1\}$, then
$1 < v_1 < v_2 < \ldots < v_m = n-2$.
Moreover, we denote by $w_i^L$ and $w_i^R$ the leftmost (smallest) and the rightmost (largest) witness of the bad edge $b_i$, respectively.
Then 
$ 1 \leq w_{m}^L \leq w_{m}^R  < w_{m-1}^L \leq w_{m-1}^R < \ldots
< w_{1}^L \leq w_{1}^R  $.
Sidedness additionally implies $w_{i}^R < v_i$.
Figure~\ref{fig:layeringbadedgeswithnotation_full} shows the situation for two bad edges
$b_i$ and $b_{i+1}$.
Note that $v_i+1 = v_{i+1}$ is possible. 

\begin{figure}[htb]
	\centering
	\includegraphics[scale = 0.9]{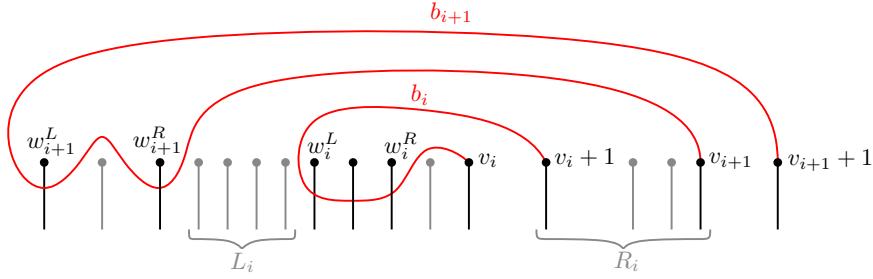}
	\caption{An illustration of sidedness and nesting for two bad edges.}
	\label{fig:layeringbadedgeswithnotation_full}
\end{figure}

Let $L_i = \{x \in [n-1]: w_{i+1}^R < x < w_i^L\}$ and
$R_i = \{x \in [n-1]: v_{i} +1\leq x \leq v_{i+1}\}$ denote the left and
the right blocks of vertices between two consecutive bad edges $b_i$ and
$b_{i+1}$, see Figure~\ref{fig:layeringbadedgeswithnotation}.
Note that $R_i$ is non-empty since it always contains $v_i+1$ and $v_{i+1}$ but $L_i$ might be empty. 

With the following lemma 
we identify useful edges in $E^\circ$.

\lemTwoUsefullEdges*
\iffalse
\begin{restatable}{lemma}{lemTwoUsefullEdges}
	\label{lem:leftrightmostwitness}
	For all $i \in [m]$, the edges $\{w_i^L, v_i+1\}$ and $\{w_i^R, v_i\}$ are not star-crossing.
\end{restatable}
\fi

\begin{figure}[htb]
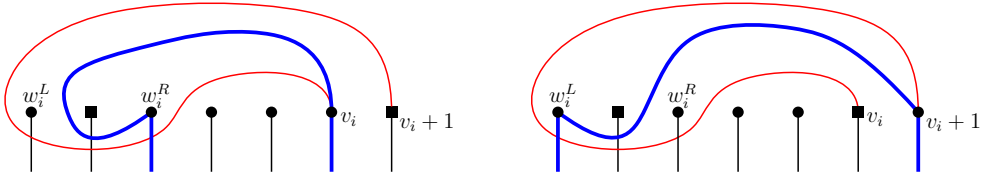

	\centering
	
	%    \scalebox{1}[0.75]{
		\hbox{}\hfill
		\includegraphics[page =1, scale =0.7]{figs/insidebadedge.pdf} 
		\hfill 
		\includegraphics[page =2, scale = 0.7]{figs/insidebadedge.pdf}
		\hfill\hbox{}
		%    }
	
	\caption{
		The blue edges $\{w_i^R,v_i\}$, $\{w_i^L,v_i+1\}$ are not star-crossing.
		The vertices marked with the squares are witnesses for the corresponding side of the blue triangle not being convex.
	}
	\label{fig:rightmostwitness}
\end{figure}

\begin{proof}
	For a fixed $i$, both edges are fully contained in the the non-convex side of the triangle spanned by $\{v_i,v_i+1,v_{\star}\}$ since adjacent edges do not cross in a simple drawing. 
	Assume $\{w_i^L,v_i+1\}$ crosses a star edge $\{x,v_{\star}\}$. Then $x$ has to be a witness of $b_i$ with $ x > w_i^L$.
	However, the side of the triangle $\{w_i^L,v_i+1,v_{\star}\}$ that contains $x$ is not convex due to the edge $\{x,v_{\star}\}$. 
	Additionally, the other side is not convex due to the edge $\{v_i,v_i+1\}$. A similar argument holds for the edge $\{w_i^R,v_i\}$. Both situations are depicted in Figure~\ref{fig:rightmostwitness}.
\end{proof}

The general idea to  construct a plane Hamiltonian cycle
is as follows: Begin with the edge $\{v_\star,v_1\}$. When vertex $v_i$ is visited,
we go from $v_i$ to $w_{i}^R$ and then one by one with decreasing labels to $w_i^L$.
From there we want to visit
$v_i+1$ and then one by one in increasing order through the vertices of $R_i$ until we reach $v_{i+1}$. When we reach $v_m+1=n-1$ a Hamiltonian path is constructed. This path can be closed to a cycle with the edge $\{v_m+1,v_\star\}$. 
These are all ingredients to find a plane Hamiltonian cycle in the special case where all $L_i = \emptyset$. 
However in general this condition does not hold and we have to make sure to visit all vertices in $L_i$ in between. 
Figure~\ref{fig:easyexample_HC} shows an example for the case $L_i = \emptyset$.

The strategy is to identify
edges in $E^\circ$ which connect a vertex in $L_i$ with a vertex in $R_i$ in such a way that we proceed in each step either one step to the left in $L_i$ or one step to the right in $R_i$.
These edges then allow constructing a path from $v_i$ to $v_{i+1}$.
Starting at $v_i$ we collect the remaining vertices from $L_{i-1}$ 
and continue with the vertices from $w_i^R$ to $w_i^L$ with decreasing index as in the previous case where $L_i=\emptyset$.
Additionally, we continue collecting some of the vertices in $L_i$ until we reach one of the
chosen edges in $E^\circ$ connecting a vertex from $L_i$ to $v_i+1$, which we use.
In a second step we construct a path from $v_i+1$ to $v_{i+1}$ collecting all vertices in $R_i$ and some of the vertices in $L_i$.
This yields a plane Hamiltonian cycle which has
no crossing with a star edge.
The details are deferred to the appendix.

In the following we describe which edges of $E^\circ$ we use to construct the Hamiltonian cycle. 
For this we consider edges from $L_i$ to $R_i$ and analyse which edges are not star-crossing.
By the properties of a convex drawing the edges between vertices of $L_i$ and $R_i$ stay in the region between the two bad edges $b_i$ and $b_{i+1}$.
This implies that the only star edges which can be crossed by those edges are $\{v_\star,x\}$ with $x \in L_i \cup R_i$.
\begin{lemma}
	\label{lem:nocrossingsoutsidelandR}
	All edges $\{u,v\}$ with $u,v \in L_i \cup R_i$ do not cross star edges $\{z,v_{\star}\}$ with $z \in V \backslash (L_i \cup R_i)$.
\end{lemma}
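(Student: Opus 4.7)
The plan is to obtain a contradiction by exploiting the hierarchy of convex sides induced by the two bad edges $b_i$ and $b_{i+1}$.

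First I will verify that $L_i \cup R_i \subseteq B_i \cap B_{i+1}$, where $B_i$ and $B_{i+1}$ denote the convex sides of the triangles $\triangle_i = \{v_i, v_i+1, v_\star\}$ and $\triangle_{i+1} = \{v_{i+1}, v_{i+1}+1, v_\star\}$. A vertex $x \in L_i$ satisfies $w_{i+1}^R < x < w_i^L$, so $x$ is neither a witness of $b_i$ (those have labels in $[w_i^L, w_i^R]$) nor of $b_{i+1}$ (labels in $[w_{i+1}^L, w_{i+1}^R]$), hence $x$ lies in both convex sides. The analogous check for $x \in R_i$ uses sidedness, giving $x \ge v_i + 1 > w_i^R \ge w_{i+1}^R$. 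Since $B_i$ and $B_{i+1}$ are convex, the edge $\{u,v\}$ with $u,v \in L_i \cup R_i$ is fully drawn in $\bar B_i \cap \bar B_{i+1}$; in particular it avoids $b_i$, $b_{i+1}$, and the four star edges bounding $\triangle_i$ and $\triangle_{i+1}$, and the star edges $\{v_\star,u\},\{v_\star,v\}$ are also contained in $\bar B_i \cap \bar B_{i+1}$ (as $u,v$ are not witnesses of $b_i$ or $b_{i+1}$).

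Now suppose for contradiction that $\{u,v\}$ crosses a star edge $\{z,v_\star\}$ with $z \in V \setminus (L_i \cup R_i)$, and consider the triangle $T = \{u, v, v_\star\}$. A direct adaptation of Observation~\ref{obs:ConvexSide} shows that the side of $T$ containing $z$ cannot be convex: both $v_\star$ and $z$ lie in its closure, yet the edge $\{v_\star,z\}$ leaves that side through its crossing with $\{u,v\}$. Hence $z$ lies in the non-convex side of $T$, and the opposite side $S$ is convex. The heart of the argument will be to show that every $z \in V \setminus (L_i \cup R_i)$ actually lies in $S$, contradicting what we just derived. I plan to distinguish three cases: (a)~$z$ is a witness of $b_i$, (b)~$z$ is a witness of $b_{i+1}$, and (c)~$z \in (\bar B_i \cap \bar B_{i+1}) \setminus (L_i \cup R_i)$, i.e.\ the label of $z$ lies in $[1,w_{i+1}^L-1] \cup (w_i^R, v_i] \cup [v_{i+1}+1, n-1]$. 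In case (a) one has $z \in B_i^c$; since $T \subseteq \bar B_i$ and $T$ meets $\partial B_i$ only at $v_\star$ (as $u,v \notin \{v_i,v_i+1\}$, handled with a minor separate treatment on the boundary of $R_i$), a Jordan-curve argument places all of $B_i^c$ on the same side of $T$, and choosing a control point (e.g.\ a witness of $b_i$ reached from $S$ via the star edge $\{v_\star,v_i\}$ and a piece of $b_i$) identifies that side as $S$. Case~(b) is symmetric.

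The main obstacle will be case (c), in which both $\{u,v\}$ and $\{v_\star,z\}$ live inside $\bar B_i \cap \bar B_{i+1}$ and the macroscopic convex sides alone do not separate them. I plan to handle it by building, for each sub-range of $z$, an explicit path from $z$ to $v_\star$ that stays in $\bar B_i \cap \bar B_{i+1}$, avoids the interior of $T$, and uses only star edges and pieces of bad edges $b_j$ with $j \ne i,i+1$, the nested ordering of which is supplied by Lemma~\ref{lem:twobadedges1}. I expect that case (c) ultimately reduces, by the hereditary nature of convex drawings, to a constant-size sub-drawing on $u, v, z, v_\star$ together with at most four of the neighbouring vertices $v_i, v_i+1, v_{i+1}, v_{i+1}+1$; on such a sub-drawing the nestedness given by Lemma~\ref{lem:twobadedges1} (itself SAT-verified) forces $\{v_\star, z\}$ to exit the pocket enclosed by $b_i$ and $b_{i+1}$ on the opposite side of $\{u,v\}$ from $v_\star$, and hence $z \in S$ as required.
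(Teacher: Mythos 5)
Your proposal is a genuinely different strategy from the paper's, but it is not a proof: it is an explicit plan with the hardest step left open. The paper argues directly about where the edge $\{u,v\}$ can live. It observes that the convex sides $S_i$ and $S_{i+1}$ of the triangles $\{v_i,v_i+1,v_\star\}$ and $\{v_{i+1},v_{i+1}+1,v_\star\}$ have a common intersection (this uses the ``moreover'' part of Lemma~\ref{lem:twobadedges1}), that the two star edges $\{w_{i+1}^R,v_\star\}$ and $\{w_i^L,v_\star\}$ cut this intersection into three regions, and that $u,v$ both lie in the middle one. Because the drawing is convex, $\{u,v\}$ must stay inside $S_i\cap S_{i+1}$, and because a simple drawing allows at most one crossing with each of the two bounding star edges, $\{u,v\}$ cannot leave the middle region. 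That immediately rules out crossings with $\{z,v_\star\}$ for every $z\notin L_i\cup R_i$, witnesses and non-witnesses alike.

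Your approach instead fixes the hypothetical crossing, passes to the triangle $T=\{u,v,v_\star\}$, and tries to locate every $z\notin L_i\cup R_i$ on the convex side of $T$. Cases (a) and (b) (witnesses of $b_i$ or $b_{i+1}$) at least have a plausible Jordan-curve skeleton, though the boundary cases $u\in\{v_i+1, v_{i+1}\}$, which you only mention, would need real work. The genuine gap is case (c): the vertices with labels in $[1,w_{i+1}^L-1]\cup(w_i^R,v_i]\cup[v_{i+1}+1,n-1]$, which sit inside $\bar B_i\cap\bar B_{i+1}$ just like $u,v$ do. You acknowledge that ``the macroscopic convex sides alone do not separate them'' and propose to construct ad hoc paths and reduce to constant-size SAT-checkable sub-drawings, but nothing is actually established. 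This is precisely the situation the paper's three-region argument was designed to handle, and without an analogous finishing move your case (c) remains a conjecture, not a proof. Until you either carry out the case-(c) reduction or replace it with a confinement argument for $\{u,v\}$ in the spirit of the paper, the lemma is not proved.
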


\begin{proof}
	The convex sides $S_i$ and $S_{i+1}$ of the triangles spanned by the bad edges $b_i$, respectively $b_{i+1}$, and the star vertex $v_{\star}$ have a common intersection which is partitioned into three regions by the edges $\{w_{i+1}^R,v_{\star}\}$ and $\{w_i^L,v_{\star}\}$.
	Both vertices $u,v$ are contained in the region of the intersection of the convex sides $S_i$ and $S_{i+1}$ that is bounded by both edges $\{w_{i+1}^R,v_{\star}\}$ and $\{w_i^L,v_{\star}\}$.
	Since the edge $\{u,v\}$ has to lie in $S_i$ and $S_{i+1}$ and can cross $\{w_{i+1}^R,v_{\star}\}$ and $\{w_i^L,v_{\star}\}$ at most once, it has to be fully contained in the same region.
	This shows that we cannot cross star edges $\{z,v_{\star}\}$ where $z$ is outside of this region, i.e., $z \in V \backslash (L_i \cup R_i)$.
\end{proof}

Moreover, an edge from a vertex in $L_i$ to a vertex in $R_i$ cannot cross star edges $\{z,v_{\star}\}$ with $z \in R_i$.
\begin{lemma}
	\label{lem:nocrossinginRi}
	No edge $\{u,v\}$ with $u \in L_i$  and $v \in R_i$ crosses a star edge $\{z,v_{\star}
	\}$ with $z \in R_i$.
\end{lemma}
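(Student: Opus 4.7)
The plan is to argue by contradiction: suppose $\{u,v\}$ with $u\in L_i$, $v\in R_i$ crosses $\{z,v_\star\}$ for some $z\in R_i$. Since two edges sharing a vertex cannot cross in a simple drawing, $v\neq z$. Applying the $K_4$ rotation-system characterization of crossings (Observation~\ref{observation:basics}) at $v_\star$, whose rotation is $1,2,\dots,n-1$, the crossing forces $z$ to lie cyclically between $u$ and $v$; combined with $u<v$ and $u<z$ (since $u\in L_i$ while $v,z\in R_i$), this pins down $u<z<v$.

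The two sub-cases in which $z$ is a bad-edge endpoint follow directly from convexity. If $z=v_i+1$, then $\{z,v_\star\}$ is an edge of $T_i$, hence on $\partial S_i$. Both $u\in L_i$ and $v\in R_i$ are non-witnesses of $b_i$, so $u,v\in\overline{S_i}$, and convexity of $S_i$ forces $\{u,v\}\subseteq\overline{S_i}$. A proper transversal crossing with the boundary edge $\{z,v_\star\}$ would then create points of $\{u,v\}$ in the open non-convex side of $T_i$, contradicting $\{u,v\}\subseteq\overline{S_i}$. The case $z=v_{i+1}$ is handled symmetrically, using $T_{i+1}$ and $S_{i+1}$ in place of $T_i$ and $S_i$.

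The crux is the interior sub-case $v_i+1<z<v_{i+1}$, in which both $\{u,v\}$ and $\{z,v_\star\}$ sit strictly inside $\overline{S_i}\cap\overline{S_{i+1}}$, so convexity of a single bad-edge triangle does not suffice. My plan is to exploit that, in this range, neither of the consecutive-vertex edges $\{z-1,z\}$, $\{z,z+1\}$ equals any bad edge $b_j$ (one would need $z\in\{v_j,v_j+1\}$, which is ruled out by sidedness and nestedness), so both are good and not star-crossing. Combining the convex sides of the auxiliary triangles $\{z-1,z,v_\star\}$, $\{z,z+1,v_\star\}$, and $\{z,v,v_\star\}$ with \autoref{lem:nocrossingsoutsidelandR} (which already excludes any crossing of $\{u,v\}$ with a star edge to a vertex outside $L_i\cup R_i$) should confine $\{u,v\}$ to a region that does not meet $\{z,v_\star\}$. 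Since the whole configuration involves at most ten distinct vertices, namely $u,v,z,v_\star,v_i,v_i+1,v_{i+1},v_{i+1}+1,w_i^L,w_{i+1}^R$, and convexity is hereditary, the remaining case reduces to a bounded-size statement that I would verify with the SAT framework, exactly as was done for \autoref{lem:twobadedges1}. The main obstacle is precisely this interior sub-case: a single bad-edge triangle does not place the crossing on its boundary, so one has to combine convexity constraints from several triangles simultaneously, and I expect the cleanest resolution to be the same SAT-based, hereditary case analysis that underlies \autoref{lem:twobadedges1}.
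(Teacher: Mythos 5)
Your opening reduction is wrong, and the error propagates through the rest of the argument. You assert that ``the crossing forces $z$ to lie cyclically between $u$ and $v$'' and conclude $u<z<v$. But the rotation around $v_\star$ alone does not determine crossings; Observation~\ref{observation:basics}(\ref{item:obstructionfour_crossingsdetermined}) is about the \emph{full} induced rotation system on the four vertices, not a single rotation. Concretely, both cyclic restrictions $(u,z,v)$ and $(u,v,z)$ around $v_\star$ are compatible with the crossing pair $\{u,v\}\times\{z,v_\star\}$ (they correspond to the two reflections of the crossing $K_4$), so both $u<z<v$ and $u<v<z$ are a priori possible. Indeed the paper's proof explicitly treats $z>v$ as Cases~1 and~2. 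Your three sub-cases $z=v_i+1$, $z=v_{i+1}$, and $v_i+1<z<v_{i+1}$ do happen to cover all of $R_i$, and the two endpoint arguments via $\overline{S_i}$ and $\overline{S_{i+1}}$ are sound, but the interior case is broader than you acknowledge (it includes $z>v$), and you must handle it.

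For the remaining interior case you propose a SAT check on $\{u,v,z,v_\star,v_i,v_i+1,v_{i+1},v_{i+1}+1,w_i^L,w_{i+1}^R\}$, but you never state the finite property to be refuted, and that vertex set is not obviously sufficient to express it (for instance, you invoke the good edges $\{z-1,z\}$ and $\{z,z+1\}$, whose endpoints you do not include). The paper's route is cleaner and uses no new computation: for each sign of $z-v$ it selects a seven-vertex subdrawing in which $u$ and $v$ become consecutive in the rotation around $v_\star$, so $\{u,v\}$ is a bad edge of that subdrawing with witness $z$; pairing it with $b_i$ (resp.\ $b_{i+1}$), the cyclic order of witnesses and bad-edge endpoints then contradicts the already-established Lemma~\ref{lem:twobadedges1}. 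You should either adopt that reduction or, if you keep the SAT route, spell out precisely which 5--7 vertex statement is being checked and justify that it captures the interior case, including $z>v$.
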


\begin{proof}
	Figure~\ref{fig:nocrossinginRi} shows the four obstructions to the statement. 
	Let $w_i, w_{i+1}$ be the witnesses of the bad edges $b_i = \{v_i,v_i+1\}$ and $b_{i+1} = \{v_{i+1}, v_{i+1}+1\}$, respectively.
	If $\{u,v\}$ crosses a star edge $\{z,v_{\star}\}$ with $z \in R_i$ such that $z >v$ we are in Case 1 or Case 2. 
	In both cases, we consider the subdrawing induced by
	$w_{i+1},u,v,z,v_{i+1}, v_{i+1}+1,v_{\star}$, which is a 
	subdrawing on 7 vertices and $u,v$ are adjacent in the rotation 
	around $v_{\star}$. Hence $\{u,v\}$ is a bad edge in the subdrawing. 
	This cannot happen in a convex drawing since the cyclic order of 
	the vertices around $v_{\star}$ violates the statement of 
	Lemma~\ref{lem:twobadedges1}.
	We argue similarly in the other two cases. 
	If $z<v$, we consider the subdrawing induced by 
	$u,w_i,v_i,v_i+1,z,v,v_{\star}$. Again the edge $\{u,v\}$ is
	a bad edge with witness $z$ since $u$ and $v$ are adjacent in the
	subdrawing. By Lemma~\ref{lem:twobadedges1} this order of vertices
	corresponding to two bad edges cannot happen.     
\end{proof}

\begin{figure}[htb]
	\centering
	
	%\hbox{}
	%\hfill{}
	\begin{subfigure}[b]{.45\textwidth}
		\centering
		\includegraphics[page=1,width = \textwidth]{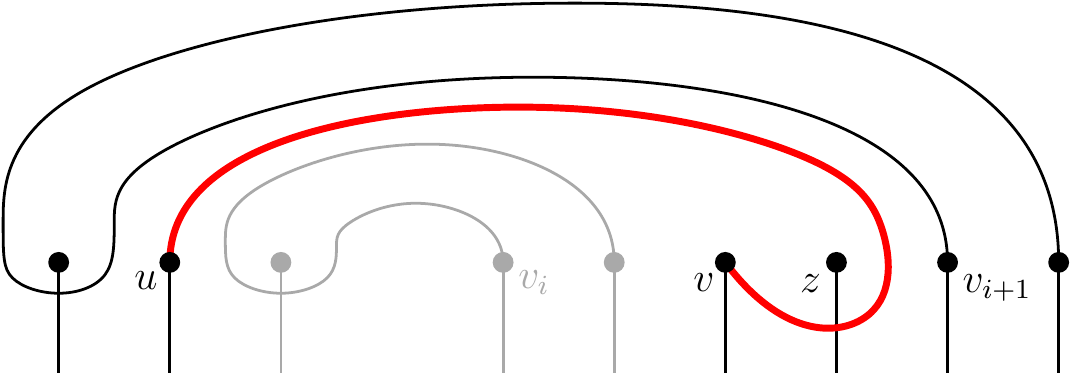}
		\caption{Case 1}
		\label{fig:nocrossinginRiCase1}
	\end{subfigure}
	\hfill
	\begin{subfigure}[b]{.45\textwidth}
		\centering
		\includegraphics[page=3,width = \textwidth]{figs/notexistingconfiguration.pdf}
		\caption{Case 2}
		\label{fig:nocrossinginRiCase2}
	\end{subfigure}
	%\hfill{}
	%\hbox{}
	
	%\hbox{}
	%\hfill{}
	\begin{subfigure}[b]{.45\textwidth}
		\centering
		\includegraphics[page=2,width = \textwidth]{figs/notexistingconfiguration.pdf}
		\caption{Case 3}
		\label{fig:nocrossinginRiCase3}
	\end{subfigure}
	\hfill
	\begin{subfigure}[b]{.45\textwidth}
		\centering
		\includegraphics[page=4,width = \textwidth]{figs/notexistingconfiguration.pdf}
		\caption{Case 4}
		\label{fig:nocrossinginRiCase4}
	\end{subfigure}
	%\hfill{}
	%\hbox{}
	
	\caption{An illustration of the four forbidden configurations to prove  Lemma~\ref{lem:nocrossinginRi}. The red edges cannot cross the star edge $\{z,v_{\star}\}$ as depicted.  
	}
	\label{fig:nocrossinginRi}
\end{figure}

The previous lemmas show that if $\{u, v\}$ with $u\in L_i$ and 
$v\in R_i$ intersects a star edge $\{z,v_{\star}\}$, then $z\in L_i$.
Furthermore, the edge $\{u,v\}$ cannot cross two star edges $\{z_1,v_{\star}\}$ and $\{z_2,v_{\star}\}$ with $z_1 <u <z_2$ and $z_1,z_2 \in L_i$. This is because the triangle spanned by $\{u,v,v_{\star}\}$ has at least one convex side. For an illustration, see Figure~\ref{fig:crossingsonlyoneside}.

\begin{figure}[htb]
	\centering
	\includegraphics[width = 0.45\textwidth , page = 1]{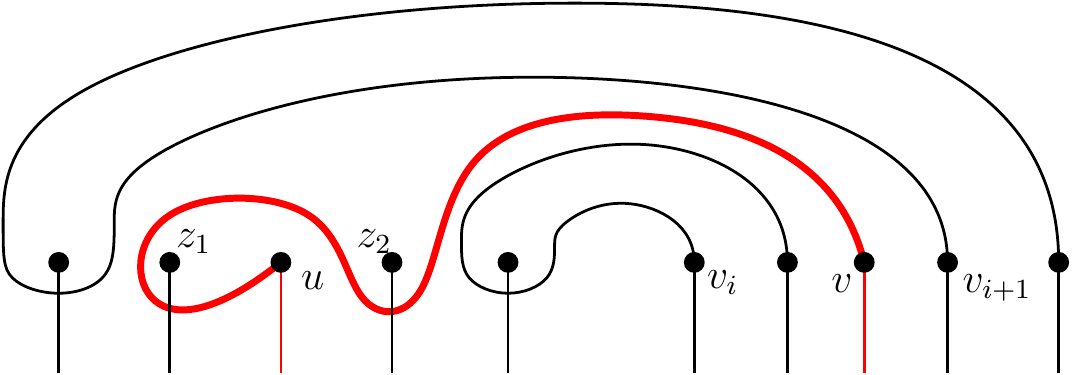}
	\hfill
	\includegraphics[width = 0.45\textwidth, page = 2]{figs/edgescrossed3.pdf}
	\caption{Illustration for the proof of Lemma~\ref{lem:crossingsonlyoneside}. 
		The red triangle $\{u,v,v_{\star}\}$ has no convex side. Witnesses for the non-convexity are the edges $\{z_1,v_{\star}\}$ and $\{z_2,v_{\star }\}$.
	}
	\label{fig:crossingsonlyoneside}
\end{figure}

\begin{lemma}
	\label{lem:crossingsonlyoneside}
	The edge $\{u,v\}$ with $u \in L_i$ and $v \in R_i$ must not cross the two star edges
	$\{z_1,v_{\star}\}$ and $\{z_2,v_{\star}\}$ with $z_1 <u < z_2$ and $z_1,z_2 \in L_i$.
\end{lemma}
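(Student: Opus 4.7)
The plan is to derive a contradiction directly from the convexity of $D$ applied to the triangle $T$ spanned by $\{u, v, v_\star\}$: I will exhibit one witness on each side of $T$ showing that neither side is convex, contradicting the defining property of a convex drawing. The hypotheses $z_1 < u < z_2$ together with $z_2 \in L_i$ and $v \in R_i$ (which force $z_2 < v$, since every label in $L_i$ is smaller than every label in $R_i$) are exactly what is needed to place the star edges $\{z_1, v_\star\}$ and $\{z_2, v_\star\}$ on opposite sides of $T$ near $v_\star$. The two assumed crossings with $\{u,v\}$ then transport these witnesses across $T$.

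Concretely, I would first use the natural rotation $\pi_{v_\star} = (1, 2, \ldots, n-1)$ to identify the two local sides of $T$ at $v_\star$: the star edges $\{u, v_\star\}$ and $\{v, v_\star\}$ partition the rotation at $v_\star$ into the two arcs $\{u+1, \ldots, v-1\}$ and $\{v+1, \ldots, n-1, 1, \ldots, u-1\}$, corresponding to the sides I will call $S_1$ and $S_2$, respectively. Since $u < z_2 < v$, the edge $\{z_2, v_\star\}$ leaves $v_\star$ into $S_1$, whereas $z_1 < u$ forces $\{z_1, v_\star\}$ to leave into $S_2$. Each of these star edges is adjacent at $v_\star$ to the remaining two edges of $T$, so by simplicity it cannot cross $\{u, v_\star\}$ or $\{v, v_\star\}$; combined with the assumed (unique, again by simplicity) crossing with $\{u, v\}$, each edge crosses $\partial T$ exactly once and therefore switches sides. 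Hence $z_1 \in S_1$ and $z_2 \in S_2$.

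At this point the edge $\{z_1, v_\star\}$ has both endpoints in the closure of $S_1$ (the vertex $z_1$ and the corner $v_\star$) yet leaves $S_1$ into $S_2$, so $S_1$ is not convex; symmetrically, $\{z_2, v_\star\}$ witnesses that $S_2$ is not convex. But $D$ is convex, so $T$ must have a convex side --- contradiction. The only real subtlety is the rotational bookkeeping that places $z_1$ and $z_2$ on opposite arcs cut off by $\{u, v_\star\}$ and $\{v, v_\star\}$; once that is pinned down via the order $z_1 < u < z_2 < v$, the two-sided non-convexity of $T$, and thus the contradiction, follow immediately.
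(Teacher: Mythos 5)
Your proof is correct and is exactly the argument the paper intends: consider the triangle $T=\{u,v,v_\star\}$ and show that both of its sides are non-convex, using $\{z_1,v_\star\}$ and $\{z_2,v_\star\}$ as the two witnesses. The paper states this in a single sentence and defers the rotational bookkeeping to Figure~\ref{fig:crossingsonlyoneside}; your write-up simply makes explicit the side-identification at $v_\star$ via the ordering $z_1<u<z_2<v$ and the single-crossing argument, which is what the figure is illustrating.
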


However, it is possible that edges from $L_i$ to $R_i$ cross star edges of vertices in $L_i$ with either larger or smaller indices than the end vertex in $L_i$. 
Edges from $L_i$ to $R_i$ whose end-vertex in $L_i$ is the vertex with the smallest index in $L_i$ can only cross star edges $\{z, v_{\star}\}$ where $z$ is larger than the end-vertex. 
We will focus on edges which only cross star-edges with larger indices than the end-vertex in $L_i$. 
As the following lemma shows those edges help us to find edges from $L_i$ to $R_i$ which are not star crossing.

\begin{lemma}\label{lem:edgesfromlefttoright}
	Let $u \in L_i$ and $v \in R_i$ and $z$ be the largest integer in $L_i$ with $z>u$
	such that the edge $\{u,v\}$ crosses the star edge $\{z,v_{\star}\}$.
	Then the following two statements hold:
	\begin{enumerate}[(a)]
		\item 
		\label{lem:edgesfromlefttoright:item1}
		The edge $\{z,v\}$ is not star-crossing.
		\item 
		\label{lem:edgesfromlefttoright:item2}
		The edge $\{z+1,v\}$ does not cross the edges $\{x',v_{\star}\}$ with $x' \in L_i$ and $x' \le z$. 
	\end{enumerate}
\end{lemma}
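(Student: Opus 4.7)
My plan is to prove both statements by contradiction, using the maximality of $z$ together with the structural lemmas already available.

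For part~(a), I would assume that $\{z,v\}$ crosses some star edge $\{y,v_\star\}$. By Lemma~\ref{lem:nocrossingsoutsidelandR} and Lemma~\ref{lem:nocrossinginRi}, the vertex $y$ must lie in $L_i$, and clearly $y \neq z$. Applying Lemma~\ref{lem:crossingsonlyoneside} to the edge $\{z,v\}$ then shows that all its star-crossings with the other endpoint in $L_i$ lie on one consistent side of $z$, so either $y>z$ throughout or $y<z$ throughout.

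In the subcase $y>z$, I would show that the assumed crossing of $\{z,v\}$ with $\{y,v_\star\}$, together with the hypothesized crossing of $\{u,v\}$ with $\{z,v_\star\}$ and the cyclic order $(u,z,y,v)$ around $v_\star$, forces $\{u,v\}$ to cross $\{y,v_\star\}$ as well; since $y>z$, this contradicts the maximality of $z$. The key geometric step is to examine the triangle $\{z,v,v_\star\}$: the side containing $y$ is not convex, because $\{y,v_\star\}$ leaves $v_\star$ but reaches $y$ only after crossing $\{z,v\}$. An analogous analysis of the triangle $\{u,z,v_\star\}$ places $u$ consistently, and the remaining convex sides then force the claimed crossing. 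In the complementary subcase $y<z$, the edges $\{u,v\}$ and $\{z,v\}$ share the vertex $v$, and I would derive the contradiction by comparing the convex sides of the triangles $\{u,z,v\}$ and $\{z,v,v_\star\}$. Since the whole argument lives inside the subdrawing on the at most five vertices $\{u,y,z,v,v_\star\}$ and convex drawings are hereditary, the entire case analysis can alternatively be discharged by a SAT enumeration over convex drawings of $K_5$, in the same spirit as the proof of Lemma~\ref{lem:twobadedges1}.

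For part~(b), I would follow the same template. Assume $\{z+1,v\}$ crosses a star edge $\{x',v_\star\}$ with $x' \in L_i$ and $x' \le z$. I would combine this putative crossing with the given crossing of $\{u,v\}$ and $\{z,v_\star\}$ and with the non-star-crossing property of $\{z,v\}$ from part~(a), and then analyze the triangle $\{z,z+1,v_\star\}$ to track where $u$, $x'$, and $v$ lie. The target conclusion is that $\{u,v\}$ must cross some star edge $\{z'',v_\star\}$ with $z''>z$, once again contradicting the maximality of $z$. This too is a statement about a subdrawing of bounded size and is amenable to SAT verification. The main obstacle I expect is the subcase $y<z$ in part~(a) and the delicate relative position of $x'$ in part~(b) (especially $x'=z$ versus $x'<z$): there the contradiction with the maximality of $z$ is not immediate, and one has to combine the one-sidedness provided by Lemma~\ref{lem:crossingsonlyoneside} with the non-convex-side arguments recurrent in this section. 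Outsourcing the bounded-size subdrawing checks to the SAT framework keeps the case analysis short while preserving full rigor.
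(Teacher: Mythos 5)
Your overall plan is aligned with the paper's: reduce to $x'\in L_i$ via Lemmas~\ref{lem:nocrossingsoutsidelandR} and~\ref{lem:nocrossinginRi}, eliminate the $x'>z$ subcase using maximality of~$z$, and then derive a non-convexity contradiction. For the $x'>z$ subcase your argument is essentially the contrapositive of the paper's: the paper observes that in the $K_4$ on $\{u,v,z,v_\star\}$ the unique crossing is $\{u,v\}\times\{z,v_\star\}$, so $\{z,v\}$ never leaves the side of the triangle $\{u,v,v_\star\}$ that contains~$z$; since $\{x',v_\star\}$ with $x'>z$ stays on the other side by maximality, the two cannot cross. Concluding instead that $\{u,v\}$ must cross $\{x',v_\star\}$ is the same fact read backwards, so that step is fine.

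Where your plan diverges is in the remaining two places. First, you describe the $x'<z$ subcase of part~(a) as the hard bottleneck and propose juggling the two triangles $\{u,z,v\}$ and $\{z,v,v_\star\}$. In fact the paper's treatment of this subcase is a two-line argument using only the single triangle $\{z,v,v_\star\}$: once you know $\{z,v\}$ cannot cross $\{u,v_\star\}$ (again from the $K_4$ on $\{u,v,z,v_\star\}$), the vertices $u$ and $x'$ land on \emph{opposite} sides of $\{z,v,v_\star\}$, and each side is witnessed non-convex (by $\{u,v\}$ crossing $\{z,v_\star\}$, and by $\{x',v_\star\}$ crossing $\{z,v\}$, respectively). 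No auxiliary triangle is needed. Second, and more seriously, the triangle you propose for part~(b), namely $\{z,z+1,v_\star\}$, is not the right one: since $z$ and $z+1$ are consecutive in the rotation around $v_\star$, the wedge at $v_\star$ between the two star edges $\{z,v_\star\}$ and $\{z+1,v_\star\}$ contains no other star edge, so $u$, $v$ and $x'$ all lie on the same side of this triangle near $v_\star$ and you cannot produce non-convexity witnesses on both sides. The paper instead considers the triangle $\{z,z+1,v\}$: its edges $\{z,z+1\}$ and $\{z,v\}$ are not star-crossing (good edge and part~(a)) while $\{z+1,v\}$ supposedly is, from which one shows $u$ and $v_\star$ are separated by this triangle with both sides non-convex (witnesses $\{u,v\}$ and $\{z,v_\star\}$). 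Your SAT fallback would of course discharge both gaps since the obstruction lives in a subdrawing on at most seven vertices, but as written the hand argument for part~(b) would not go through with the triangle you chose.
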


\begin{proof}
	To show~(\ref{lem:edgesfromlefttoright:item1}),
	assume towards a contradiction that the edge $\{z,v\}$  crosses a star edge $\{x',v_{\star}\}$.
	From Lemma~\ref{lem:nocrossingsoutsidelandR} and Lemma~\ref{lem:nocrossinginRi} we know that $x' \in L_i$.
	Moreover, the edge $\{z,v\}$ has no crossing with the 
	triangle $\triangle$ induced by $\{u,v,v_{\star}\}$.
	Hence $\{z,v\}$ does not cross any star edge which is fully contained in the side of $\triangle$ which does not contain $z$.
	The choice of $z$ implies that all edges $\{x,v_{\star}\}$ 
	with $x \in L_i$ and $x > z $ do not cross $\{u,v\}$.
	Hence  $\{v,z\}$ can only cross star edges 
	$\{x',v_{\star}\}$ with $x' \in L_i$ and $x' < z$.
	
	Now observe that the triangle induced by $\{z,v,v_{\star}\}$ is not convex. The side containing the vertex $x'$ is not convex since the edge $\{x',v_{\star}\}$ crosses $\{z,v\}$.
	The other side contains the vertex $u$ and is not convex since the edge $\{u,v\}$ crosses $\{z,v_{\star}\}$.
	Figure~\ref{fig:edgesfromlefttoright} gives an illustration
	(the figure shows the two cases $x'>u$ and $x'<u$, which we do not need to distinguish).
	This is a contradiction. 
	
	\begin{figure}[htb]
		\centering
		
		\includegraphics[scale =0.7]{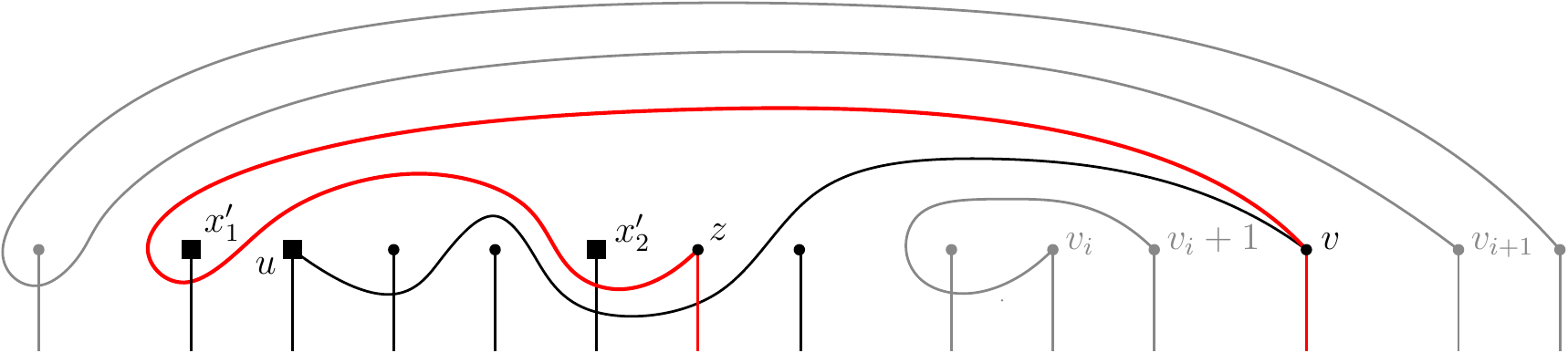}
		
		\caption{Illustration for the proof of Lemma~\ref{lem:edgesfromlefttoright}(\ref{lem:edgesfromlefttoright:item1}). The red triangle $\{z,v,v_{\star}\}$ has no convex side. Witnesses for the non-convexity are the edges $\{u,v_{\star}\}$ and $\{x_i',v_{\star }\}$.
		}
		\label{fig:edgesfromlefttoright}
	\end{figure}
	
	To show~(\ref{lem:edgesfromlefttoright:item2}),
	assume towards a contradiction that $\{z+1,v\}$ crosses a star edge $\{x',v_{\star}\}$ with $x' \in L_i$ and $x' \leq z$. 
	From~(\ref{lem:edgesfromlefttoright:item1}) we know 
	that $\{z,v\}$ does not cross any star edges.  
	Since $\{z,z+1\}$ is a good edge, the vertices $u$ and $v_\star$ are contained in different sides of the triangle $\{z,z+1,v\}$.
	This is a contradiction since both sides are not convex. 
	Witnesses for the non-convexity are the star vertex~$v_{\star}$ whose edge $\{z,v_{\star}\}$ crosses the boundary of the triangle and the edge $\{u,v\}$.
	An illustration is given in Figure~\ref{fig:edgesfromlefttoright2}.
\end{proof}

\begin{figure}[htb]
	\centering
	\includegraphics[scale = 0.7]{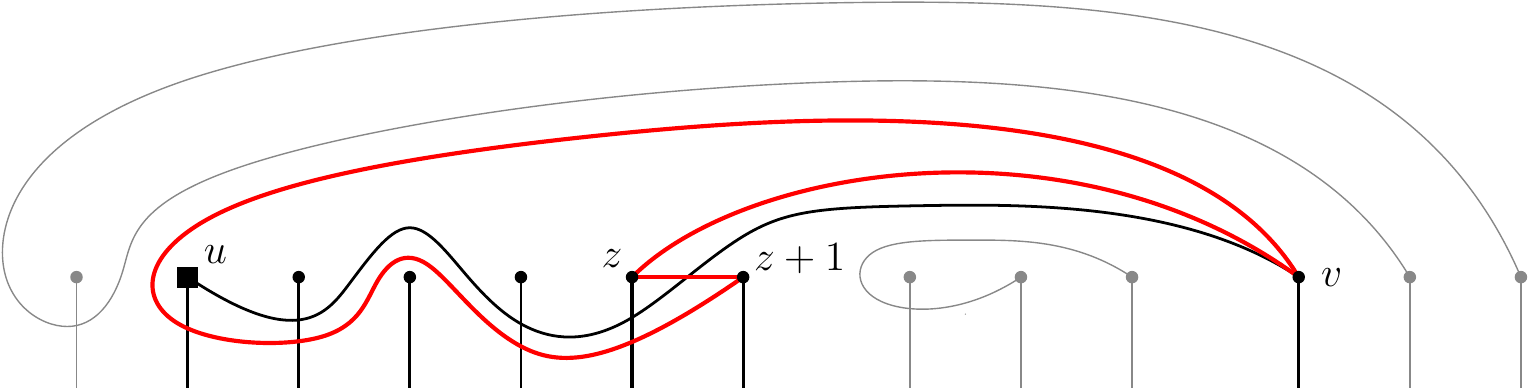}
	\caption{Illustration for the proof of Lemma~\ref{lem:edgesfromlefttoright}(\ref{lem:edgesfromlefttoright:item2}). The red triangle has no convex side. Witnesses for the non-convexity are the edges $\{z,v_\star\}$ and $\{u,v\}$.
	}
	\label{fig:edgesfromlefttoright2}
\end{figure}

If $z \in L_i$ is the vertex with the largest index such that the star edge $\{z,v_{\star}\}$ is crossed by an edge from some $x \in L_i$ to $r \in R_i$, then the edge $\{z+1,r\}$ is not star-crossing. 
For this we define the vertex $l(r) \in L_i \cup \{-\infty\}$ for all $r \in R_i$ recursively.
As a starting value we set $l(v_i) = w_i^L$.
For $r \in R_i$, 
$l(r)$ denotes the largest index in $L_i$ such that the star edge $\{l(r),v_{\star} \}$ is crossed by an edge from $L_i$ to $r$ such that the end vertex in $L_i$ has a smaller index than $l(r-1)$.
More formally, it is
\begin{align*}
	l(r) = \max \{l \in L_i: \text{ edge } \{l,v_{\star}\} \text{ crosses } \{l', r\} \text{ with } l' \in L_i\text{ , } l' <l \text{ and } l' < l(r-1) \},
\end{align*}

Note that a vertex $l \in L_i$ with the desired properties does not necessarily exist, in which case we have $l(r) = -\infty$.
In the case where one of the $l(r) = - \infty$, we can easily construct a path from $l(r-1) \in L_i$ to $v_{i+1}$ as follows:
Since $l(r) = -\infty$, the edge $\{w_{i+1}^R +1, r\}$ is not star-crossing. Hence we can go from $l(r-1)$  
in $L_i$ to the vertex with the smallest index in $L_i$ which is $w_{i+1}^R +1$, take the edge $\{w_{i+1}^R +1, r\}$ to go to $R_i$ and collect the remaining vertices one by one starting from $r$ with increasing index to $v_{i+1}$.

In general we can show that the $l(r)$ have decreasing indices. 

\begin{lemma} \label{lem:l(vi)}
	For $r,r' \in R_i$ with $r<r'$, 
	we have $l(r) \geq l(r')$. 
\end{lemma}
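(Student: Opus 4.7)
The plan is to establish the key inequality $l(r) \ge l(r+1)$ for any consecutive pair $r, r+1 \in R_i$; the full claim for $r < r'$ then follows by telescoping. I would proceed by induction on $r$, with base case $r = v_i$: since $l(v_i) = w_i^L$ while every element of $L_i$ is strictly smaller than $w_i^L$, the inequality $l(v_i) > l(v_i+1)$ holds trivially (regardless of whether $l(v_i+1) \in L_i$ or $l(v_i+1) = -\infty$).

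For the inductive step, assume $l(v_i) \ge l(v_i+1) \ge \cdots \ge l(r)$ and suppose toward a contradiction that $l(r+1) > l(r)$. Set $l_1 := l(r+1) \in L_i$. By the definition of $l(r+1)$ there exists $l' \in L_i$ with $l' < l_1$ and $l' < l(r)$ such that the star edge $\{l_1, v_\star\}$ crosses $\{l', r+1\}$. The crux is to argue that $\{l_1, v_\star\}$ must then also cross $\{l', r\}$. Once this is known, $l'$ certifies $l_1$ as a valid choice in the defining set for $l(r)$: indeed $l' < l_1$ and, by the inductive hypothesis, $l' < l(r) \le l(r-1)$. This gives $l(r) \ge l_1 > l(r)$, the desired contradiction.

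To force the crossing of $\{l_1, v_\star\}$ with $\{l', r\}$, I would consider the triangle $T$ spanned by $\{l', r, r+1\}$. Its side $\{r, r+1\}$ is a good edge (as $r$ and $r+1$ are consecutive in $R_i$), so it is not crossed by any star edge, in particular not by $\{l_1, v_\star\}$. If $\{l_1, v_\star\}$ also did not cross $\{l', r\}$, then it would meet the boundary of $T$ exactly once (at $\{l', r+1\}$), placing $l_1$ and $v_\star$ on opposite sides of $T$. I would then exploit convexity of $D$: at least one of the two sides of $T$ is convex, and one can exhibit a witness of non-convexity on each side using the star edge $\{l(r), v_\star\}$, whose endpoint $l(r)$ lies strictly between $l'$ and $l_1$ in the rotation around $v_\star$ (so that near $v_\star$ the edge $\{l(r), v_\star\}$ lies in the angular sector separated from $l(r)$ by the boundary of~$T$). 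A parallel obstruction comes from the edge $\{l_1, v_\star\}$ itself exiting the side containing $v_\star$ through $\{l', r+1\}$.

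The main obstacle is this convexity-based exclusion, since a clean topological case analysis can be delicate; however, the configuration involves only the bounded vertex set $\{l', l(r), l_1, r, r+1, v_\star\}$ together with, if needed, the witness $l''$ realizing $l(r)$. Following the style used for Lemma~\ref{lem:twobadedges1}, I would therefore close the argument by a SAT check enumerating all convex rotation systems on at most $7$ vertices and verifying that no such subconfiguration occurs; this fits the hereditary character of convex drawings and cleanly yields the contradiction needed to complete the induction.
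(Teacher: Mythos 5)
You explicitly set up the induction on $r$ and correctly identify that the $l' < l(r-1)$ clause in the definition of $l(r)$ is what forces the inductive hypothesis $l(r) \le l(r-1)$ into the argument --- a subtlety the paper glosses over (it invokes transitivity for $r' > r+1$ without noting that the consecutive step itself rests on the claim for smaller indices). Reducing everything to the single geometric fact that $\{l(r+1), v_\star\}$ must also cross $\{l', r\}$ is likewise exactly the right move, and matches what the paper does.

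The geometric step, however, has a genuine gap. Your triangle $T = \{l', r, r+1\}$ does not contain $v_\star$ as a vertex, and the proposed non-convexity witness on the $v_\star$-side is invalid: an edge witnessing non-convexity of a side must have \emph{both} endpoints on (the closure of) that side yet leave it, whereas $\{l_1, v_\star\}$ has $l_1$ on the opposite side of $T$ by your own assumption. In fact the star edges $\{v_\star, l'\}$, $\{v_\star, r\}$, $\{v_\star, r+1\}$ all stay on the $v_\star$-side of $T$ (the first and third by adjacency, the second since by Lemma~\ref{lem:nocrossinginRi} it cannot cross $\{l', r+1\}$), so there is no apparent obstruction to that side being convex, and the intended contradiction does not materialize. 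The paper instead takes the triangle $\{v_\star, l', r+1\}$, which makes things clean: the side containing $l_1$ is immediately non-convex because $\{l_1, v_\star\}$ has both endpoints on the closure of that side yet crosses $\{l', r+1\}$; hence the other side is convex, $r$ lies on it (again Lemma~\ref{lem:nocrossinginRi}), and the edge $\{l', r\}$, confined to that convex side, is forced across the initial segment of $\{l_1, v_\star\}$. No contradiction assumption is needed. Your SAT fallback could in principle still close the step, but as formulated it omits the crucial constraint that $\{l', r+1\}$ is not crossed by $\{r, v_\star\}$ (a consequence of Lemma~\ref{lem:nocrossinginRi} that the restricted sub-drawing inherits); without it, the $7$-vertex configuration you propose to forbid may well be realizable as a convex drawing, and the check would not go through.
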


\begin{figure}[htb]
	\centering
	\includegraphics[scale =0.7]{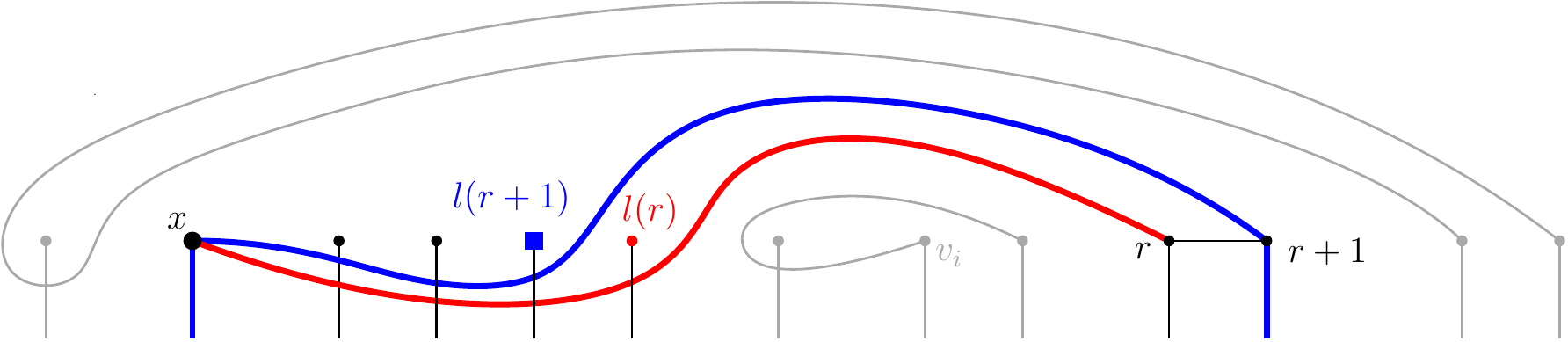}
	\caption{Illustration of the proof of Lemma~\ref{lem:l(vi)}. }
	\label{fig:lefttoright}
\end{figure}

\begin{proof}
	We only consider the case $r' = r+1$. 
	For $r' > r+1$ the claim follows by transitivity. 
	Let $\triangle$ be the triangle induced by the vertices $\{ v_{\star}, r+1, x\} $ such that $\{r+1,x\}$ crosses $\{l(r+1),v_{\star}\}$. See the blue triangle in Figure~\ref{fig:lefttoright}.
	The side of $\triangle$ containing $l(r+1)$ is not convex. 
	Hence the other side has to be convex. 
	Both vertices $x$ and $r$ are contained in this convex side. 
	Hence the edge $\{x,r\}$ is fully contained in $\triangle$ and hence it crosses the star edge $\{ l(r+1),v_{\star}\}$.
	Hence $l(r) \geq l(r+1)$.
\end{proof}

Together with Lemma~\ref{lem:edgesfromlefttoright}
it follows:
\begin{lemma}\label{cor:edgeslefttoright}
	Let $r \in R_i$ with $l(r) \neq -\infty $.
	Then the edge $\{l(r),r\}$ is not star-crossing.
	Moreover, the edge $\{l(r)+1,r\}$ is not star-crossing if $r = v_{i+1}$ or 
	$l(r) \neq l(r-1)$.
\end{lemma}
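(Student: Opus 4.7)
The plan is to derive both claims directly from Lemma~\ref{lem:edgesfromlefttoright}. Since $l(r)\neq-\infty$, the maximum in the definition of $l(r)$ is attained: there exists $l'\in L_i$ with $l'<l(r)$ and $l'<l(r-1)$ such that $\{l',r\}$ crosses the star edge $\{l(r),v_\star\}$. The first step is to verify that $l(r)$ is also the \emph{largest} $z\in L_i$ with $z>l'$ for which $\{l',r\}$ crosses $\{z,v_\star\}$: any strictly larger such $z$, paired with the same $l'$, would satisfy both $l'<z$ and $l'<l(r-1)$, contradicting the maximality in the definition of $l(r)$. Applying Lemma~\ref{lem:edgesfromlefttoright}(a) with $u:=l'$, $v:=r$, $z:=l(r)$ then immediately yields the first statement: $\{l(r),r\}$ is not star-crossing.

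For the ``moreover'' part, Lemma~\ref{lem:edgesfromlefttoright}(b) with the same choice of $u,v,z$ shows that $\{l(r)+1,r\}$ does not cross any star edge $\{x',v_\star\}$ with $x'\in L_i$ and $x'\le l(r)$. Combined with Lemmas~\ref{lem:nocrossingsoutsidelandR} and~\ref{lem:nocrossinginRi}, every star edge potentially crossed by $\{l(r)+1,r\}$ must have its non-$v_\star$ endpoint in $L_i$. So the only gap left to close is a crossing with $\{x',v_\star\}$ for some $x'\in L_i$ with $x'>l(r)$, and this is exactly where the hypothesis $r=v_{i+1}$ or $l(r)\neq l(r-1)$ enters.

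In the case $l(r)\neq l(r-1)$, Lemma~\ref{lem:l(vi)} gives $l(r)<l(r-1)$, and the argument is essentially a copy of the proof of Lemma~\ref{lem:edgesfromlefttoright}(b): assuming a crossing $\{l(r)+1,r\}\cap\{x',v_\star\}$ with $x'>l(r)$, the triangle spanned by $\{l(r)+1,r,v_\star\}$ would have both sides non-convex simultaneously, one side witnessed by the crossing with $\{x',v_\star\}$ and the other witnessed by the crossing $\{l',r\}\cap\{l(r),v_\star\}$ sitting on the opposite side of the good edge $\{l(r),l(r)+1\}$, contradicting convexity of the drawing. In the case $r=v_{i+1}$, the role of the ``opposite-side'' witness is played instead by the structure at the bad edge $b_{i+1}$: the star edge $\{w_{i+1}^R,v_\star\}$ is crossed by $b_{i+1}$ and lies on the non-convex side of the triangle opposite to $\{x',v_\star\}$. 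The main obstacle is identifying the correct convexity witnesses in each of these two subcases; once this is done, the non-existence of a convex side of the ambient triangle is immediate, so the proof ultimately reduces to drawing the right pictures, in the same spirit as Lemmas~\ref{lem:leftrightmostwitness} and~\ref{lem:edgesfromlefttoright}.
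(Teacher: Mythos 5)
Your treatment of the first claim is correct and matches the paper: you verify that the definition of $l(r)$ supplies a crossing partner $l'$ making $l(r)$ the largest $z\in L_i$ with $z>l'$ such that $\{l',r\}$ crosses $\{z,v_\star\}$, so Lemma~\ref{lem:edgesfromlefttoright}(a) applies directly. The reduction of the ``moreover'' part to potential crossings with $\{x',v_\star\}$ for $x'\in L_i$, $x'>l(r)$ via Lemmas~\ref{lem:nocrossingsoutsidelandR}, \ref{lem:nocrossinginRi} and \ref{lem:edgesfromlefttoright}(b) is also sound and is what the paper does.

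The gap is in the final step. You propose to show that the triangle $\{l(r)+1,r,v_\star\}$ has no convex side, using the crossing $\{l(r)+1,r\}\cap\{x',v_\star\}$ as witness on one side and $\{l',r\}\cap\{l(r),v_\star\}$ on the other, ``separated'' by the good edge $\{l(r),l(r)+1\}$. But $\{l(r),l(r)+1\}$ is not an edge of this triangle, so it separates nothing, and in fact all three of $l'$, $l(r)$, $x'$ land on the \emph{same} side of $\{l(r)+1,r,v_\star\}$: near $v_\star$ the star edges to $l'$ and $l(r)$ lie outside the angular wedge bounded by $\{l(r)+1,v_\star\}$ and $\{r,v_\star\}$ (since $l'<l(r)<l(r)+1$), and $\{x',v_\star\}$ begins inside that wedge but crosses $\{l(r)+1,r\}$ and so also ends up on the outer side. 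You therefore get a non-convexity witness for one side only, not a contradiction; the proposal then frankly concedes that ``the main obstacle is identifying the correct convexity witnesses,'' which is exactly what has not been done. The paper instead uses a maximality argument: when $l(r)+1<l(r-1)$, the pair $l'=l(r)+1$, $z=x'$ would satisfy all constraints in the definition of $l(r)$ and force $l(r)\ge x'>l(r)$, a contradiction; the subcase $l(r)+1=l(r-1)$ (where this pair is disqualified because $l'$ is not $<l(r-1)$) and the case $r=v_{i+1}$ (where maximality of $l(v_{i+1})$ or Lemma~\ref{lem:leftrightmostwitness} for the boundary value $l(v_{i+1})+1=w_i^L$ is invoked) are handled separately. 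Your proposal does not contain this maximality idea, and the convexity replacement you sketch does not close the case on its own.
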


\begin{proof}
	For $r \in R_i$ with $l(r) \neq -\infty$, the edge $\{l(r),r\}$ is not star-crossing by Lemma~\ref{lem:edgesfromlefttoright}(\ref{lem:edgesfromlefttoright:item1}).
	
	Let us now focus on the edge $\{l(r)+1,r\}$.
	If $r = v_{i+1}$, the edge $\{l(v_{i+1})+1,v_{i+1}\}$ with $l(v_{i+1}) \neq \infty$ is not star-crossing. 
	In the case $l(v_{i+1})+1 = w_i^L$, this follows from Lemma~\ref{lem:leftrightmostwitness}.
	Otherwise, $l(v_{i+1})+1 < w_i^L$, and it follows by the maximality of $l(v_{i+1})$.
	
	To show the last part of the statement, assume $r \in R_i \backslash \{v_{i+1}\}$ and $l(r)\neq l(r-1)$.
	By Lemma~\ref{lem:edgesfromlefttoright} we know $l(r) < l(r-1)$.
	By Lemma~\ref{lem:nocrossinginRi}, the edge $\{l(r)+1,r\}$ does not cross star edges $\{x,v_{\star}\}$ with $x\in R_i$ 
	and by Lemma~\ref{lem:edgesfromlefttoright}(\ref{lem:edgesfromlefttoright:item2}) we know that the edge $\{l(r)+1,r\}$ does not cross star edges $\{x,v_{\star}\}$ with $w_{i+1}^R < x \leq l(r)$.
	Hence it remains to show that there are no crossings with $\{x,v_{\star}\}$ for $l(r)+1 \leq x \leq w_i^L$.
	For this we consider two cases. 
	
	First let $l(r)+1 < l(r-1)$.
	When determining the vertex $l(r)$, we look at all vertices $l'< l(r-1)$. 
	Hence the edge $\{l(r)+1,r\}$ does not cross star edges $\{x,v_{\star}\}$ with $l(r)+1 \leq x < w_i$ by the maximality of $l(r)$. 
	
	In the remaining case, it is $l(r)+1 = l(r-1)$.
	In this case the edge cannot cross star edges since it cannot cross the adjacent edge witnessing the value of $l(r)$ and it cannot cross the edge witnessing the value of $l(r-1)$ twice.
	For an illustration, see Figure~\ref{fig:lefttoright2}.
\end{proof}

\begin{figure}[htb]
	\centering
	\includegraphics[scale =0.7]{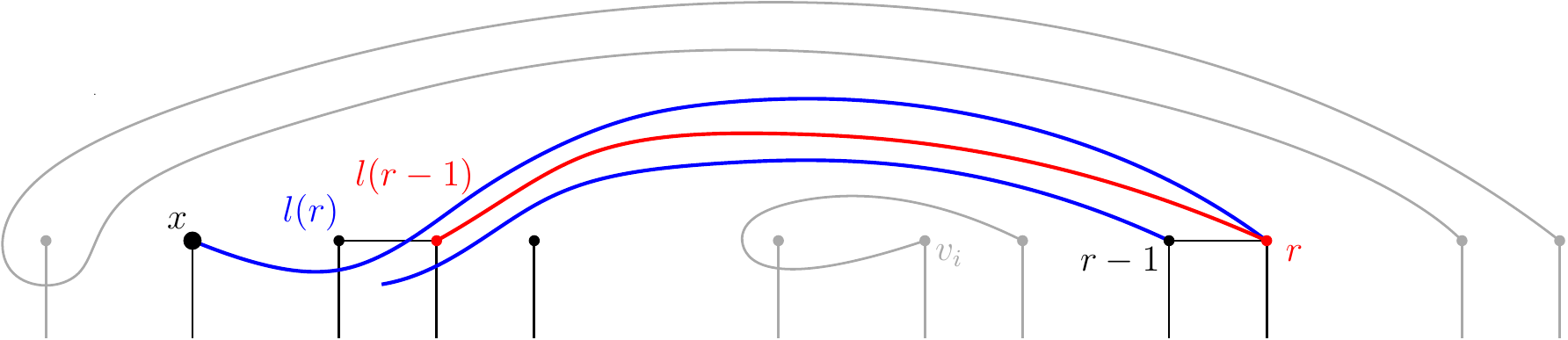}
	\caption{Illustration of the second part of~\autoref{cor:edgeslefttoright}.}
	\label{fig:lefttoright2}
\end{figure}

Similar to the case with $L_i = \emptyset$,
we construct a Hamiltonian cycle starting with the edge $\{v_1,v_{\star}\}$.
For $i = 1, \ldots , m-1$ we successively add paths from $v_i$ to $v_{i+1}$.
In the end we use good edges to traverse the remaining vertices
and close the cycle with the edge $\{v_m+1,v_{\star}\}$.

The path between $v_i$ and $v_{i+1}$ is
divided into two subpaths. 
The first part is a path from $v_i$ to $v_i+1$ and the second part from $v_i+1$ to $v_{i+1}$.
Starting from $v_i$, we go to the  unvisited vertex in $L_{i-1}$ with the largest index. 
Note that for $i =1$, we set $L_0 = [w_1^R+1,v_1-1]$.
Then we visit them one by one with decreasing index until we reach the witness $w_i^R$. We proceed visiting vertices one by one with decreasing index until we reach $w_i^L$.
Now we compute $l(v_{i}+1)$.
In the case $l(v_{i}+1) = - \infty$, we proceed as described before and continue collecting all vertices in $L_i$ one by one with decreasing index and then use the edge $\{w_{i+1}^R+1,v_{i}+1\} $ to go to $R_i$.
In this case the second part of the path only consists of collecting the remaining vertices in $R_i$ with increasing index until we reach $v_{i+1}$.

If $l(v_{i+1}) > - \infty$, we proceed from $w_i^L$ to $l(v_{i+1})+1$ one by one with decreasing index. 
By \autoref{cor:edgeslefttoright} the edge $\{l(v_{i}+1)+1, v_i +1\}$ is not star crossing. We use this edge to get to $v_i +1$.
Now we continue with the second part of the path to reach $v_{i+1}$.
Let $r< v_{i+1}$ be the current vertex in $R_i$. 
We compute $l(r+1)$.
If $l(r+1) = l(r)$, we use the good edge $\{r,r+1\}$ and continue with $r+1$ as current vertex in $R_i$.
If $l(r+1) = - \infty$, we use the edge 
$\{r,l(r)\}$ which is not star crossing by \autoref{cor:edgeslefttoright}.
From $l(r)$ we continue collecting the remaining vertices with smaller index in $L_i$ until $w_{i+1}^R+1$, using the edge $\{w_{i+1}^R+1, r+1\}$ and continue collecting the remaining vertices in $R_i$ with increasing index. 

In the remaining case, i.e., $- \infty < l(r+1) <l(r)$, we use the edge $\{r,l(r)\}$ to go to~$L_i$, continue collecting the vertices in $L_i$ one by one with decreasing index until we reach $l(r+1)+1$.
Then we use the edge $\{l(r+1), r+1\}$ which is not star crossing by \autoref{cor:edgeslefttoright}.
Now we proceed with $r+1$ as the current vertex in $R_i$.

For the last bad edge $b_m$, we construct the path to $v_{\star}$ as follows:
Use the edge $v_m$ to the largest vertex in $L_{m-1}$ which was not used yet. 
Then proceed with decreasing index to reach vertex $1$, and then add the good edge $\{1,v_m+1\}$. 
Finally, we only have to add the star edge $\{v_m+1,v_{\star}\}$ to close the path to a cycle.

For a formal description of the algorithm, see Algorithm \ref{alg:planHC}.

\begin{algorithm}[!htb] 
	\caption{Construction of a plane Hamiltonian cycle for convex drawings}\label{alg:planHC}
	\begin{algorithmic}[1]
		\State Start the cycle with edge $\{v_{\star}, v_1\}$
		\State Set $u_1 = v_1 -1$
		\For{$i = 1, \ldots, m-1$}
		\State Add edge $\{v_i, u_i\}$ 
		%\Comment{L.~\ref{cor:edgeslefttoright}} 
		\label{alg:planHC:line4}
		\State Set $r = v_i+1$
		\While{$r \leq v_{i+1}$}
		%\If{$l(r)$ does not exist}
		\If{$l(r)=-\infty$} 
		\State Add edges $\{u_i,u_i-1\}, \ldots, \{w_{i+1}^R+2,w_{i+1}^R+1\}$ 
		%\Comment{good edges}
		\label{alg:planHC:line8good}
		\State Add edge $\{w_{i+1}^R+1,r\}$
		\label{alg:planHC:line9}
		\State Add edges $\{r,r+1\}, \ldots, \{v_{i+1}-1,v_{i+1}\}$ 
		%\Comment{good edges}
		\label{alg:planHC:line10good}
		\State Set $u_{i+1} = w_{i+1}^R$
		\State Set $r = +\infty$
		\ElsIf{$l(r) \in L_i$} %Otherwise if $l(r)$ does exist:
		\State Add edges 
		$\{u_i,u_i-1\}, \ldots, \{l(r) +2,l(r) +1\}$
		%$\{u_i -k,u_i-k-1\}$ until $u_i-k-1 = l(r) +1$ 
		%\Comment{good edges}
		\label{alg:planHC:line12good}
		\State Add edge $\{l(r) +1,r\}$ 
		%\Comment{L. \ref{cor:edgeslefttoright}} 
		\label{alg:planHC:line13}
		\State Let $r' \in R_i$ be the largest index such that $l(r)  = l(r+1) = \ldots = l(r')$ 
		\State Add edges $\{r,r+1\},\{r+1,r+2\}, \ldots,\{r'-1,r'\}$
		%\Comment{good edges}
		\label{alg:planHC:line15good}
		\If{$r' < v_{i+1}$}
		\State Add edge $\{r',l(r')\}$ 
		%\Comment{L.~\ref{cor:edgeslefttoright}}
		\label{alg:planHC:line19}
		\State Set $u_i = l(r')$
		\ElsIf{$r' = v_{i+1}$}
		\State Set $u_{i+1} = l(v_{i+1})$
		\EndIf
		\State Set $r = r'+1$
		\EndIf
		\EndWhile
		\EndFor
		\State Add edge $\{v_m, u_m\}$  
		%\Comment{L.~\ref{cor:edgeslefttoright}} 
		\label{alg:planHC:line25}
		\State Add edges  $\{u_m,u_m-1\}, \{u_m-1,u_m-2\}, \ldots, \{2,1\}$ 
		%\Comment{good edges}
		\label{alg:planHC:line26good}
		\State Add edge $\{1,v_m+1\}$ 
		%\Comment{good edges}  
		\label{alg:planHC:line27good} 
		\State Add star edge $\{v_m+1,v_{\star}\}$
	\end{algorithmic}

\end{algorithm}

%%%%%%Correctness

\begin{lemma}
	\label{lemma:cycle_noncrossing}
	Algorithm~\ref{alg:planHC} finds a plane Hamiltonian cycle 
	whose edges are not star-crossing in~$O(n^2)$ time.
\end{lemma}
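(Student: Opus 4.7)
The plan is to verify the four properties in turn: \emph{(a)}~the algorithm produces a closed cycle on $[n]$; \emph{(b)}~it visits every vertex exactly once; \emph{(c)}~no edge added is star-crossing; \emph{(d)}~no two edges of the cycle cross one another; and finally~\emph{(e)}~the runtime is $O(n^2)$.

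For~\emph{(a)} and~\emph{(b)}, I would argue by induction on~$i$ that after the $i$-th iteration of the outer \texttt{for}-loop, the constructed path has visited exactly the vertices in $\{v_\star\}\cup L_0\cup\{v_1\}\cup\bigl(\bigcup_{j\le i}R_{j-1}^\dagger\cup L_j^\dagger\bigr)$ for appropriately defined ``already-used'' subsets, ends at vertex $v_{i+1}$, and that $u_{i+1}$ is the index in $L_i$ where the next iteration will have to resume (i.e.\ the largest unused vertex in $L_i$, or $w_{i+1}^R$ if $l(v_{i+1})=-\infty$). The bookkeeping reduces to the following local claim: inside the \texttt{while}-loop, when a block $\{l(r)+1,\ldots,l(r-1)\}\subset L_i$ is traversed via lines~\ref{alg:planHC:line12good}--\ref{alg:planHC:line13}, together with the consecutive $R_i$-vertices $\{r,r+1,\ldots,r'\}$ processed via line~\ref{alg:planHC:line15good}, the current vertex indeed moves from $u_i\in L_i$ down to $l(r')\in L_i$ (or to $v_{i+1}$ itself) while visiting each intermediate vertex precisely once; this uses Lemma~\ref{lem:l(vi)} to guarantee $l(r)\ge l(r+1)\ge\ldots\ge l(r')$ so that the $L_i$-ranges traversed in different inner iterations are disjoint. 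The closing lines~\ref{alg:planHC:line25}--\ref{alg:planHC:line27good} then mop up the remaining vertices $\{1,\ldots,w_m^L\}=\{1,\ldots,u_m\}$ and vertex $v_m+1=n-1$, and close the cycle through~$v_\star$.

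For~\emph{(c)}, every edge added to the cycle falls into exactly one of the following categories: \emph{(i)}~the two star edges $\{v_\star,v_1\}$ and $\{v_\star,v_m+1\}$; \emph{(ii)}~good edges $\{x,x+1\}$ (lines~\ref{alg:planHC:line8good}, \ref{alg:planHC:line10good}, \ref{alg:planHC:line12good}, \ref{alg:planHC:line15good}, \ref{alg:planHC:line26good}, \ref{alg:planHC:line27good}); \emph{(iii)}~edges of the form $\{v_i,u_i\}$ at lines~\ref{alg:planHC:line4} and~\ref{alg:planHC:line25}; \emph{(iv)}~edges of the form $\{w_{i+1}^R+1,r\}$ at line~\ref{alg:planHC:line9}; \emph{(v)}~edges $\{l(r)+1,r\}$ and $\{r',l(r')\}$ at lines~\ref{alg:planHC:line13} and~\ref{alg:planHC:line19}. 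Categories \emph{(i)} and \emph{(ii)} are handled by definition. Category~\emph{(iii)} either equals $\{w_i^R,v_i\}$ (use Lemma~\ref{lem:leftrightmostwitness}) or equals $\{l(v_i),v_i\}$ produced by the previous iteration and hence is covered by Corollary~\ref{cor:edgeslefttoright}. Category~\emph{(iv)} arises only when $l(r)=-\infty$, so by the maximality defining $l$ no star edge $\{x,v_\star\}$ with $x\in L_i$ crosses it; combined with Lemmas~\ref{lem:nocrossingsoutsidelandR} and~\ref{lem:nocrossinginRi} this shows the edge is not star-crossing. Category~\emph{(v)} is exactly the content of Corollary~\ref{cor:edgeslefttoright}, whose two clauses correspond to the two lines.

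For~\emph{(d)}, I would apply Observation~\ref{lem:HCplane}: any two independent edges of the cycle belong to $E^\circ$ by~\emph{(c)}, so it suffices to show that the endpoints of any two independent cycle edges appear in \emph{different} cyclic arcs around~$v_\star$. This is where the cyclic-order design of the algorithm pays off: each non-star edge added by the algorithm lies entirely within one of the blocks $L_0,R_0,L_1,R_1,\ldots,L_{m-1},R_{m-1}$, or is one of the bridging edges $\{l(r)+1,r\}$, $\{r',l(r')\}$, $\{w_{i+1}^R+1,r\}$, $\{w_i^R,v_i\}$ that jump from an $L_i$-block to the adjacent $R_i$-block (never skipping a block). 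A short case analysis then shows that the endpoints of any two such independent edges are separated around $v_\star$ by a star edge of the cycle, so Observation~\ref{lem:HCplane} applies. The main obstacle, which I expect to be the most delicate step, is verifying this non-interleaving carefully for the bridging edges that mix vertices of $L_i$ and $R_i$, because those are precisely the places where the cyclic order around $v_\star$ does not trivially match the path order.

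Finally, for~\emph{(e)}, the outer loop runs $O(m)=O(n)$ times and the inner \texttt{while}-loop adds each vertex of the cycle at most once, so the total number of edges added is $O(n)$. Each value $l(r)$ is computable in $O(n)$ time by scanning the star-edge crossings along edges from $L_i$ to~$r$ (which can be read off the rotation system), and is needed $O(n)$ times overall. Hence the total cost is $O(n^2)$.
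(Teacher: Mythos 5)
Your proposal follows essentially the same decomposition as the paper's proof: verify Hamiltonicity by the left/right progression without skipping, verify non-star-crossing by the same line-by-line case analysis using Lemmas~\ref{lem:leftrightmostwitness}, \ref{lem:nocrossingsoutsidelandR}, \ref{lem:nocrossinginRi} and Corollary~\ref{cor:edgeslefttoright}, and reduce planarity to Observation~\ref{lem:HCplane} via the non-interleaving of endpoints around~$v_\star$; the paper also asserts this non-interleaving by construction without a fully explicit case analysis, so you are at comparable rigor on the step you flag as delicate.

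Two small imprecisions worth fixing. First, in your category~\emph{(iii)} you should separate out $i=1$: there $u_1=v_1-1$, so $\{v_1,u_1\}$ is a good edge (covered by your category~\emph{(ii)}), not an instance of Lemma~\ref{lem:leftrightmostwitness} or Corollary~\ref{cor:edgeslefttoright}. Second, the claim that ``each value $l(r)$ is computable in $O(n)$ time'' is not obviously correct as stated: for a fixed $r$, naively checking all candidate $l\in L_i$ and, for each, all $l'<\min(l,l(r-1))$, is $\Theta(|L_i|^2)$. The $O(n^2)$ total does hold, but the argument the paper uses is the monotonicity of $l(\cdot)$ from Lemma~\ref{lem:l(vi)}: a two-pointer sweep over $(l,r)$ considers only a linear number of candidate pairs in total, and each pair is tested in $O(n)$ time. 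You should rephrase the runtime step as an amortized bound over all $r$ in a block rather than a per-$r$ bound.
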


\begin{proof}
	In each step of Algorithm~\ref{alg:planHC}  
	we either progress to the left or to the right side without skipping vertices.
	Hence, every vertex is visited exactly once
	and the resulting cycle is Hamiltonian.
	
	Moreover, since all the edges of the constructed cycle do not cross star edges and we construct the edges in such a way that the vertices of two independent edges $e =\{u,v\}$ and $e' = \{u',v'\}$ appear cyclically in the order $u,u',v',v$, they fulfill the condition of~\autoref{lem:HCplane}.
	Hence the constructed Hamiltonian cycle is plane. 
	
	To see that no star-crossing edges are added,
	we analyze which edges are added by the algorithm:
	\begin{itemize}
		\item 
		All edges that are added
		in lines~\ref{alg:planHC:line8good}, 
		\ref{alg:planHC:line10good},
		\ref{alg:planHC:line12good},
		\ref{alg:planHC:line15good},
		\ref{alg:planHC:line26good}, and
		\ref{alg:planHC:line27good}
		are between two consecutive vertices where both vertices lie in $L_i$ or $R_i$ 
		and hence they are good edges, i.e., 
		they do not cross star edges. 
		\item In lines~\ref{alg:planHC:line4} and~\ref{alg:planHC:line25} we add $\{v_i,u_i\}$:
		For $i =1$, $\{v_1,u_1\}$ is a good edge because $u_1 = v_1-1$. 
		For $i>1$, we have either $u_i = w_i^R$ or $u_i = l(v_i)$.
		In the first case the edge is not star crossing by~\autoref{lem:leftrightmostwitness} and~\autoref{cor:edgeslefttoright} implies that in the second case, the edge does not cross star edges.
		\item In line~\ref{alg:planHC:line9} we add $\{w_{i+1}^R + 1,r\}$:
		If we add an edge of this form, we are in the case where $l(r) = -\infty$.
		This implies that the edge $\{w_{i+1}^R + 1,r\}$ does not cross star edges. 
		\item In line~\ref{alg:planHC:line13}  we add $\{l(r) + 1,r\} $:
		In this case $l(r) \in [n-1]$ and by~\autoref{cor:edgeslefttoright} this edge does not cross star edges. 
		\item In line~\ref{alg:planHC:line19}  we add $\{r', l(r')\}$:
		By~\autoref{cor:edgeslefttoright} this edge does not cross star edges. 
	\end{itemize}
	This shows the correctness of the Algorithm. 
	
	\paragraph{Running time:}
	In the first preprocessing step, we compute the bad edges. This is possible in $O(n^2)$ time:
	for each of the $n-1$ edges $\{i,i+1\}$
	there are $n-3$ potential witnesses which have to be tested.
	This directly determines the value $m$ and 
	all values $v_i, w_i^L$, $w_i^R$.
	
	In the second preprocessing step, we compute the values of $l(r)$ for every $r$.
	We claim that this can be done in $O(n^2)$ time.
	To determine $l(r)$ for $r \in R_i$, we check if $l = l(r)$ for every $l \in L_i$.
	Recall that $l(r+1) \leq l(r)$ due to~\autoref{lem:l(vi)}.
	Thus, once $l(r)$ is determined, we only need to check for vertices $l \in L_i$ where $l \leq l(r)$ to determine $l(r+1)$.
	We start with the smallest index $r$ from $R_1$ and with the largest index $l$ from $L_1$, and iteratively
	either decrement $l$ by one if $l \neq l(r)$, or we find $l = l(r)$ and increment $r$ by one.
	In total, we consider a linear number of candidate pairs $(l,r)$ to determine all values $l(r)$.
	For each such pair $(l,r)$, 
	we can test in linear time whether $l=l(r)$ holds by checking whether the edge $\{l,v_\star\}$ crosses some edge $\{l',r\}$ with $l' < l$. Altogether, we can compute all values of $l(r)$ in $O(n^2)$ time.
	
	Next observe that
	in each loop at least one edge is added to the cycle. 
	Since in total there are exactly $n$ edges added to the cycle, 
	there are at most $n$ iterations of the loop which takes $O(n)$ time.
	%and thus $O(n)$ steps performed. 
	%Each step/line of the algorithm can be performed in polynomial time.
	%To determine the value $l(r)$ we need to test for all possible pairs $l,l'$ whether  $\{l,v_{\star}\}$ crosses $\{l',r\}$.
	%This can be done in $O(n^2)$ time. 
	%The other steps can be performed in linear time.
	Hence, the total running time of Algorithm~\ref{alg:planHC} is $O(n^2)$.
\end{proof}

This completes the proof of Theorem~\ref{theorem:convex_HC}.

\goodbreak

\section{Hamiltonian paths with prescribed edges}
\label{app:HP_prescribed_edges}

In this section we derive Theorem~\ref{thm:HP_prescribed_edges} from Theorem~\ref{theorem:convex_HC}. 

Let $e = \{u,v\}$ be an edge in a convex drawing~$D$.
By Theorem~\ref{theorem:convex_HC} 
there exists a plane Hamiltonian subgraph
containing all edges adjacent to~$u$. 
Assume that the Hamiltonian cycle traverses $u,x_1,\ldots,x_{n-1}$ in this order with $v=x_i$ for some index $i$.
If $i=1$ or $i=n-1$, the Hamiltonian cycle contains the edge $\{u,v\}$ and therefore fulfills the desired properties.
Otherwise 
we start at $v=x_{i-1}$, 
traverse $x_{i-2},\ldots,x_{1}$,
take the star edges $\{x_1,u\}$ and $e = \{u,x_i\}$,
and finally traverse $x_{i+1},\ldots,x_{n-1}$.
Since all traversed edges are part of 
the original plane Hamiltonian subgraph,
the proposed Hamiltonian path is also plane.

\section{Further Description and Technical Details}
\label{sec:technicaldetails}

We provide the source code and many examples as supplemental data \cite{supplemental_data}.

The python program \verb|rotsys_find.py| (which plays a central role in this article) comes with a mandatory parameter \verb|n| for the number of vertices and several optional parameters.
The program creates a SAT instance
and then uses the Python interfaces \verb|pycosat| \cite{pycosat}  and  PySAT \cite{pysat} to run the
SAT solver PicoSAT, version 965, \cite{Biere08} and CaDiCaL, version~1.0.3 \cite{Biere2019}, respectively.
As default, we use the solver CaDiCaL because it is more efficient, and whenever all solutions need to be enumerated, we use PicoSAT.  
The output of the program is then either a rotation system with the desired properties 
(which is obtained from parsing the variable assignment of a solution to the instance) 
or, if the instance is unsatisfiable, it prints that no solution exists.
In the case the instance has various solutions,
one can use \verb|-a| to enumerate all solutions.

In the case the instance is unsatisfiable, 
we can ask the solver for a certificate and use the independent proof checking tool DRAT-trim~\cite{WetzlerHeuleHunt14} to verify its correctness.
More specifically,
we can use the parameter \verb|-c2f| to export the instance to a specified file, which has the DIMACS CNF file format.
By running
\begin{verbatim}
    cadical -q --unsat instance.cnf instance.proof
\end{verbatim}
CaDiCaL exports a DRAT proof which certificates the unsatisfiability.
The correctness of the DRAT proof can then be verified with 
\begin{verbatim}
    drat-trim instance.cnf instance.proof -t 999999
\end{verbatim}
Note that the parameter \verb|-t| increases the time limit for DRAT-trim,
which is quite low by default.

In the following, 
we will only describe how to use the program \verb|rotsys_find.py|
to show certain statements and
omit the explicit commands to certifying unsatisfiability.

\subsection{Proof of Proposition~\ref{prop:PRS_different_ATgraphs}}
\label{sec:proof:prop:PRS_different_ATgraphs}

To enumerate all $\obstructionFour$-free pre-rotation systems on 5 vertices
and test that any two
that are not obtained via reflection
have distinct pairs of crossing edges,
run the following command:
\begin{verbatim}
    python rotsys_find.py 5 -v5 -a -nat -cpc
\end{verbatim}
Here the parameters have the following purposes:
\begin{itemize}
	
	\item 
	The first parameter specifies the number of vertices $n=5$.
	
	\item
	By default, the program excludes the three obstruction $\obstructionFour$, $\obstructionconvexFiveA$ and $\obstructionconvexFiveB$.
	The parameter \verb|-v5| specifies that 
	the obstructions $\obstructionconvexFiveA$ and $\obstructionconvexFiveB$ are not excluded
	(only $\obstructionFour$ is excluded).
	Hence, the solutions are all $\obstructionFour$-free pre-rotation systems.
	
	\item
	The parameter \verb|-a| specifies that 
	all solutions should be enumerated.
	
	\item
	By default, the program only searches for (pre-)rotation systems with natural labeling.
	The parameter \verb|-nat| specifies that 
	we also search for solutions which are not naturally labeled.
	
	\item
	The parameter \verb|-cpc| specifies that 
	the pairs of crossing edges should be checked:
	any two rotation systems (not obtained via reflection)
	have distinct crossing pairs. 
\end{itemize}

\subsection{Alternative proof of Proposition~\ref{proposition:rotsys_classification_n6}}
\label{alternative_proof}

The SAT frameworks from Section~\ref{sec:encoding}
and from Section~\ref{sec:drawing}
can be combined to verify the correctness of Proposition~\ref{proposition:rotsys_classification_n6},
which classifies drawable pre-rotation systems on 4, 5, and 6 vertices from \'Abrego et al.\ \cite{AbregoAFHOORSV2015}.

%\begin{proof}
As a first step, we use
the SAT framework to enumerate the
3 non-isomorphic pre-rotation systems on 4 elements:
\begin{verbatim}
    python rotsys_find.py -a -l -v4 4
\end{verbatim}
Here the parameters are as follows:
\begin{itemize}
	\item \verb|-v4| do not exclude the obstruction $\obstructionFour$ as a subconfiguration;
	\item \verb|-a| enumerate all solutions;
	\item \verb|-l| only enumerate lexicographically minimal (pre-)rotation systems.
\end{itemize}
It is well known (a simple case distinction shows)
that $K_4$ has exactly two non-isomorphic drawings with 0 and 1 crossings, respectively,
which are illustrated in Figure~\ref{fig:rs_n4_valid}.
Therefore exactly the two corresponding rotation systems are realizable.
The third pre-rotation system, which is the obstruction~$\obstructionFour$ depicted in Figure~\ref{fig:rotsys_obstructions}, is not drawable.

Next, we use the framework to enumerate 
7 non-isomorphic pre-rotation systems on 5 elements,
which do not contain~$\obstructionFour$:
\begin{verbatim}
    python rotsys_find.py -a -l -v5 5 -r2f all5.json0
\end{verbatim}
Here the parameter \verb|-v5| is used to not exclude the obstructions $\obstructionFiveA$ and $\obstructionFiveB$ as subconfigurations,
and \verb|-r2f| is used to export all solutions to the specified file. 
Then we use the drawing framework to verify 
that five of them are drawable (see Figure~\ref{fig:computer_vis_5})
and that the
two configurations to~$\obstructionFiveA$ and~$\obstructionFiveB$ (see Figure~\ref{fig:rotsys_obstructions}) are non-drawable:
\begin{verbatim}
    sage rotsys_draw.sage all5.json0
\end{verbatim}
We ran the program \verb|rotsys_draw.sage| 
with SageMath version 9.4 \cite{sagemath_website}.
The first parameter specifies the input file,
where each line encodes a rotation system.
To visualize the computed planarizations 
%(as outlined in Section~\ref{sec:discussion}) % \manfred{das zeichenprogramm könnte man in full version erklären}
and create image files,
one can use the optional \verb|-v| parameter.

The case distinction needed to show that $\obstructionFiveA$
and~$\obstructionFiveB$ are non-drawable can also be done by hand.

Finally, 
we use the framework to enumerate 
102 non-isomorphic pre-rotation systems on 6 elements,
which do not contain~$\obstructionFour$, $\obstructionFiveA$, and~$\obstructionFiveB$:
\begin{verbatim}
    python rotsys_find.py -a -l 6 -r2f all6.json0
\end{verbatim}
Then we use the drawing framework to verify 
that all of them are drawable:
\begin{verbatim}
    sage rotsys_draw.sage all6.json0
\end{verbatim}
This completes the proof of Proposition~\ref{proposition:rotsys_classification_n6}.
%\end{proof}

\subsection{Plane Hamiltonian substructures}
\label{plane_substructures}

\paragraph{Rafla's conjecture (Conjecture~\ref{conjecture:rafla})}

To verify that Conjecture~\ref{conjecture:rafla} holds for $n \le 10$,  
we used the Python program with the following parameters:
\begin{verbatim}
    for n in {3..10}; do 
        python rotsys_find.py -HC $n
    done
\end{verbatim}
The parameter \verb| -HC| asserts that there exists no plane Hamiltonian cycle.
Using the solver CaDiCaL,
it took about 6 CPU days to show unsatisfiability.
Furthermore, we created DRAT certificates. 
The certificate for $n=10$ is about 78GB and 
the verification with DRAT-trim took about 6 CPU days.
The resources used for $n \le 9$ are negligible.

\paragraph{Plane Hamiltonian subdrawing of $2n-3$ edges (Conjecture~\ref{conjecture:rafla_2n_plus_3})}

To verify that Conjecture~\ref{conjecture:rafla_2n_plus_3} holds
for $n \le 8$, 
we ran
\begin{verbatim}
    for n in {3..10}; do 
        python rotsys_find.py -HC+ $n 
    done
\end{verbatim}
Here the parameter \verb| -HC+| asserts that there is no plane Hamiltonian subdrawing on $2n-3$ edges. 
In the so-created instance, solutions correspond to rotation systems on $[n]$ which do not contain any plane Hamiltonian subdrawing of $2n-3$ edges.
Using the solver CaDiCaL,
it took about 3 CPU days to show unsatisfiability.
Moreover, the verification with DRAT-trim took about 3 CPU days.

\paragraph{Extension of plane Hamiltonian cycles (Conjecture~\ref{conjecture:extend_HC_to_2n_3})}

To verify that Conjecture~\ref{conjecture:extend_HC_to_2n_3} holds for $n\le 10$,
run
\begin{verbatim}
    for n in {3..10}; do 
        python rotsys_find.py -HC++ -nat -c $n 
    done
\end{verbatim}
Here the parameter \verb| -HC++| asserts that there is a plane Hamiltonian cycle $C$
that cannot be extended to a plane Hamiltonian subdrawing on $2n-3$ edges. 
Note that, since we assume that $C$ traverses $1,2,\ldots,n$ in this order,
we cannot assume natural labeling. 
Therefore, we have to use the \verb|-nat| parameter 
which allows the framework to also consider rotation systems 
which are not natural.  
Using the solver CaDiCaL,
it took about 2 CPU days to show unsatisfiability.
Moreover, the verification with DRAT-trim took about 2 CPU days.

\paragraph{Hamiltonian cycles avoiding a matching (Conjecture~\ref{conjecture:hoffmanntoth_convex})}

To verify that Conjecture~\ref{conjecture:hoffmanntoth_convex} holds for $n \le 11$
run
\begin{verbatim}
    for n in {4..11}; do
        for (( k = 1; 2*k <= n; k++ )); do
            python rotsys_find.py $n -c -nat -HT+ $k
        done
    done
\end{verbatim}
With the parameter \verb|-HT+ k| we
assume towards a contradiction that there exists a convex drawing of $K_n$ 
which has a plane matching $M=\{1,2\},\ldots,\{2k-1,2k\}$
such that for every plane Hamiltonian cycle $C$ crosses edges of $M$,
that is, $C \cup M$ (not necessarily disjoint union) is non-plane.
Note that for $k \ge 2$ we cannot assume without loss of generality that the rotation system is natural, that is, $1$ sees $2,3,\ldots,n$ in this order.
To optimize the encoding and to speed up the computations, 
observe that
we can assume without loss of generality that 
\begin{itemize}
	\item
	$1$ sees $2,u,v$ for every edge $\{u,v\} \in M$ with $u<v$,
	\item $1$ sees $2,u,v'$ for every two edges  $\{u,v\},\{u',u'\} \in M$ with $u<v$, $u'<v'$ and $u<u'$,
	\item $1$ sees $2,x,y$ for every $x,y \in [n] \setminus V(M)$ with $x<y$. 
\end{itemize}

Using the solver CaDiCaL,
the computations for $n \le 11$ took about 12 CPU days. Moreover, the certification with DRAT-trim took about 12 CPU days.
%Also the case $n=12,k=1$ can be verified within 42 cpu days.

\subsection{Uncrossed edges}
\label{uncrossed}

To verify that every rotation system on $n=7$ contains an uncrossed edge, use the following command: 
\begin{verbatim}
    for n in {4,5,6,7}; do
        python rotsys_find.py -aec $n 
    done
\end{verbatim}

To find a rotation system on $n=8$ which does not contain an uncrossed edge, use the following command:
\begin{verbatim}
    python rotsys_find.py -aec 8 
\end{verbatim}

To restrict to convex (resp.\ $h$-convex) drawings one needs to add the additional parameter \verb| -c | (resp.\ \verb|-hc|).
We provide $h$-convex rotation systems that verify Conjecture~\ref{conjecture:hconvex_aec} for $n \le 21$ 
as supplemental files in the folder 
\verb|examples/all_edges_crossing/|.

\iffalse %\manfred{comment in for final version?}
\subsection{Quasiplanarity and crossing families}
\label{crossingfamily}

To verify that 
every drawing of $K_{11}$ contains a 3-crossing family, run
\begin{verbatim}
    python rotsys_find.py -crf 3 11
\end{verbatim}
The computations took about 3 CPU hours.
To find an $h$-convex drawings of $K_{10}$ without 3-crossing families, run
\begin{verbatim}
    python rotsys_find.py -crf 3 -hc 10
\end{verbatim}

To find an $h$-convex drawing of $K_{16}$ without 4-crossing families, run
\begin{verbatim}
    python rotsys_find.py -hc -crf 4 16
\end{verbatim}
Since this computation took about 1 CPU day,
we provide this example
as supplemental files in the folder \verb|examples/crossing_families/|.
\fi

\subsection{Empty triangles}
\label{emptytriangles}

To search for a drawing with at most $k$ empty triangles,
we used \linebreak \verb|pysat.card.CardEnc.atmost| 
from PySAT \cite{pysat}.
This method creates constraints for a CNF
that assert that among a given set of variables 
(here we use the $\triangle$ variables)
at most $k$ are $true$.

To verify that $2n-4$ is a tight lower bound on the number of triangles for $n \le 9$, use:
\begin{verbatim}
    for n in {4..9}; do 
        python rotsys_find.py -etupp $((2*$n-5)) $n
    done
\end{verbatim}
Here the parameter \verb|-etupp k| asserts that the number of empty triangles is at most~$k$. Since all instances are unsatisfiable for $k=2n-5$,
it follows $\triangle \ge 2n-4$.

We provide the examples as supplemental files in 
\verb|examples/empty_triangles/|.

\end{document}